\newtheorem{theorem}{Theorem}
\newtheorem{observation}{Observation}
\newtheorem{lemma}{Lemma}
\newtheorem{corollary}{Corollary}
\newtheorem{prop}{Proposition}
\newcommand{\mc}{\mathcal}
\newcommand{\mb}{\mathbf}
\newcommand{\id}{\mathbb{I}}
\newcommand{\comments}[1]{}
\DeclareMathOperator{\supp}{\mathrm{supp}}
\newcommand{\red}[1]{{\color{black} #1}}
\newcommand{\orange}[1]{{\color{orange}#1}}
\newcommand{\identity}[0]{{\mathbb{I}}}
\newcommand{\zyr}[1]{\textcolor{red}{(ZY: #1)}}
\def\>{\rangle}
\def\<{\langle}
\begin{document}


\title{Auxiliary-Free Replica Shadows: Efficient Estimation of Multiple Nonlinear Quantum Properties}
\author{Qing Liu~\orcidlink{0000-0002-1576-1975}}
\email{These authors contributed equally to this work.}
\affiliation{Key Laboratory for Information Science of Electromagnetic Waves (Ministry of Education), Fudan University, Shanghai 200433, China}

\author{Zihao Li~\orcidlink{0000-0002-2766-7521}}
\email{These authors contributed equally to this work.}
\affiliation{State Key Laboratory of Surface Physics and Department of Physics, Fudan University, Shanghai, 200433, China}
\affiliation{Institute for Nanoelectronic Devices and Quantum Computing, Fudan University, Shanghai, 200433, China}
\affiliation{Center for Field Theory and Particle Physics, Fudan University, Shanghai, 200433, China}

\author{Xiao Yuan~\orcidlink{0000-0003-0205-6545}}
\email{xiaoyuan@pku.edu.cn}
\affiliation{Center on Frontiers of Computing Studies, Peking University, Beijing 100871, China}
\affiliation{School of Computer Science, Peking University, Beijing 100871, China}

\author{Huangjun Zhu~\orcidlink{0000-0001-7257-0764}}
\email{zhuhuangjun@fudan.edu.cn	}
\affiliation{State Key Laboratory of Surface Physics and Department of Physics, Fudan University, Shanghai, 200433, China}
\affiliation{Institute for Nanoelectronic Devices and Quantum Computing, Fudan University, Shanghai, 200433,
China}
\affiliation{Center for Field Theory and Particle Physics, Fudan University, Shanghai, 200433, China}

\author{You Zhou~\orcidlink{0000-0003-0886-077X}}
\email{you\_zhou@fudan.edu.cn}
\affiliation{Key Laboratory for Information Science of Electromagnetic Waves (Ministry of Education), Fudan University, Shanghai 200433, China}

\date{\today}

\begin{abstract}

Efficient estimation of nonlinear properties is a significant yet challenging task from quantum information processing to many-body physics.
Current methodologies often suffer from an exponential sampling cost or require auxiliary qubits and deep quantum circuits. To address these limitations, we propose an efficient auxiliary-free replica shadow (AFRS) framework, which leverages the power of the joint entangling operation on a few input replicas while integrating the mindset of shadow estimation. We rigorously prove that AFRS can offer exponential improvements in estimation accuracy compared with the conventional shadow method, and facilitate the simultaneous estimation of various nonlinear properties, unlike the destructive swap test. Additionally, we introduce an advanced local-AFRS variant tailored to estimating local observables with constant-depth quantum circuits, significantly simplifying the experimental implementation.
Our work paves the way for efficient and practical estimation of nonlinear properties on near-term quantum devices.
\end{abstract}
\maketitle


\let\oldaddcontentsline\addcontentsline
\renewcommand{\addcontentsline}[3]{}


\textit{Introduction---}The estimation of nonlinear properties $\tr(O\rho^t)$, such as purity, entanglement entropy \cite{Ekert2002Direct,Islam2015Measuring,johri2017entanglement}, and higher-order moments \cite{Todd2004Polynomial,Pichler2016Spectrum,elben2020mixedstate,singlezhou}, is crucial to applications in quantum foundations \cite{buhrman2001quantum,Horodecki2002Method}, quantum many-body physics \cite{Kaufmanen2016tanglement,Vermersch2019Scrambling,Elben2020topological,garcia2021quantum}, quantum error mitigation~\cite{endo2021hybrid,Cai2023QEM}, and algorithm design \cite{cotler2019cooling,Huggins2021Virtual,Koczor2021Exponential,Yamamoto2021metrology}.
Two primary methodologies have emerged for estimating nonlinear properties: randomized measurement approaches \cite{elben2023randomized}, particularly shadow estimation \cite{aaronson2019shadow,huang2020predicting}, and the generalized swap test \cite{Ekert2002Direct,Todd2004Polynomial}. 
Shadow estimation provides an unbiased estimator $\hat{\rho}$ by performing randomized projective measurements on a single-copy state $\rho$; 
however, this technique often entails heavy sampling and post-processing budgets \cite{elben2020mixedstate,singlezhou}, which scale exponentially with the system size for nonlinear estimation. Conversely, the generalized swap test employs entangling measurements among multiple copies of $\rho$, offering greater efficiency in sampling time and post-processing \cite{cotler2019cooling,zhou2024hybrid,huang2022Science}, but typically necessitates auxiliary qubits and circuit depth that scale with the qubit number $n$.  

Certain strategies may reduce the circuit depth, however, they usually require $\mc{O}(n)$ auxiliary qubits initialized in a {sophisticated} entangled state \cite{quek2024multivariate}, complicating practical implementation on near-term quantum hardware \cite{patel2016quantum,gao2019entanglement,zhang2019modular,gan2020hybrid,peng2024experimental}.
Meanwhile, while destructive measurements without auxiliary qubits \cite{cotler2019cooling,Huggins2021Virtual,subacsi2019entanglement,yirka2021qubit} have been realized in cold-atom \cite{Islam2015Measuring,Kaufmanen2016tanglement,cotler2019cooling,bluvstein2024logical} and superconducting-qubit systems \cite{huang2022Science,Brien2022purification}, they are believed to yield limited information compared with full swap tests \cite{Huggins2021Virtual,volkoff2022ancilla,zhou2024hybrid}. Other developments are still restricted to few-qubit observables which also require case-by-case tailored circuits \cite{Huggins2021Virtual}.
Consequently, efficiently realizing general measurements of complex nonlinear functions of quantum states remains an intriguing open challenge.

Here, we address the problem by proposing a general framework called auxiliary-free replica shadow (AFRS) estimation \cite{ancillafree}, as illustrated in \cref{fig:noancilla}.
AFRS inherits the mindset of shadow estimation and elaborately integrates it with fine-tuned entangling measurements and efficient classical post-processings. 
We demonstrate that AFRS achieves an exponential advantage in sampling cost over the original (single-copy) shadow method \cite{huang2020predicting,Seif2022Shadow} for nonlinear property estimation.
We further reduce the required circuit depth to a constant in key scenarios for estimating general local observables, such as many-body Hamiltonians, thereby enabling its practical implementation. Moreover, thanks to the shadow nature, the randomized measurement data can be multiplexed, making the sampling time scale logarithmically with the total number of observables, i.e., an exponential advantage compared with previous approaches \cite{Huggins2021Virtual,Koczor2021Exponential}.

\begin{figure}[t]
\begin{center}
\includegraphics[width=\linewidth]{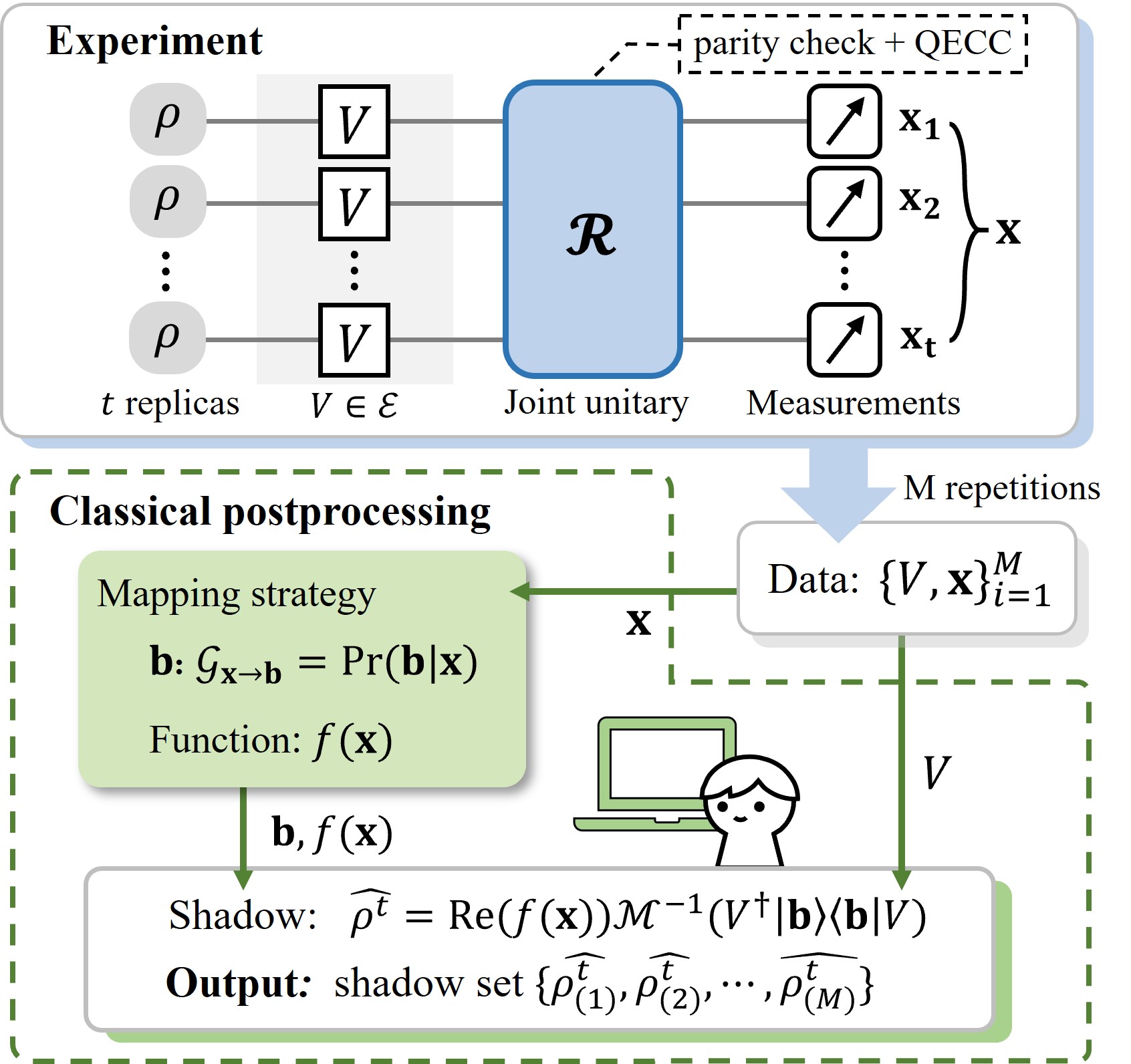}
\caption{\justifying{
An illustration of the AFRS framework. Experimental stage (up): $t$ replicas of $\rho$ are first evolved by the same random unitary $V$, then entangled via a fine-tuned joint operation $\mc{R}$, and finally measured in the computational basis to generate outcome $\mb{x}$. 
Post-processing stage (down): The mapping strategy $\mc{G}_{\mb{x}\to\mb{b}}$ transforms $\mb{x}$ to $\mb{b}$, which along with the unitary $V$ and function $f(\mb{x})$ forms an estimator $\widehat{\rho^t}$ of $\rho^t$. 
Repeating the whole procedure $M$ times generates the shadow set $\big\{\widehat{\rho^t_{(i)}}\big\}^M_{i=1}$ for estimating nonlinear properties.
}}
\label{fig:noancilla}
\end{center}
\end{figure}

\textit{Original shadow estimation---}We begin by briefly reviewing the original shadow (OS) estimation protocol \cite{huang2020predicting}.
For a quantum state $\rho$ on a $d$-dimensional Hilbert space $\mc{H}_d$, OS proceeds by first applying a random unitary $V$ on $\rho$, which is sampled from a suitable ensemble $\mc{E}$.
Typical choices for $\mc{E}$ include the $n$-qubit Cliﬀord group $\mc{E}_{\mathrm{Cl}}$ and the $n$-fold tensor product of the single-qubit Clifford group $\mc{E}_{\mathrm{LCl}}$ \cite{huang2020predicting,Hu2022Hamiltonian,hu2022Locally,ohliger2013efficient}. 
After applying the unitary $V$, the evolved state $V(\rho)\equiv V\rho V^\dagger$ is then measured in the computational basis $\left\{\ket{\mb{b}}\right\}_{\mb{b}\in[d]}$ with $[d]:=\{1,2,\ldots,d\}$, generating outcome $\mb{b}$ with probability $\Pr(\mb{b}|V)=\bra{\mb{b}}V(\rho)\ket{\mb{b}}$. 
From each measurement, a classical snapshot of the state is constructed through the estimator $\hat{\rho}=\mathcal{M}^{-1}( 
V^{\dag} \ketbra{\mb{b}}{\mb{b}} V)$, where $\mc{M}^{-1}$ is the inverse of the measurement channel, whose specific form depends on $\mc{E}$. 
This estimator equals the original state $\rho$ in expectation, i.e., $\mathbb{E}[\hat{\rho}]=\rho$.
Repeating such ``measurement + post-process" procedure $M$ times generates a shadow set $\{\hat{\rho}_{(i)}\}_{i=1}^M$, which enables efficient estimation of observable expectation value through $\tr(O\hat{\rho})$, particularly for local observables and fidelities. 
The estimation variance is upper bounded by the squared shadow norm, denoted by $\|O\|^2_{\mathrm{sh},\mc{E}}$. \red{Such norm is closely related to the Hilbert-Schmidt norm for ensemble $\mc{E}_{\mathrm{Cl}}$, while scales exponentially in the locality of target observables for ensemble $\mc{E}_{\mathrm{LCl}}$ \cite{huang2020predicting}.}
For a challenging nonlinear function like $\tr(O\rho^t)$, one can compute $\tr(O\hat{\rho}_1\hat{\rho}_2\cdots\hat{\rho}_t )$ using $t$ distinct snapshots from the shadow set. 
However, the variance of such a nonlinear estimator grows exponentially with the system size $n$ \cite{huang2020predicting,elben2020mixedstate}, making the sampling cost prohibitive even for medium-scale systems.

\textit{Framework of AFRS---}To address the above limitation of OS, we propose the framework of the AFRS protocol, which enables efficient estimation of nonlinear quantum properties through two coordinated stages. As illustrated in \cref{fig:noancilla}, in the experimental stage, we prepare $t$ identical replicas of the quantum state $\rho$ and apply the same random rotation $V\in\mc{E}$ to all replicas simultaneously. These states are then entangled through a specially designed joint operation $\mc{R}$, followed by computational measurements to produce an outcome string $\mb{x}=\{\mb{x}_1 \mb{x}_2\cdots \mb{x}_t\}\in\red{[d]^t}$.
In the classical post-processing stage, each measurement outcome $\mb{x}$ is converted to an effective result $\mb{b}\in[d]$ through a mapping strategy $\mc{G}_{\mb{x}\to\mb{b}}$, while a correlation function $f(\mb{x})$ is calculated in parallel. These elements allow us to construct an unbiased estimator $\widehat{\rho^t}$ of the state power $\rho^t$, with complete mathematical details provided later in \cref{ob:wholeCprob} and \cref{th:sh_est}.
Repeating this procedure $M$ times generates a shadow set $\big\{\widehat{\rho^t_{(i)}}\big\}^M_{i=1}$ for estimation of $\tr(O\rho^t)$ with significant higher sampling efficiency surpassing conventional shadow methods.
Notably, our framework can estimate various nonlinear functions beyond state powers, like state overlap and fidelity \cite{elben2020cross,Zhenhuan2022correlation}, purity and higher-order moments \cite{van2012Measuring,Brydges2019Probing,ketterer2019characterizing,Yu2021Optimal,liu2022detecting}, \red{R\'enyi entropies \cite{acharya2016estimating,wang2023quantum,Quek2024multivariatetrace}}, quantum Fisher information \cite{rath2021Fisher,Yu2021Fisher,zhang2025krylov}, out-of-time-ordered correlators \cite{Vermersch2019Scrambling,garcia2021quantum}, topological invariants \cite{Elben2020topological,Cian2020Chern}, etc. In particular, the $t$ input replicas can be chosen differently \cite{zhou2024hybrid} for specific applications. For simplicity, we primarily demonstrate the estimation of monomial function $\tr(O\rho^t)$ here, with complete derivations and generalizations provided in Supplemental Material (SM) \cite{supplementary}.

The AFRS protocol centers on generating effective measurement statistics that behave as if measuring $\rho^t$ directly, despite $\rho^t$ not being a physical quantum state. This is achieved through careful design of the entangling operation $\mc{R}$, which correlates information across $t$ replicas of $\rho$. The measurement statistics are generated according to the corresponding \emph{pseudo} `conditional probability'
\begin{equation}\label{eq:prob}
\Pr(\mb{b}|V)=\bra{\mb{b}}V(\rho^t)\ket{\mb{b}} =\tr\left[S_t Q_{\mb{b}} V(\rho)^{\otimes t}\right],
\end{equation}
where $S_t$ is the clockwise shift operation that cyclically permutes the $t$ replicas, acting as $S_t\ket{\mb{x}_1 \mb{x}_2\cdots \mb{x}_t}=\ket{\mb{x}_2 \mb{x}_3\cdots \mb{x}_1}$; and $Q_\mathbf{b} = t^{-1} \sum_{i=1}^t  \ketbra{\mb{b}}{\mb{b}}_i \otimes \id_d^{\otimes (t-1)}$ is a symmetric observable that is diagonal in the computational basis.
It is not hard to check that $S_t$ and $Q_\mathbf{b}$ commute and thus share a common eigenbasis $\{\ket{\psi_\mb{x}}\}$, leading to a crucial decomposition:
\begin{observation}\label{ob:wholeCprob}
    The pseudo `conditional probability' in \cref{eq:prob} can be written as
    \begin{equation}
        \Pr(\mb{b}|V) = \sum_\red{\mb{x}\in[d]^t}f(\mb{x})\Pr(\mb{b}|\mb{x}) \Pr(\mb{x}|V),
    \end{equation}
    where $f(\mb{x}):=\bra{\psi_\mb{x}}S_t\ket{\psi_\mb{x}}$ is a function depending on the measurement result $\mb{x}$, while $\Pr(\mb{b}|\mb{x}):=\bra{\psi_\mb{x}}Q_\mb{b}\ket{\psi_\mb{x}}$ and $\Pr(\mb{x}|V):=\bra{\psi_\mb{x}}V(\rho)^{\otimes t}\ket{\psi_\mb{x}}$ represent classical and quantum \emph{legal} conditional probabilities, respectively.
\end{observation}

It is straightforward to verify that $\Pr(\mb{b}|\mb{x})\geq 0$ and $\sum_{\mb{b}}\Pr(\mb{b}|\mb{x})=1$. We refer to this conditional probability as the \emph{mapping strategy}, denoted as $\mathcal{G}_{\mb{x}\to \mb{b}}:=\Pr(\mb{b}|\mb{x})$, since it is generated by classical random assignments.
Meanwhile, the second probability $\Pr(\mb{x}|V)$ can be generated in a quantum manner, i.e., by measuring $V(\rho)^{\otimes t}$ in the basis $\{\ket{\psi_\mb{x}}\}$. 
This measurement can thus be physically implemented by applying a suitable unitary transformation $\mc{R}$ on $\mc{H}_d^{\otimes t}$, followed by the computational basis measurement, where $\mc{R}$ satisfies $\mc{R} \ket{\psi_{\mb{x}}} = \ket{\mb{x}}$ for all $\mb{x}$. Consequently, the overall quantum circuit shown in \cref{fig:noancilla} is obtained.
Further elaboration and explicit details are provided in END MATTER.

The AFRS protocol provides unbiased estimation of $\rho^t$ with favorable estimation variance scaling.

\begin{figure}[t]
\centering
\includegraphics[width=\linewidth]{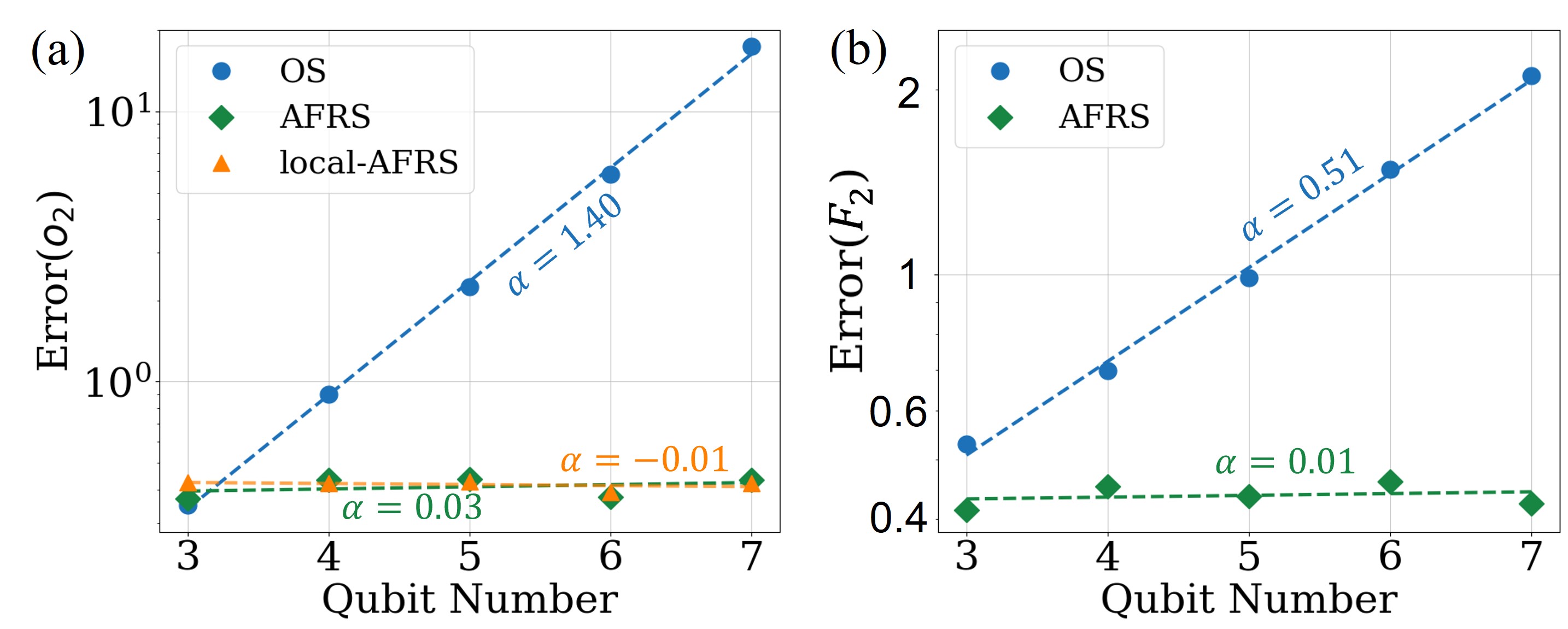}
\caption{\justifying{Scaling of estimation error with respect to the qubit number for OS, AFRS and local-AFRS protocols. The processed state is a noisy $n$-qubit GHZ state, $\rho = 0.7\ket{\text{GHZ}}\bra{\text{GHZ}}+0.3\identity_d/d$. 
In (a), 
$V\in \mc{E}_{\mathrm{LCl}}$ is used to estimate $o_2 =\tr(O\rho^2)$, with $O=Z_1Z_2$ being a local Pauli observable.
In (b), 
$V\in\mc{E}_{\mathrm{Cl}}$ is used to estimate the fidelity $F_2 =\bra{\text{GHZ}}\rho^2\ket{\text{GHZ}}$ with $O = \ket{\text{GHZ}}\bra{\text{GHZ}}$. 
The estimation errors for different protocols are compared under the same sample number $N$ of $\rho$, 
typically with  $N\ll d$ to show the asymptotic scaling. 
The parameter $\alpha$ denotes the fitting value of the slope with $\mathrm{Error}\sim \mathcal{O}(d^\alpha)$.
}} 
\label{fig:num_pauli_cliff}
\end{figure}

\begin{theorem}\label{th:sh_est}
    For a single implementation of the circuit in \cref{fig:noancilla} (i.e., single-shot, $M=1$), the estimator
    \begin{equation}\label{eq:shadow2}
        \widehat{\rho^t}:= \mathrm{Re}(f(\mb{x}))\cdot\mathcal{M}^{-1}\left( V^{\dag} \ketbra{\mb{b}}{\mb{b}} V \right)
    \end{equation}
    reproduces $\rho^t$ in expectation, i.e., $\mathbb{E}_{\{V,\mb{x},\mb{b}\}}\widehat{\rho^t} = \rho^t$. Here, \red{$\mb{x}\in [d]^t$} is the actual measurement outcome of the quantum circuit; function $f(\mb{x})$ and mapping result $\mb{b}$ derive from \cref{ob:wholeCprob}; 
    \red{the inverse channel $\mathcal{M}^{-1}$ depends only on the system dimension and the chosen unitary ensemble $\mathcal{E}$.}

The nonlinear property $\tr(O\rho^t)$ can be  estimated by $\widehat{o_t}={\rm tr}(O\widehat{\rho^t})$,
whose variance satisfies
    \begin{equation}\label{eq:varm}
        \mathrm{Var}(\widehat{o_t}) \leq \left\|O_0\right\|_{\mathrm{sh},\mathcal{E}}^2+\|O\|_{\infty}^2, 
    \end{equation}
    where $O_0=O-\operatorname{tr}(O) \mathbb{I}_d/d$ is the traceless part of $O$, and $\|\cdot\|_{\mathrm{sh},\mathcal{E}}$ is the shadow norm depending on $\mathcal{E}$ \red{\cite{huang2020predicting}}.  
\end{theorem}

\cref{th:sh_est} indicates that AFRS achieves exponential improvements in sampling efficiency for nonlinear estimation compared with OS. 
In particular, the variance of estimating $\tr(O\rho^t)$ with AFRS is almost the same as the variance of estimating linear property $\tr(O\rho)$ with OS, both with the squared shadow norm $\| O_0 \|^2_{\mathrm{sh},\mc{E}}$ serving as the dominant term. 
Here, we ignore the second term $\|O\|_\infty^2$ in \cref{eq:varm}, which is usually a constant. 
By comparison, when estimating $\tr(O\rho^t)$ using OS, one has to compute $\text{tr}\big(O_t\bigotimes_{j=1}^t \hat{\rho}_j\big)$
with $O_t:=(O\otimes \identity_d^{t-1})S_t$ being a global and high-rank operator; as a result, the estimation variance bounded by $\|O_t\|_{\mathrm{sh}, \mc{E}}^2$ scales exponentially with the qubit number $n$.
\red{This advantage is demonstrated numerically for two key regimes $\mc{E}_{\mathrm{LCl}}$ and $\mc{E}_{\mathrm{Cl}}$ in \cref{fig:num_pauli_cliff}, with further discussion and applications, including purity estimation, Fisher information bounds and R\'{e}nyi entropies, presented in the SM \cite{supplementary}.}

The practicality of AFRS protocol is enhanced by three key features: (\romannumeral 1) Following the mindset of shadow estimation, measurement data can be reused to estimate a collection of observables $\{O_i\}$ with the sampling time only scales logarithmically with the number of observables, such advantage is analyzed later; (\romannumeral 2) In the implementation of AFRS, no auxiliary qubits are needed; and (\romannumeral 3) The entangling operation $\mc{R}$ requires at most $n+1$ layers of elementary quantum (Clifford) gates for the important case of $t=2$. We systematically compile $\mc{R}$ (with the subsequent projective measurement) in END MATTER using the techniques of quantum error correction code (QECC) \cite{gottesman2002introduction} and mid-circuit measurement \cite{Gambetta2021Dynamic,DeCross2023Reuse}, which may be of independent interest. 
Note that the global nature of the joint unitary may present implementation challenges for some architectures with limited connectivity. In the following, for the sake of practical realizations on near-term hardware, we develop an advanced shallow-circuit variant that further reduces these requirements while preserving the protocol's advantages.

\textit{Local-AFRS---}Consider the observable $O$ of interest being local and acting nontrivially on the subsystem $A$, say, $O=O_{A}\otimes \id_{\bar{A}}$. 
We propose the local-AFRS protocol for such local observables, which significantly reduces the difficulty of implementing the entangling unitary $\mc{R}$ while maintaining the same performance as AFRS.
In particular, we show that one can estimate $o_t=\tr(O\rho^t)$, or more generally, the reduced operator of $\rho^t$ on $A$, say $\rho^t_A:=\tr_{\bar{A}}(\rho^t)$ with a greatly simplified quantum circuit, even to a constant depth when the qubit number in $A$ is $\mc{O}(1)$.

Our quantum circuit for the experimental stage of local-AFRS is illustrated in \cref{fig:local_afrs_sketch} (a), which is a variant of the AFRS circuit in \cref{fig:noancilla}. 
With respect to the locality property, we set the random unitary in the product form $V=V_A\otimes V_{\bar{A}}$ across the subsystems $A$ and $\bar{A}$ for all $t$ replicas, where $V_A$ is chosen from a unitary ensemble on $A$ denoted as $\mathcal{E}^{(A)}$. 
In addition, the entangling operation before the measurement can also take the product form $\mc{R}_A\otimes \mc{R}_{\bar{A}}$. 
Finally, we perform the computational basis measurement on the $t$ replicas, and obtain outcome bit-strings $\mb{x}^{A}$ and $\mb{x}^{\bar{A}}$ for 
subsystems $A$ and $\bar{A}$, respectively. 
In the classical post-processing stage of local-AFRS, we can use the unitary $V_A$ and these bit-strings to construct an estimator $\widehat{\rho^t_A}$, which equals $\tr_{\bar{A}}(\rho^t)$ in expectation. 
See \red{\cref{eq:local_AFRS_estimator} in END MATTER} for more 
details of this construction. 
The performance of our local-AFRS protocol is summarized in the following theorem.

\begin{figure}[t]
\centering
\includegraphics[width=\linewidth]{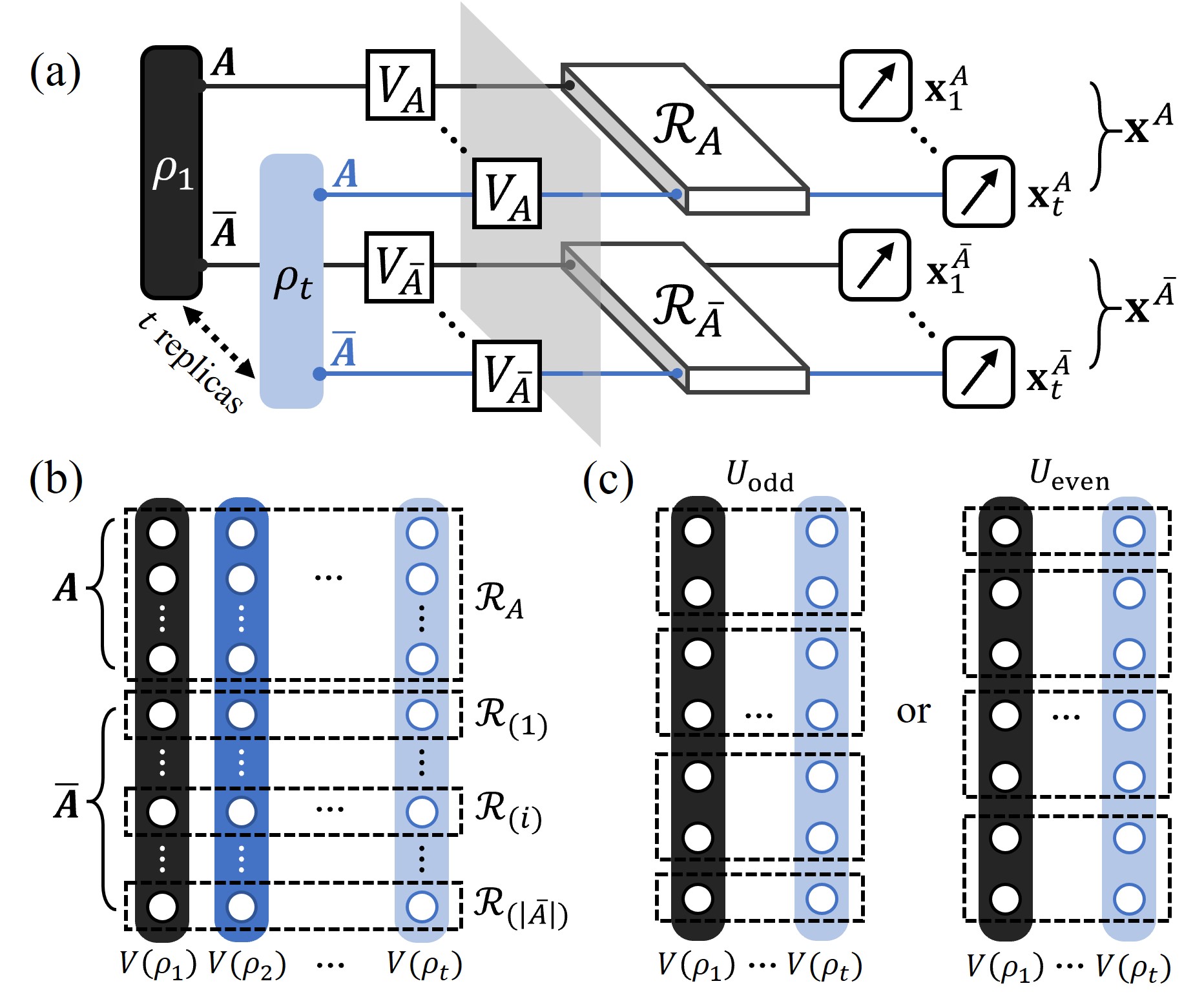}
\caption{\justifying{
\red{(a) Quantum circuit of local-AFRS. 
(b) Left-view of panel (a) along the gray cross-section. Each circle denotes a qubit from one replica of $V(\rho)$. The dashed boxed indicate subsequent joint unitaries. For a local observable $O=O_A\otimes \mathbb{I}_{\bar{A}}$, the block $\mc{R}_{\bar{A}}$ can be substituted by the qubit-wise form $\bigotimes_{i\in \bar{A}} \mc{R}_{(i)}$. (c) Example of entangling unitary for odd $n$: $U_{\text{odd}} = \bigotimes_{j=1}^{\lfloor n/2 \rfloor}\mc{R}_{2j-1,2j} \otimes \mc{R}_n$ and $U_{\text{even}} = \mc{R}_1 \otimes\bigotimes_{j=1}^{{\lfloor n/2 \rfloor}}\mc{R}_{2j,2j+1}$.}
}}
\label{fig:local_afrs_sketch}
\end{figure}

\begin{theorem}\label{th:localAFRSinformal}
The observable $O=O_A\otimes \id_{\bar{A}}$ on $\rho^t$ can be estimated by $\widehat{o_t}=\tr\!\big(\widehat{\rho^t_A}\,O_A\big)$,
whose variance satisfies
\begin{equation}\label{eq:varmLocal}
\begin{aligned}
\mathrm{Var}(\widehat{o_t}) \leq \left\|O_{A,0}\right\|_{\mathrm{sh},\mathcal{E}^{(A)}}^2 + \|O\|_{\infty} ^2,
\end{aligned}
\end{equation}
where 
$O_{A,0}$ is the traceless part of $O_A$, and 
$\|\cdot\|_{\mathrm{sh},\mathcal{E}^{(A)}}$ is the shadow norm depending on the ensemble $\mathcal{E}^{(A)}$. 
\end{theorem}

\cref{th:sh_est} can be seen as a special case of \cref{th:localAFRSinformal}, with the subsystem being the whole system $A=[n]$.
If focusing on the estimation of $O_A$, one can set the random
unitary on $\bar{A}$ to identity, say, $V=V_A\otimes \id_{\bar{A}}$. In addition, the entangling operation 
$\mc{R}_{\bar{A}}$ on $\bar{A}$ can be replaced by a tensor product of 
qubit-wise operations $\mc{R}_{(i)}$, 
as shown in \cref{fig:local_afrs_sketch} (b).
Our conclusion on the variance performance do not change under these modifications. 
In the usual case where $|A|$ is a constant, $\mc{R}_{A}$ and 
$\mc{R}_{(i)}$ can all be realized by constant-depth quantum circuits.
Consequently, local-AFRS significantly reduces the circuit depth of AFRS from $\mc{O}(n)$ to $\mc{O}(1)$, but keeps the estimation efficiency almost the same, as indicated by \cref{eq:varmLocal}.
In particular, for the important case of $t=2$, the circuit depth of local-AFRS is only $\mc{O}(|A|)$ (mainly used to realize $\mc{R}_A$). 
We demonstrate the estimation performance of local-AFRS in \cref{fig:num_pauli_cliff}~(a), which again reveals its exponential advantage over OS.

\textit{Estimating a collection of observables---}As an application of local-AFRS, we consider the important problem of estimating a collection of local observables $\mc{W}=\{O_l\}_{l=1}^L$, which may have different local supports.
Divide the collection as $\mc{W}=\cup_{k=1}^K \mc{W}_k$, where each subset $\mc{W}_k$ satisfies 
\begin{equation}\label{eq:subsetO}
\text{$\supp(O_l) \cap \supp(O_{l'})=\emptyset$}\   \text{or\, $\supp(O_l)\subseteq \supp(O_{l'})$},
\end{equation}
$\forall O_l,O_{l'}\in\mc{W}_k$.
We can use local-AFRS to estimate the observables in each 
subset $\mc{W}_k$ in the following way:
first apply a random unitary $V\in \mc{E}_{\mathrm{LCl}}$ for each replica, and then perform the entangling unitary $U_k=\bigotimes_j \mc{R}_{A_j}$ before the final projective measurement, 
such that each $O_l\in \mc{W}_k$ is supported in one of the subsystems $\{A_j\}_j$.
Note that one needs to apply $K$ different $U_k$ in the experimental stage one at a time. In the case of $t=2$, the depth of each circuit is at most $\max_l \supp(O_l)+2$. 
Consider the terms in the Hamiltonian of a transverse-field Ising model for example, 
one has $K=2$ with 
$\mc{W}_{\text{odd/even}}=\{Z_jZ_{j+1},X_j\}_{\text{odd/even}\ j}$, thus the corresponding $U_k$ are illustrated in \cref{fig:local_afrs_sketch} (c).

The performance of our estimation is shown as follows, which inherits the multiplex nature of shadow estimation and maintains the sampling advantage of AFRS.

\begin{corollary}\label{co:localAFRS}
Consider a collection of local observables $\mc{W}=\{O_l\}_{l=1}^L=\cup_{k=1}^K \mc{W}_k$ that satisfies \cref{eq:subsetO}.
In order to estimate $o_t^{(l)}=\tr(O_l\rho^t)$ within error $\epsilon$ for all $O_l\in \mc{W}$ with success probability at least $1-\delta$, 
it suffices to run the local-AFRS circuit the following number of times, 
\begin{equation}\label{eq:collM}
\begin{aligned}
M
=\frac{68 K}{\epsilon^2}\max_{O_l\in \mc{W}}\mathrm{Var}\Big(\widehat{o^{(l)}_t}\Big)\log\left(\frac{2L}{\delta}\right),
\end{aligned}
\end{equation}
where $\widehat{o^{(l)}_t}$ is the single-shot estimator for $o_t^{(l)}$. 
\end{corollary}

In particular, when all $O_l$ have constant supports, the variance 
of each single-shot estimator can be bounded by some constant according to \cref{eq:varmLocal}. 
In this case, 
the number of required runs is only $M=\mc{O}(K\log(L))$, which scales linearly in $K$ due to the implementation of $K$ distinct entangling unitaries $U_k$, and scales logarithmically in $L$ due to the shadow nature of local-AFRS. 
\red{Overall, the sample complexity scales linearly with the number of replicas $t$, and the prefactor depends only on the structure of the observables, such as their locality or Frobenius norm.}

Here are some remarks on various advantages of local-AFRS compared with previous methods.
First, when $K=\mc{O} (1)$ and $L=\mc{O} (n)$, the overall sampling time $M$ scales as $\mc{O} (\log(n))$ for local-AFRS, whereas previous methods require $M=\mc{O}(n)$ \cite{Huggins2021Virtual,Koczor2021Exponential}. 
Second, compared with Ref.~\cite{Koczor2021Exponential}, local-AFRS does not require auxiliary qubits. Moreover, the generalized controlled-swap operation in Ref.~\cite{Koczor2021Exponential} requires a circuit depth of $\mc{O}(n)$, while local-AFRS here maintains a constant-depth circuit, making it more feasible for current and near-future platforms.
Third, compared with Ref.~\cite{hakoshima2023localized}, our circuit-depth reduction in local-AFRS does not rely on any assumption about the unknown state $\rho$. 
Fourth, the auxiliary-free protocol based on Bell-like measurements proposed in Refs.~\cite{Huggins2021Virtual,Brien2022purification} primarily applies to few-qubit observables. Extending this approach to general $k$-local observables, as considered in our setting, remains both a challenging task and an open problem.
Last, compared with OS \cite{huang2020predicting,Seif2022Shadow}, our protocol harnesses the power of joint unitary and shows an exponential advantage on sampling time.
We clarify such advantages of local-AFRS more in-depth with the quantum error mitigation and virtual distillation task \cite{Huggins2021Virtual,Koczor2021Exponential,Brien2022purification} in END MATTER.

\textit{Discussions---}In this work, we proposed the AFRS framework for efficiently and practically estimating nonlinear properties of quantum states. It would serve as a significant component in various quantum information processing tasks, and inspire further developments in quantum algorithm design. 
Future work could extend AFRS to fermion and boson systems in quantum simulation 
\cite{gandhari2024precision,becker2024classical,Zhao2021Fermionic}, as well as to quantum channels and dynamics \cite{helsen2023shadow,Kunjummen2021process,liu2024virtual}. AFRS could also be further integrated with other shadow schemes, such as conjugate shadow \cite{king2024exponential}, multi-shot shadow \cite{helsen2023thrifty,zhou2023performance} (see some preliminary results in 
SM~\cite{supplementary}), shallow shadow \cite{hu2023classical,bertoni2024shallow,arienzo2023closed,Ippoliti2023Depth} and derandomization \cite{huang2021efficient,hadfield2022measurements,wu2023overlapped,gresch2023guaranteed}, to enhance its ability and performance in different scenarios.  Notably, AFRS could play a significant role in principal component analysis \cite{lloyd2014quantum,grier2024principal}. 
Finally, along with other mitigation techniques \cite{Huo2022Dual,cai2021quantum}, we expect that AFRS would enhance quantum error mitigation and improve the robustness of the shadow measurement scheme itself \cite{Chen2021Robust,koh2022classical,brieger2023stability,wu2024error,Jnane2024QEMSh}.


\textit{Acknowledgements---}Q.L.$\&$Y.Z. acknowledge 
from the Innovation Program for Quantum Science and Technology Grant Nos.~2024ZD0301900 and 2021ZD0302000, the National Natural Science Foundation of China (NSFC) Grant No.~12205048 and 12575012, the Shanghai Science and Technology Innovation Action Plan Grant No.~24LZ1400200, Shanghai Pilot Program for Basic Research - Fudan University 21TQ1400100 (25TQ003), and the start-up funding of Fudan University.
Z.L.$\&$H.Z. acknowledge the support from
Shanghai Science and Technology Innovation Action Plan (Grant No.~24LZ1400200), Shanghai Municipal Science and Technology Major Project (Grant No.~2019SHZDZX01), National Key Research and Development Program of China (Grant No.~2022YFA1404204), and NSFC (Grant No.~92165109).
X.Y.~acknowledges the support from NSFC (Grant No.~12175003 and No.~12361161602), NSAF (Grant No.~U2330201), the Innovation Program for Quantum Science and Technology (Grant No.~2023ZD0300200). 

\bibliography{BibAFRS}
\let\addcontentsline\oldaddcontentsline


\section{END MATTER}
\textit{The matrix form of $\mc{R}$---}Here, we give the matrix form of the unitary $\mc{R}$ expressed in the computational basis $\{|\mathbf{x}\>\}$. 
We first introduce some notation. 
Two strings $\mathbf{x},\mathbf{y} \in \red{[d^t]}$ are said to be in the same class if there is an integer $r$ such that $\tau^r(\mathbf{x})=\mathbf{y}$. That is, they can be transformed to each other by the one-step cyclic permutation $\tau$, which corresponds to the unitary $S_t$ on $t$ copies, i.e., $S_t\ket{\mathbf{x}}=\ket{\tau(\mathbf{x})}$. 
For instance, the strings $01353$ and $35301$ are in the same class in $\red{[6]^5}$. 
Note that two distinct classes have no intersection, 
and $\red{[d^t]}$ is the union of all distinct classes. 
For each class of $\red{[d^t]}$, we choose its smallest element $\mathbf{z}$ as the representative, and use $[\mathbf{z}]$ to represent the class. 
For example, $\red{[2]^4}$ can be divided into six distinct classes, and one of them is $[0011] = \{0011,0110,1100,1001\}$. 
For any class  $[\mb{z}]$ of $\red{[d]^t}$, we denote by $|[\mathbf{z}]|$ its cardinality, and define its corresponding normalized $t$-qudit states by Fourier transformation as
\begin{align}
\ket{\Psi_{[\mathbf{z}]}^{(k)}} = \frac{1}{\sqrt{|[\mathbf{z}]|}} \sum_{r=0}^{|[\mb{z}]|-1} \exp\left(\frac{rk\cdot 2\pi \rm{i}}{|[\mb{z}]|}\right) \ket{\tau^r(\mb{z})}
\end{align}
for $k=0,\dots,|[\mb{z}]|-1$. 
The following lemma shows that the set of such states forms the common orthonormal eigenbasis $\{\ket{\psi_\mb{x}}\}$ of the operators $S_t$ and $Q_\mathbf{b}$. 
\begin{lemma}\label{lem:matrixR}
$\left\{\ket{\Psi_{[\mb{z}]}^{(k)}}\right\}_{[\mb{z}],k}$
forms an orthonormal basis on $\mathcal{H}_d^{\otimes t}$. 
In addition, $\ket{\Psi_{[\mb{z}]}^{(k)}}$ is the common eigenstate of $S_t$ and $Q_{\mb{b}}$, with corresponding eigenvalue
$\exp\big(\!-\frac{k\cdot 2\pi \rm{i}}{|[\mb{z}]|}\big)$ 
and $t^{-1}\times($the number of $\mb{b}$ in the string $\mb{z})$, respectively. 
\end{lemma}

Based on the basis in \cref{lem:matrixR}, it is direct to write down the corresponding entangling unitary as 
\begin{equation}\label{eq:Rany}
\mc{R}=\sum_{[\mb{z}]}\sum_{k=0,\dots,|[\mb{z}]|-1}\ket{\tau^k(\mb{z})}\bra{\Psi_{[\mathbf{z}]}^{(k)}}.   
\end{equation}

According to the eigenvalue of $S_t$ shown in \cref{lem:matrixR}, the real part of the function $f(\mb{x})$ in \cref{th:sh_est} reads $\mathrm{Re}(f(\mb{x}))=\cos(\frac{k\cdot 2\pi }{|[\mb{z}]|})$,
where $[\mb{z}]$ is the class to which $\mb{x}$ belongs, and the integer $k$ satisfies $\tau^k(\mb{z})=\mb{x}$. 

It is also worth noting, from the eigenvalue of $Q_{\mb{b}}$ in \cref{lem:matrixR}, that the mapping strategy $\mathcal{G}_{\mb{x}\to \mb{b}}$ is just to randomly choose $\mb{b}$ from $\{\mb{x}_1,\mb{x}_2, \ldots ,\mb{x}_t\}$ of $\mb{x}$ with uniform probability $1/t$. 
In practice, instead of such a random assignment, one can alternatively choose $\mb{b}$ as the average of all $\mb{x}_i$. 
Such kind of treatment directly simulates the random assignment process and inherits the multi-shot nature \cite{helsen2023thrifty,zhou2023performance}, and would further reduce the variance in \cref{eq:varm} by a factor of $t$ in some cases. We give the form of the new estimator and the detailed variance analysis in 
SM~\cite{supplementary}.


\textit{Circuit compilation of $\mathcal{R}$---}The circuit compilation of $\mc{R}$ with the subsequent projective measurement is significant for the realization in quantum computing platforms.
Here, we focus on the $t=2$ scenario, which is the most promising to demonstrate on near-term devices. In this case, $\mc{R}$ acting on $\mc{H}_d^{\otimes 2}$ is Hermitian and has a matrix form in 
\cref{eq:Rany}.
We start from the single-qubit case with $d=2$, then extend to the general qudit case, and finally give the construction of the most challenging $n$-qubit case with $\mc{H}_d=\mc{H}_2^{\otimes n}$, which may be of independent interest. 

\begin{figure}[t]
\centering
\includegraphics[width=\linewidth]{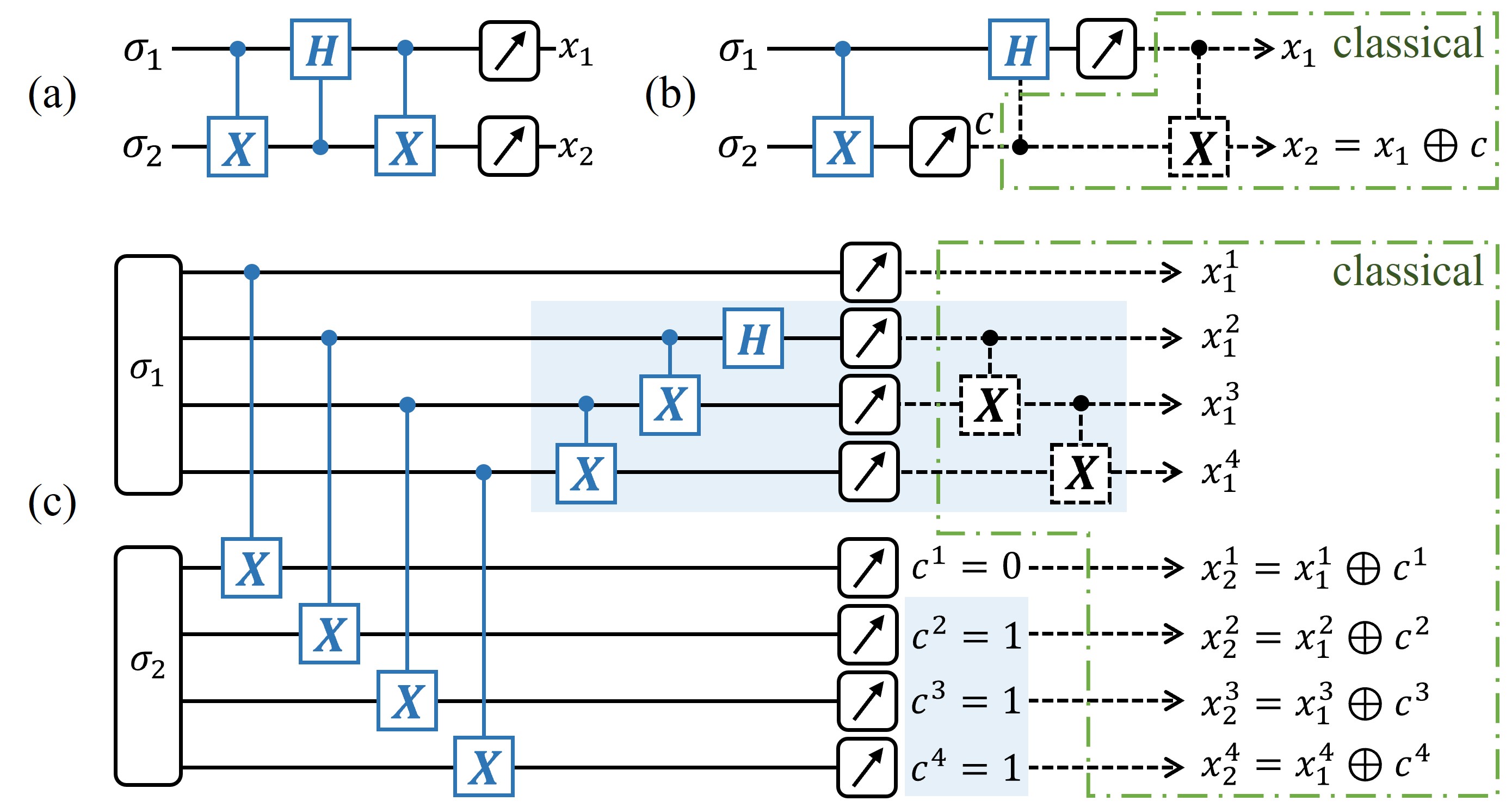}
\caption{\justifying{Circuit compilation of $\mathcal{R}$. (a) Compilation of $\mathcal{R}$ in the single-qubit case. (b) Simplification of the circuit by utilizing classical post-processing. (c) The generalization to the $n$-qubit case. Here, we take $n=4$, and assume that the parity check $c^1=0$, and $c^i=1$ for $i\in\{2,3,4\}$. For the first qubit pair, the quantum circuit is similar to the single-qubit case, but there is a joint unitary consisting of CNOTs among the last three qubits in the first replica. In addition, as in the single-qubit case shown in (b), there are some classical post-processings to calculate the final measurement labels $\mb{x}_1$ and $\mb{x}_2$.
}}
\label{fig:1Qandmore}
\end{figure}

For the single-qubit case, we have three classes, that is, $[01] = \{01,10\}, [00]=\{00\}, [11]=\{11\}$ of $\red{[2]^2}$. 
In this way, $\mc{R}$ creates superposition, say some Hadamard-like rotation, in the subspace $\mc{V}={\rm span} (\ket{01},\ket{10})$, but is inoperative on the complement subspace $\mc{V}^{\perp}={\rm span}(\ket{00},\ket{11})$. As shown in \cref{fig:1Qandmore} (a), 
\begin{equation}\label{eq:1qR}
\begin{aligned}
\mc{R}=CX_{1\rightarrow2}CH_{2\rightarrow1}CX_{1\rightarrow2}
\end{aligned}
\end{equation}
with $CX$ and $CH$ representing for the controlled-$X$ (or CNOT) and controlled-$H$ gate, respectively. An important observation is that the unitary $\mc{R}$ is followed by the projective measurement in the computational basis $\{\ket{\mb{x}}\}$, and thus the quantum gates that do not create superposition before the final measurement can be equivalently moved to the classical post-processing \cite{liu2022classically}. As such, the circuit in \cref{fig:1Qandmore} (a) can be effectively transformed to the one in \cref{fig:1Qandmore} (b). 

We can understand the structure of the circuit in \cref{fig:1Qandmore} (b) as follows. Note that the subspaces $\mathcal{V}$ and $\mc{V}^{\perp}$ depend on the parity information of the two input qubits, and $CX_{1\rightarrow2}$ followed by measurement on the second qubit aims to obtain such parity information $c$. This effectively projects (partially decoherence) the input into $\mc{V}$ or $\mc{V}^{\perp}$. Conditioned on the parity $c$, one adds a \emph{classical} controlled-$H$ on the first qubit, thus creating superposition only in $\mc{V}$ but not in $\mc{V}^{\perp}$. The measurement result of the first qubit is $x_1$, and $x_2=x_1\oplus c$ by the classical addition.
One can extend the circuit in \cref{fig:1Qandmore} (b) to the qudit case by substituting the $CX$ and $H$ gates with their qudit analogs, and the details are left in SM~\cite{supplementary}.

The circuit  compilation in the many-qubit case is non-trivial and indeed not a direct extension from the single-qubit case, since $\mc{R}\neq \bigotimes_i \mc{R}_{(i)}$, i.e., $\mc{R}$ cannot be written in a tensor-product form. Here, $\mc{R}_{(i)}$ denotes the local unitary on the $i$-th qubit pair.
Note that the standard Bell measurement nevertheless can be written into the qubit-wise tensor-product form \cite{Huggins2021Virtual,volkoff2022ancilla}, which highlights the significant difference between these two bases.
The main strategy of the following compilation is to add more entangling operations inside the first and(or) second replicas, beyond the product structure where the initial $\mc{R}_{(i)}$ only supplies entanglement across two replicas. 

Use $\mb{a}$ and $\mb{b}$ to label the $n$-bit string for the first and second replicas, respectively. 
Without loss of generality, assume that the first $n-l$ bits of $\mb{a}$ and $\mb{b}$ share the same string, and thus the last $l$ bits are different (opposite indeed). That is, $\mb{a}=\mb{a}^{[n-l]}\mb{a}^{\lfloor l \rfloor}$ and $\mb{b}=\mb{a}^{[n-l]}\overline{\mb{a}^{\lfloor l \rfloor}}$. Here, $\mb{a}^{[n-l]}$ with $[n-l]:=\{1,2,\ldots,n-l\}$ denotes the restricted bit-string on the subsystem of the first $n-l$ qubits; $\mb{a}^{\lfloor l \rfloor}$ with $\lfloor l \rfloor:=\{n-l+1,\ldots,n-1,n\}$ denotes the restricted bit-string of the last $l$ qubits; and $\overline{\mb{a}^{\lfloor l \rfloor}}$ denotes the bit-flip of $\mb{a}^{\lfloor l \rfloor}$.
By \cref{eq:Rany} we have 
\begin{widetext}
\begin{align}\label{eq:Rnbit}
\mc{R}\ket{\mb{a}^{[n-l]}\mb{a}^{\lfloor l \rfloor}}_1
\ket{\mb{a}^{[n-l]}\overline{\mb{a}^{\lfloor l \rfloor}}}_2
= \ket{\mb{a}^{[n-l]}}_1 \ket{\mb{a}^{[n-l]}}_2 \otimes
\left\{
\begin{array}{ll}
\!\frac{1}{\sqrt{2}}
\left(\ket{\overline{\mb{a}^{\lfloor l \rfloor}}}_1\ket{\mb{a}^{\lfloor l \rfloor}}_2 + \ket{\mb{a}^{\lfloor l \rfloor}}_1\ket{\overline{\mb{a}^{\lfloor l \rfloor}}}_2\right),
\quad  \mb{a}^{\lfloor l \rfloor}<\overline{\mb{a}^{\lfloor l \rfloor}}, 
\\[2ex]
\!\frac{1}{\sqrt{2}}\left(\ket{\overline{\mb{a}^{\lfloor l \rfloor}}}_1\ket{\mb{a}^{\lfloor l \rfloor}}_2 - \ket{\mb{a}^{\lfloor l \rfloor}}_1\ket{\overline{\mb{a}^{\lfloor l \rfloor}}}_2\right),
\quad  \mb{a}^{\lfloor l \rfloor}>\overline{\mb{a}^{\lfloor l \rfloor}},  
\end{array}
\right.
\end{align}
\end{widetext}
where we just extract the common state on the first $n-l$ qubits on both replicas; and the inequality $\mb{a}^{\lfloor l \rfloor}<\overline{\mb{a}^{\lfloor l \rfloor}}$ ($\mb{a}^{\lfloor l \rfloor}>\overline{\mb{a}^{\lfloor l \rfloor}}$) means that the first bit of $\mb{a}^{\lfloor l \rfloor}$ is 0 (1). 
It is not hard to see that the state on the last $l$ qubits is a GHZ-like state, which inspire the following compilation.

Let us summarize the action of $\mc{R}$ in \cref{eq:Rnbit}: if $a_i=b_i$, $\mc{R}$ does nothing; and for the rest of qubits in $[l]$ with $b_i=\overline{a_i}$, it creates a `large' superposition in the subspace 
$\mc{V}={\rm span}\left(\ket{\mb{a}^{\lfloor l \rfloor}}\ket{\overline{\mb{a}^{\lfloor l \rfloor}}}, \ket{\overline{\mb{a}^{\lfloor l \rfloor}}}\ket{\mb{a}^{\lfloor l \rfloor}}\right)$, which degenerates to ${\rm span}\left(\ket{01},\ket{10}\right)$ in  the single-qubit case. By applying the technique of QECC, we construct the quantum circuit as shown in \cref{fig:1Qandmore} (c). Conditioned on the parity check information on the second replica, we additionally add a joint unitary on the qubits in the first replica whose parity information $c^i=1$. Such joint unitary is composed of CNOTs and a Hadamard gate, which are both Clifford gates \cite{gottesman2002introduction} and can be realized in a fault-tolerant way in the future \cite{campbell2017roads}. 

Some remarks on the structure of the many-qubit quantum circuit in \cref{fig:1Qandmore} (c) are as follows. First, after the initial parity check which needs a depth-1 circuit, the remaining circuit depth equals the number of $c^i=1$, and thus the maximal circuit depth could be $n+1$ in total. Second, one needs long-range CNOTs which are realizable in systems like ion-trap \cite{figgatt2019parallel} and optical-tweezer platforms \cite{evered2023high}, but would be challenging for the ones with the nearest-neighbour architecture, like superconducting-qubit system \cite{huang2022Science,cao2023generation}. The compilation of the long-range gates in such systems would futher increase the circuit depth.
To address this,  in the main text, we developed the advanced local-AFRS protocol beyond AFRS to significantly reduce the circuit depth from $\mc{O}(n)$ to $\mc{O}(1)$ for local observables.

\begin{figure}[t]
\centering
\includegraphics[width=\linewidth]{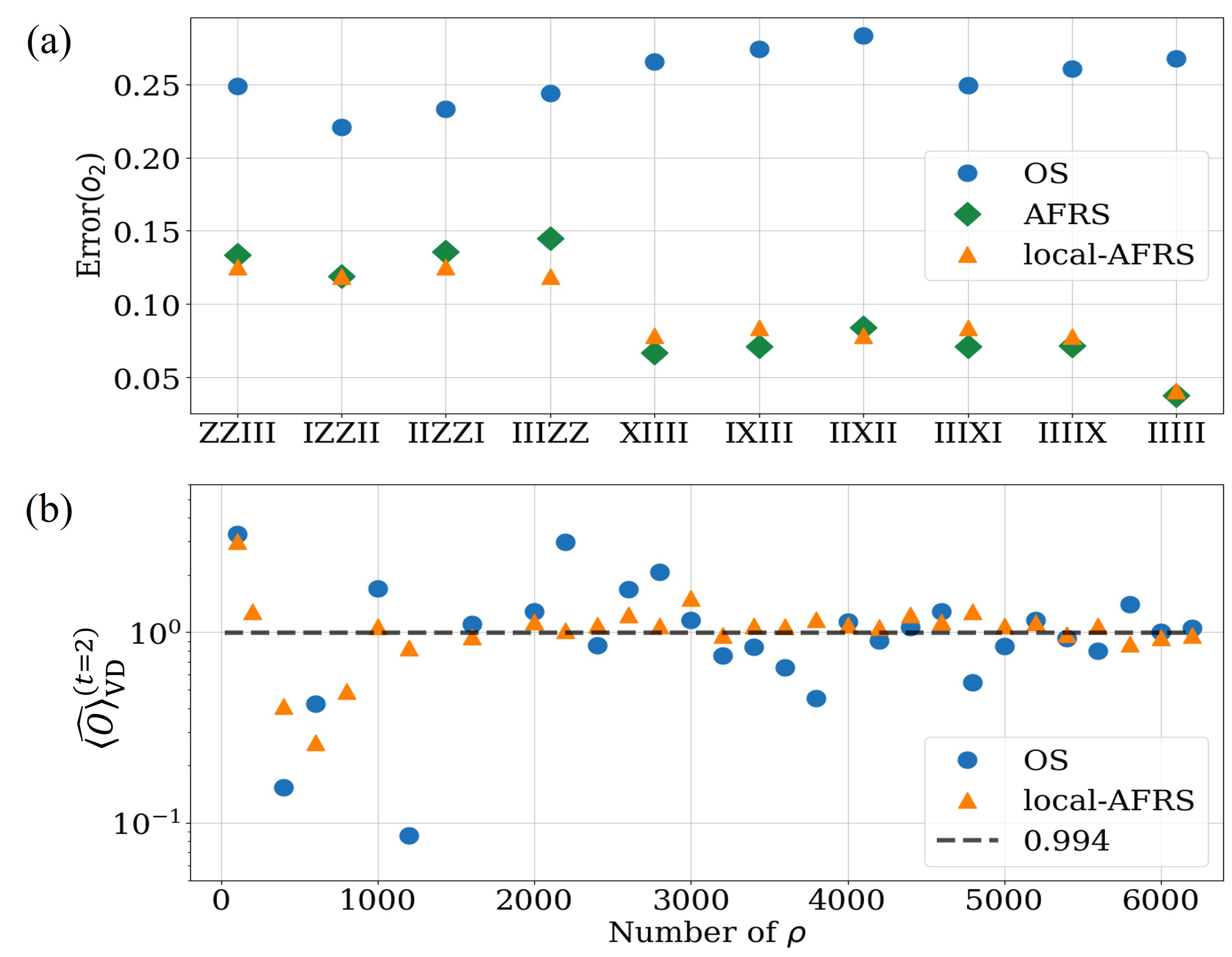}
\caption{\justifying{Estimation performance of OS, AFRS and local-AFRS protocols. The quantum state is a noisy $5$-qubit GHZ state $\rho = 0.7\ket{\text{GHZ}}\bra{\text{GHZ}}+0.3\identity_d/d$.
Local Clifford ensemble with $V\in\mc{E}_{\mathrm{LCl}}$ is utilized for evaluating $o_2 = \tr(O\rho^2)$ with the sample number $N=1000$ in (a), and $\langle O \rangle_{\mathrm{VD}}^{(t=2)}=\tr(O\rho^2)/\tr(\rho^2)$ in (b). 
The local observable in (b) is set as $O=Z_1 Z_2$, and the black dashed line represents the theoretical result of $\langle O \rangle_{\mathrm{VD}}^{(t=2)}=0.994$.}}
\label{fig:num_pauli_cliff2}
\end{figure}

\red{\textit{Estimator for local-AFRS---}Given a local observable $O=O_A\otimes \id_{\bar{A}}$, the unbiased estimator of the reduced operator $\rho^t_A=\tr_{\bar{A}}(\rho^t)$ is constructed from a single-shot measurement outcome $\mb{x}=\{\mb{x}^A,\mb{x}^{\bar{A}}\}$ as
\begin{equation}\label{eq:local_AFRS_estimator}
\widehat{\rho^t_A} = \mathrm{Re}\left(f(\mb{x}^A,\mb{x}^{\bar{A}})\right)\mathcal{M}_A^{-1}\left( V_A^\dag \ketbra{\mb{b}^A}{\mb{b}^A} V_A \right),
\end{equation}
where $\mb{b}^A$ is obtained via the mapping $\mb{x}^A\to \mb{b}^A$ and $\mathcal{M}_A^{-1}$ depends on the ensemble $\mathcal{E}^{(A)}$. The observable is then estimated by $\widehat{o_t}=\tr(\widehat{\rho^t_A} O_A)$, with variance bounded as in \cref{eq:varmLocal}. 
Further details of the construction and proof are provided in SM~\cite{supplementary}. }

\textit{Application in virtual distillation---}\red{Virtual distillation (VD) aims to purify a noisy state by accessing its `virtual' power state $\smash{\rho^{(t)}_\text{VD}=\rho^t/ \tr(\rho^t)}$, enabling improved algorithmic cooling and error mitigation \cite{cotler2019cooling,Huggins2021Virtual,Koczor2021Exponential}. Within AFRS, such tasks reduce to estimating ratios of observables, $\langle O \rangle^{(t)}_\text{VD}=\text{tr}(O\rho^{(t)}_\text{VD})=\tr(O\rho^t)/ \tr(\rho^t)$, which can be handled naturally by the AFRS framework. In particular, local-AFRS provides efficient access to Pauli observables relevant for many-body Hamiltonians, using only shallow entangling circuits and logarithmic sampling overhead.

We demonstrate the advantages of local-AFRS for VD using the transverse-field Ising Hamiltonian $H = -J \sum_j Z_jZ_{j+1}-h \sum_j X_j$, targeting Pauli observables $O\in{Z_jZ_{j+1},X_j}$. By \cref{co:localAFRS}, only two entangling unitaries, $U_\text{odd}$ and $U_\text{even}$ (see \cref{fig:local_afrs_sketch}~(c)), are needed. Measurement results from $U_\text{odd/ even}$ cover ${Z_jZ_{j+1}}$ with odd/even $j$, while both patterns suffice for ${X_i}$ and $\id$. This gives sampling time $M=\mc O(\log n)$ and entangling depth $\le 3$ for $t=2$.
\cref{fig:num_pauli_cliff2}~(a) and (b) show numerical results: local-AFRS significantly outperforms the OS protocol in estimating $\tr(O\rho^2)$ and $\langle O \rangle_\text{VD}^{(t=2)}$, with faster convergence and lower error, confirming its efficiency in VD tasks.
A full description and detailed analysis are presented in SM \cite{supplementary}.}

\comments{
\textit{Application in virtual distillation---}Virtual distillation (VD) aims to purify a mixed state $\rho$ using its $t$-th order `virtual' state $\smash{\rho^{(t)}_\text{VD}=\rho^t/ \tr(\rho^t)}$ \cite{cotler2019cooling,Huggins2021Virtual,Koczor2021Exponential}. In general, the transformation $\rho \rightarrow \rho^{(t)}_\text{VD}$ is not a physical operation; however, one can alternatively construct the expectation value of some observable $O$, say $\langle O \rangle^{(t)}_\text{VD}=\text{tr}(O\rho^{(t)}_\text{VD})=\tr(O\rho^t)/ \tr(\rho^t)$, by quantum measurements and post-processings. VD can be applied to algorithmic cooling of many-body systems \cite{cotler2019cooling}, and to quantum error mitigation protocols by amplifying the dominant state vector \cite{Huggins2021Virtual,Koczor2021Exponential,Brien2022purification}.

We demonstrate the strengths of local-AFRS in such VD task.
Consider the Hamiltonian of the transverse-field Ising model, $H = -J \sum_j Z_jZ_{j+1}-h \sum_j X_j $. We aim to estimate $\langle O \rangle^{(t)}_\text{VD}$ with $O\in\{Z_jZ_{j+1},X_j \}$. By \cref{co:localAFRS}, such estimation can be accomplished by implementing two distinct entangling unitaries, $U_{\text{odd}}$ and $U_{\text{even}}$ introduced in \cref{fig:local_afrs_sketch}~(c), in the local-AFRS circuit.
For $\{Z_jZ_{j+1}\}$ with an odd/even $j$, one post-processes the measurement results from  $U_{\text{odd}/\text{even}}$. For $\{X_i\}$ and $\id$, both measurement patterns could be applied, as they cover the supports of the observables. Therefore, the sampling time is $M=\mc{O}(\log(n))$, and the circuit depth of entangling unitaries is at most $3$. 
\comments{Consider the Hamiltonian of the transverse-field Ising model given by $H = -J \sum_j Z_jZ_{j+1}-h \sum_j X_j $. We aim to estimate $\langle O \rangle^{(t)}_\text{VD}$ for all Pauli operators in the summation, that is, $O\in\{Z_jZ_{j+1},X_j\}$. Note that the moment function in the denominator $\tr(\rho^t)$ can be estimated by just taking $O=\id$, and thus the essential part is the numerator $o_t=\tr(O\rho^t)$ for various $O$. By \cref{co:localAFRS}, we only need to choose $K=2$ different entangling unitary to accomplish the estimation. That is, 
after the local random Clifford evolution $V=\bigotimes_{i=1}^n V_i$ on each replica, one applies the following two entangling unitaries one at a time (with $n$ being odd for simplicity), $U_{\text{odd}} = \bigotimes_{j=1}^{\lfloor n/2 \rfloor}\mc{R}_{2j-1,2j} \otimes \mc{R}_n$ and $U_{\text{even}} = \mc{R}_1 \otimes\bigotimes_{j=1}^{{\lfloor n/2 \rfloor}}\mc{R}_{2j,2j+1}$,
to measure all $t$ replicas of $\rho$. 
For $\{Z_jZ_{j+1}\}$ with an odd/even $j$, one post-processes the measurement results from  $U_{\text{odd}/\text{even}}$, respectively. For $\{X_i\}$ and $\id$, both measurement patterns could be used, as they cover the supports of the observables. In this way, the sampling time is just $M=\mc{O}(\log(n))$ by \cref{co:localAFRS}. Note that for the case $t=2$, the total circuit depth is at most $3$.}
In \cref{fig:num_pauli_cliff2}, we numerically demonstrate the advantages of local-AFRS for estimating such Pauli observables $O$ over the OS protocol in the context
of VD. 
\cref{fig:num_pauli_cliff2} (a) presents the estimation error of $\tr(O\rho^2)$ for OS, AFRS and local-AFRS protocols, with $O\in\{Z_j Z_{j+1}, X_j, \identity\}$ in the Ising Hamiltonian. The performances of AFRS and local-AFRS protocols are similar, and both are significantly better than that of OS. Their performances depend on the support size of the observables, which are consistent with the variance predictions outlined in \cref{th:sh_est} and \cref{th:localAFRSinformal}. \cref{fig:num_pauli_cliff2} (b) further compares the VD results on $\langle O \rangle^{(t=2)}_\text{VD}$ for OS and local-AFRS under the same sampling number of $\rho$. The numerical simulations indicate that local-AFRS converges faster, and the VD results are much closer to the theoretical prediction than OS, thus highlighting the strength of local-AFRS in VD.
}


\onecolumngrid

\newpage

\begin{appendix}

\renewcommand{\figurename}{Supplementary Figure}
\renewcommand{\theequation}{S\arabic{equation}}
\renewcommand{\thetable}{S\arabic{table}}
\renewcommand{\thetheorem}{S\arabic{theorem}}
\renewcommand{\thelemma}{S\arabic{lemma}} 
\renewcommand{\theobservation}{S\arabic{observation}} 
\renewcommand{\theprop}{S\arabic{prop}} 
\renewcommand{\thefigure}{S\arabic{figure}}

\setcounter{secnumdepth}{3}

\renewcommand*{\theHobservation}{\theobservation}
\renewcommand*{\theHlemma}{\thelemma}

\def\eqref#1{\textup{(\ref{#1})}}  
\setcounter{figure}{0}

\tableofcontents
    
\section{Proofs of \texorpdfstring{Observation 1}{Observation 1} and \texorpdfstring{Theorem 1}{Theorem 1} in the main text}\label{SM:Th1}

\subsection{Proofs of Eq.~(1) and Observation 1 in the main text}\label{SM:Ob1}
Recall that $\Pr(\mb{b}|V)= \bra{\mb{b}}V(\rho^t)\ket{\mb{b}}$ and $Q_\mathbf{b} = t^{-1} \sum_{i=1}^t \ketbra{\mb{b}}{\mb{b}}_i \otimes \id_d^{\otimes (t-1)}$. Hence, 
Eq.~(1) in the main text can be proved by showing that 
\begin{align}\label{eq:obs1_proof}
\bra{\mb{b}}V(\rho^t)\ket{\mb{b}}=\tr \left[ S_t  \left(\ketbra{\mb{b}}{\mb{b}}_i \otimes \id_d^{\otimes (t-1)}\right) V(\rho)^{\otimes t}\right]
\quad\forall i=1,2,\dots,t.
\end{align}
We graphically show the derivation of this relation
with the diagram in \cref{fig:obs_sketch}. 

Equation~(2) in the main text can be derived as follows, 
\begin{align}
\bra{\mb{b}}V(\rho^t)\ket{\mb{b}}
&=\tr \left( S_t Q_\mathbf{b}V(\rho)^{\otimes t}\right)
=
\sum_\mb{x} \bra{\psi_\mb{x}}S_t Q_\mb{b} V(\rho)^{\otimes t}\ket{\psi_\mb{x}}
\nonumber\\
&=
\sum_\mb{x} \bra{\psi_\mb{x}}S_t\ket{\psi_\mb{x}}\bra{\psi_\mb{x}}Q_\mb{b}\ket{\psi_\mb{x}}\bra{\psi_\mb{x}}V(\rho)^{\otimes t}\ket{\psi_\mb{x}}, 
\end{align}
where the second equality holds because 
$\{\ket{\psi_\mb{x}}\}_{\mb{x}}$ forms an orthonormal basis on $\mc{H}_d^{\otimes t}$, 
and the third equality holds because 
$Q_\mb{b}$, $S_t$, and $\ketbra{\psi_\mb{x}}{\psi_\mb{x}}$ commute with each other (recall that $S_t$ and $Q_\mathbf{b}$ can be diagonalized simultaneously in the basis $\{\ket{\psi_\mb{x}}\}$). 

\begin{figure}[htbp]
\centering
\includegraphics[width=0.76\textwidth]{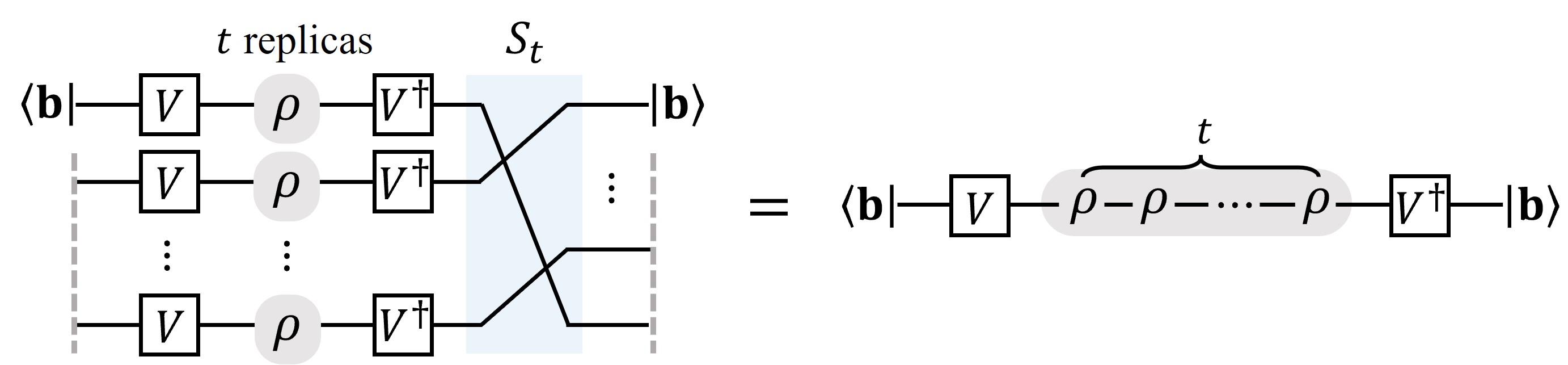}\\
\caption{\justifying{A proof sketch of \cref{eq:obs1_proof} for the AFRS protocol. The gray dashed lines denote the periodic boundary, that is, the trace operation.
}}
\label{fig:obs_sketch}
\end{figure}

\subsection{Proof of 
Theorem 1 in the main text}\label{SM:Th1proof}

The proof of 
Theorem 1 is separated into two parts. Using 
Observation 1 in the main text, the first part in Supplementary Note~\ref{SM:Th1part1} shows that $\widehat{\rho^t}$ defined in 
Eq.~(3) of the main text reproduces the $t$-th power of the underlying state $\rho$ exactly in expectation (over the randomness in the choice of unitary $V$, measurement outcome $\mb{x}$, and the mapping strategy $\mathcal{G}_{\mb{x}\to \mb{b}}$), i.e., $\mathbb{E}_{\{V,\mb{x},\mb{b}\}}\widehat{\rho^t} = \rho^t$. The second part in Supplementary Note~\ref{SM:Th1part2} proves the upper bound of $\mathrm{Var}(\widehat{o_t})$ given in 
Eq.~(4) of the main text.

\subsubsection{Unbiasedness of the estimator \texorpdfstring{$\widehat{\rho^t}$}{}}\label{SM:Th1part1}
\red{Our estimator $\widehat{\rho^t}$ for $\rho^t$ is defined as
\begin{equation}\label{eq:shadow2}
    \widehat{\rho^t}:= \mathrm{Re}(f(\mb{x}))\cdot\mathcal{M}^{-1}\left( V^{\dag} \ketbra{\mb{b}}{\mb{b}} V \right), 
\end{equation}
where $\mathcal{M}^{-1}$ is the inverse of the following measurement channel introduced in Ref.~\cite{huang2020predicting}:
\begin{align}\label{eq:channel_def}
\mathcal{M}(\cdot):=
\mathbb{E}_{V \sim \mc{E}}  \sum_{\mb{b}}
\bra{\bf{b}}V(\cdot)\ket{\bf{b}}  V^{\dag} \ketbra{\mb{b}}{\mb{b}} V . 
\end{align}
By construction, $\mc{M}$ is linear, invertible, and depends only on the system dimension and ensemble $\mc{E}$. For the global Clifford ensemble (a unitary 2-design), one obtains the closed form $\mathcal{M}(\cdot)=
\frac{(\cdot)+\tr(\cdot)\mathbb{I}}{d+1}$ and $
\mathcal{M}^{-1}(\cdot)= (d+1)(\cdot)-\tr(\cdot)\mathbb{I}$, with $d=2^n$. For the local Clifford (or Pauli) ensemble, the channel factorizes qubit-wise.
}

The expectation of $\widehat{\rho^t}$ is 
\begin{align}\label{eq:praverage}
\mathbb{E}_{\{V,\mb{x},\mb{b}\}} \widehat{\rho^t}
&=\mathbb{E}_{\{V,\mb{x},\mb{b}\}} \, {\mathrm{Re}}(f(\mb{x}))\cdot\mathcal{M}^{-1}\left( V^{\dag} \ketbra{\mb{b}}{\mb{b}} V \right) 
\nonumber\\
&=\mathbb{E}_{V \sim \mc{E}} \sum_{\mb{x},\mb{b}} \Pr(\mb{x},\mb{b}|V) \,
{\mathrm{Re}}(f(\mb{x}))
\mathcal{M}^{-1}\left( V^{\dag} \ketbra{\mb{b}}{\mb{b}} V \right) 
\nonumber\\
&=
\mathbb{E}_{V \sim \mc{E}}  \sum_{\mb{b}}\mathcal{M}^{-1}\left( V^{\dag} \ketbra{\mb{b}}{\mb{b}} V \right) 
\left[\sum_{\mb{x}}
\Pr(\mb{x}|V) 
\Pr(\mb{b}|\mb{x})\,
{\mathrm{Re}}(f(\mb{x}))
\right]. 
\end{align}
Here, the third equality holds because $\Pr(\mb{x},\mb{b}|V)=\Pr(\mb{x}|V) 
\Pr(\mb{b}|\mb{x})$, since the mapping result $\mb{b}$ only depends on the real measurement outcome $\mb{x}$ according to  the  mapping strategy $\mathcal{G}_{\mb{x}\to \mb{b}}$ in the main text.

Since $\Pr(\mb{x}|V)$ and $\Pr(\mb{b}|\mb{x})$ are real numbers, the last term in the parentheses of \cref{eq:praverage} can be further written as 
\begin{align}
{\mathrm{Re}}\left[\sum_{\mb{x}}
\Pr(\mb{x}|V) 
\Pr(\mb{b}|\mb{x})
f(\mb{x})
\right] 
= {\mathrm{Re}}\left[\Pr(\mb{b}|V) \right]
= \bra{\bf{b}}V(\rho^t)\ket{\bf{b}},
\end{align}
where the first equality follows from 
Observation 1 in the main text. 
By plugging this relation into \cref{eq:praverage}, we get 
\begin{align}
\mathbb{E}_{\{V,\mb{x},\mb{b}\}} \widehat{\rho^t}
&=\mathbb{E}_{V \sim \mc{E}}  \sum_{\mb{b}}
\bra{\bf{b}}V(\rho^t)\ket{\bf{b}} \mathcal{M}^{-1}\left( V^{\dag} \ketbra{\mb{b}}{\mb{b}} V \right) 
\nonumber\\
&=
\mathcal{M}^{-1}\left[ \mathbb{E}_{V \sim \mc{E}}  \sum_{\mb{b}}
\bra{\bf{b}}V(\rho^t)\ket{\bf{b}}  V^{\dag} \ketbra{\mb{b}}{\mb{b}} V \right]
=
\mathcal{M}^{-1}\left[ \mathcal{M} (\rho^t) \right] 
=\rho^t,
\end{align}
where we have used the definition of $\mathcal{M}$ in \cref{eq:channel_def}. 
In conclusion, $\widehat{\rho^t}$ is an unbiased estimator of $\rho^t$.

\subsubsection{Proof of 
Eq.~(4) in the main text}\label{SM:Th1part2}

Since $\widehat{\rho^t}$ and $O$ are Hermitian operators, the estimator $\widehat{o_t}=\tr\left(O\widehat{\rho^t}\right)$ is a real number, and its variance can be written as 
\begin{align}\label{eq:Var1}
\mathrm{Var}(\widehat{o_t}) 
= \mathbb{E}_{\{V,\mb{x},\mb{b}\}} \left[\widehat{o_t}^2 \right] - \left(\mathbb{E}_{\{V,\mb{x},\mb{b}\}} \widehat{o_t}\right) ^2 
= \mathbb{E}_{\{V,\mb{x},\mb{b}\}} \left[\widehat{o_t}^2\right] - \left[\tr(O\rho^t)\right]^2. 
\end{align}
Note that the inverse of $\mathcal{M}$ in \cref{eq:channel_def} is self-adjoint, i.e., $\tr\!\left[X \mathcal{M}^{-1}(Y)\right]=\tr\!\left[\mathcal{M}^{-1}(X) Y\right]$ for any pair of operators $X, Y$ on $\mc{H}_d$. 
So we have 
\begin{align}
\widehat{o_t}
=
\tr(O\widehat{\rho^t})
=  
{\mathrm{Re}}(\bra{\psi_\mb{x}}S_t\ket{\psi_\mb{x}})
\tr\left[O \mathcal{M}^{-1}\left( V^{\dag} \ketbra{\mb{b}}{\mb{b}} V \right)\right]
=  
{\mathrm{Re}}(\bra{\psi_\mb{x}}S_t\ket{\psi_\mb{x}}) 
\bra{\mb{b}} V \mathcal{M}^{-1}(O)  V^{\dag} \ket{\mb{b}} .
\end{align}
It follows that 
\begin{align}\label{eq:Var2}
\mathbb{E}_{\{V,\mb{x},\mb{b}\}} \left[\widehat{o_t}^2 \right]
&=
\mathbb{E}_{V \sim \mc{E}} \sum_{\mb{x},\mb{b}} \Pr(\mb{x}|V) 
\Pr(\mb{b}|\mb{x}) \,\widehat{o_t}^2
\nonumber\\
&\stackrel{(a)}{=} 
\mathbb{E}_{V \sim \mc{E}} \sum_{\mb{x},\mb{b}} \bra{\psi_\mb{x}}V(\rho)^{\otimes t}\ket{\psi_\mb{x}} \bra{\psi_\mb{x}} Q_\mb{b}\ket{\psi_\mb{x}} \left[{\mathrm{Re}}(\bra{\psi_\mb{x}}S_t\ket{\psi_\mb{x}})\right]^2 
\bra{\mb{b}} V \mathcal{M}^{-1}(O)  V^{\dag} \ket{\mb{b}}^2
\nonumber\\
&\stackrel{(b)}{\leq} 
\mathbb{E}_{V \sim \mc{E}} \sum_{\mb{b}} \bra{\mb{b}} V \mathcal{M}^{-1}(O)  V^{\dag} \ket{\mb{b}}^2
\underbrace{\left(\sum_{\mb{x}}  \bra{\psi_\mb{x}}V(\rho)^{\otimes t}\ket{\psi_\mb{x}} \bra{\psi_\mb{x}} Q_\mb{b}\ket{\psi_\mb{x}} \right)}_{(*)} , 
\end{align}
where $(a)$ holds because $\Pr(\mb{x}|V)=\bra{\psi_\mb{x}}V(\rho)^{\otimes t}\ket{\psi_\mb{x}}$ and $\Pr(\mb{b}|\mb{x})=\bra{\psi_\mb{x}} Q_\mb{b}\ket{\psi_\mb{x}}$, 
and $(b)$ holds because $\left[\mathrm{Re}(f(\mb{x}))\right]^2=\left[\mathrm{Re}(\bra{\psi_\mb{x}}S_t\ket{\psi_\mb{x}})\right]^2\leq1$.
The term $(*)$ in \cref{eq:Var2} can be further written as  
\begin{align}\label{eq:Var3}
(*)
&\stackrel{(a)}{=} 
\sum_{\mb{x}} \bra{\psi_\mb{x}}V(\rho)^{\otimes t} Q_\mb{b} \ket{\psi_\mb{x}}
\stackrel{(b)}{=} 
\tr\!\left[V(\rho)^{\otimes t} Q_\mb{b}\right]
=
\bra{\mb{b}}V(\rho)\ket{\mb{b}}, 
\end{align}
where $(a)$ holds because $Q_\mb{b}$ commutes with $\ketbra{\psi_\mb{x}}{\psi_\mb{x}}$, and $(b)$ holds because $\{\ket{\psi_\mb{x}}\}_{\mb{x}}$ forms an orthonormal basis on $\mc{H}_d^{\otimes t}$.
Equations \eqref{eq:Var2} and \eqref{eq:Var3} together imply that 
\begin{align}\label{eq:Var4}
\mathbb{E}_{\{V,\mb{x},\mb{b}\}} \left[\widehat{o_t}^2 \right]
\leq 
\mathbb{E}_{V \sim \mc{E}} \sum_{\mb{b}} \bra{\mb{b}} V \mathcal{M}^{-1}(O)  V^{\dag} \ket{\mb{b}}^2
\bra{\mb{b}}V(\rho)\ket{\mb{b}}
=\mathbb{E}\left[\operatorname{tr}(O \hat{\rho})^2\right] ,
\end{align}
where $\hat{\rho}$ is the original single-copy shadow snapshot \cite{huang2020predicting}. 
Hence, the expectation value of $\widehat{o_t}^2$ is not larger than that of $\left[\tr(O \hat{\rho})\right]^2$ in the original shadow protocol on a single-copy of $\rho$, no matter what value $t$ takes.

By Eqs.~\eqref{eq:Var1} and \eqref{eq:Var4}, we have 
\begin{align}\label{eq:Var5}
\mathrm{Var}(\widehat{o_t}) 
&\leq \mathbb{E} \left[\operatorname{tr}(O \hat{\rho})^2\right]- \left[\tr(O\rho^t)\right]^2
\nonumber\\
&= \mathbb{V}^{\mc{E}}(O,\rho) + \left[\tr(O\rho)\right]^2 - \left[\tr(O\rho^t)\right]^2
\nonumber\\
&\stackrel{(a)}{=} 
\mathbb{V}^{\mc{E}}(O_0,\rho) + \left[\tr(O\rho)\right]^2 - \left[\tr(O\rho^t)\right]^2
\nonumber\\
&\stackrel{(b)}{\leq}
\left\|O_0\right\|_{\mathrm{sh},\mathcal{E}}^2 + \|O\|_{\infty}^2 , 
\end{align}
which confirms 
Eq.~(4) in the main text. Here,  $\mathbb{V}^{\mc{E}}(O,\rho)$ is the variance of $\tr(O\hat\rho)$ associated with the original shadow protocol on a single-copy of $\rho$ \cite{huang2020predicting}, $(a)$ holds because shifting the operator $O$ to its traceless part $O_0 $  does not change the variance of $\tr(O\hat\rho)$, $(b)$ holds because 
$\mathbb{V}^{\mc{E}}(O_0,\rho)\leq \left\|O_0\right\|_{\mathrm{sh},\mathcal{E}}^2$ and $\left|\tr(O\rho)\right| \leq \|O\|_\infty$.

\section{Proof of 
Lemma 1 in END MATTER}\label{SM:Lemma1}
Suppose $[\mathbf{z}]$, $[\mathbf{y}]$ are two classes of $\red{[d]^t}$, integers $k_1\in\{0,1,\dots,|[\mathbf{z}]|-1\}$ and $k_2\in\{0,1,\dots,|[\mathbf{y}]|-1\}$. 
When $[\mathbf{z}]\ne [\mathbf{y}]$, the two states $\ket{\Psi_{[\mathbf{z}]}^{(k_1)}}$ and $\ket{\Psi_{[\mathbf{y}]}^{(k_2)}}$ are orthogonal since $\langle\tau^{r_2}(\mathbf{z})|\tau^{r_1}(\mathbf{y})\rangle=0$ for any integers $r_1$ and $r_2$. 
When $[\mathbf{z}]=[\mathbf{y}]$ but $k_1\ne k_2$, the two states $\ket{\Psi_{[\mathbf{z}]}^{(k_1)}}$ and $\ket{\Psi_{[\mathbf{y}]}^{(k_2)}}$ are still orthogonal since 
\begin{align}
\left\langle \Psi_{[\mathbf{z}]}^{(k_1)}\Big|\Psi_{[\mathbf{y}]}^{(k_2)}\right\rangle &= \frac{1}{|[\mathbf{z}]|} \sum_{r=0}^{|[\mathbf{z}]|-1} \exp\left(\frac{r(k_2-k_1)\cdot 2\pi \rm{i}}{|[\mathbf{z}]|}\right) 
=0,  
\end{align}
where the last inequality holds because $k_2-k_1\ne 0\mod |[\mathbf{z}]|$. 
In addition, notice that the size of the set $\left\{\ket{\Psi_{[\mathbf{z}]}^{(k)}}\right\}_{[\mathbf{z}],k}$ equals $\red{d^t}$, and thus equals the dimension of 
$\mc{H}_d^{\otimes t}$. 
Therefore, $\left\{\ket{\Psi_{[\mathbf{z}]}^{(k)}}\right\}_{[\mathbf{z}],k}$
forms an orthonormal basis on $\mathcal{H}_d^{\otimes t}$. 

Recall that $S_t\ket{\mathbf{x}}=\ket{\tau(\mathbf{x})}$ for $\mathbf{x}\in\red{[d]^t}$. 
So we have 
\begin{align}
S_t\ket{\Psi_{[\mathbf{z}]}^{(k)}} 
&= \frac{1}{\sqrt{|[\mathbf{z}]|}} \sum_{r=0}^{|[\mathbf{z}]|-1} \exp\left(\frac{rk\cdot 2\pi \rm{i}}{|[\mathbf{z}]|}\right) S_t\ket{\tau^r(\mathbf{z})}
\nonumber\\
&= \frac{1}{\sqrt{|[\mathbf{z}]|}} \sum_{r=0}^{|[\mathbf{z}]|-1} \exp\left(\frac{rk\cdot 2\pi \rm{i}}{|[\mathbf{z}]|}\right) \ket{\tau^{r+1}(\mathbf{z})}
\nonumber\\
&= \frac{1}{\sqrt{|[\mathbf{z}]|}} 
\sum_{r'=0}^{|[\mathbf{z}]|-1} \exp\left(\frac{(r'-1)k\cdot 2\pi \rm{i}}{|[\mathbf{z}]|}\right) \ket{\tau^{r'}(\mathbf{z})}
\nonumber\\
&= \exp\left(-\frac{k\cdot 2\pi \rm{i}}{|[\mathbf{z}]|}\right) \frac{1}{\sqrt{|[\mathbf{z}]|}} \sum_{r'=0}^{|[\mathbf{z}]|-1} \exp\left(\frac{r'k\cdot 2\pi i}{|[\mathbf{z}]|}\right) \ket{\tau^{r'}(\mathbf{z})}
\nonumber\\
&= \exp\left(-\frac{k\cdot 2\pi \rm{i}}{|[\mathbf{z}]|}\right) \ket{\Psi_{[\mathbf{z}]}^{(k)}} . 
\end{align}
Hence, $\ket{\Psi_{[\mathbf{z}]}^{(k)}}$ is an eigenstate of  $S_t$ with eigenvalue
$\exp\left(-\frac{k\cdot 2\pi \rm{i}}{|[\mathbf{z}]|}\right)$.

For $\mathbf{x}\in\red{[d]^t}$, we have 
\begin{align}
\left(\ketbra{\mb{b}}{\mb{b}}_i \otimes \id_d^{\otimes (t-1)}\right) \ket{\mathbf{x}}= 
\begin{cases}
\ket{\mathbf{x}}  & \text{the $i$-th number of $\mathbf{x}$ equals $\mb{b}$}, \\
0   & \text{otherwise}.  
\end{cases}
\end{align}
Consequently,  we have 
\begin{align}
Q_\mathbf{b}\ket{\mathbf{x}}
= \frac{1}{t} \sum_i \left( \ketbra{\mb{b}}{\mb{b}}_i \otimes \id_d^{\otimes (t-1)}\right) \ket{\mathbf{x}}
= \frac{\text{the number of $\mb{b}$ in the string $\mathbf{x}$}}{t} \ket{\mathbf{x}},  
\end{align} 
and 
\begin{align}
Q_\mathbf{b}\ket{\Psi_{[\mathbf{z}]}^{(k)}} 
&=\frac{1}{\sqrt{|[\mathbf{z}]|}} \sum_{r=0}^{|[\mathbf{z}]|-1} \exp\left(\frac{rk\cdot 2\pi \rm{i}}{|[\mathbf{z}]|}\right) Q_\mathbf{b}\ket{\tau^r(\mathbf{z})}
\nonumber\\
&=\frac{1}{\sqrt{|[\mathbf{z}]|}} \sum_{r=0}^{|[\mathbf{z}]|-1} \exp\left(\frac{rk\cdot 2\pi \rm{i}}{|[\mathbf{z}]|}\right) 
\frac{\text{the number of $\mb{b}$ in the string $\tau^r(\mathbf{z})$}}{t} \ket{\tau^r(\mathbf{z})}
\nonumber\\
&=\frac{\text{the number of $\mb{b}$ in the string $\mathbf{z}$}}{t} \times 
\frac{1}{\sqrt{|[\mathbf{z}]|}} \sum_{r=0}^{|[\mathbf{z}]|-1} \exp\left(\frac{rk\cdot 2\pi \rm{i}}{|[\mathbf{z}]|}\right) 
\ket{\tau^r(\mathbf{z})}
\nonumber\\
&=\frac{\text{the number of $\mb{b}$ in the string $\mathbf{z}$}}{t} \ket{\Psi_{[\mathbf{z}]}^{(k)}} , 
\end{align} 
where the third equality holds because the strings $\mathbf{z}$ and $\tau^r(\mathbf{z})$ contains the same number of $\mb{b}$ for any integer $r$. 
Hence, $\ket{\Psi_{[\mathbf{z}]}^{(k)}}$ is an eigenstate of  $Q_\mathbf{b}$ with eigenvalue $t^{-1}\times($the number of $\mb{b}$ in the string $\mathbf{z})$. 
This completes the proof.

\section{More details on the circuit compilation of \texorpdfstring{$\mc{R}$}{R}}\label{SM:compilation}

\subsection{Single-qudit case}

\subsubsection{Circuit compilation}
One can follow the route of the single-qubit case in END MATTER to find the circuit compilation of $\mc{R}$ in the general single-qudit cases.
For $t=2$, the permutation $\tau$ of $S_2$ is just the swap operation, and any string in $\red{[d]^2}$ can be written as $ab$ for $a,b\in\{0,\dots,d-1\}$. 
If $a= b$, the class $[ab]=\{ab\}$; if $a<b$, the class $[ab]=\{ab,ba\}$. 
In this way, the unitary $\mc{R}$ acting on $\mc{H}_d^{\otimes 2}$ satisfies 

\begin{align}\label{eq:Rqudit}
\mathcal{R}|ab\>= 
\begin{cases}
|ab\>     & a=b, \\
\frac{1}{\sqrt{2}}(|ba\>+|ab\>)  & a<b, \\
\frac{1}{\sqrt{2}}(|ba\>-|ab\>)  & a>b, \\  
\end{cases}
\end{align} 
where$\mc{R}=\mc{R}^{\dag}$ is Hermitian for $t=2$.

\red{The compilation strategy is similar to the single-qubit case: first use a measurement on the second replica to project the input into a sum-subspace, then apply a controlled single-qudit unitary on the first replica that creates the required symmetric/antisymmetric superpositions on each two-dimensional block.
Concretely, write the integer sum $c=a+b$ with $a,b\in\{0,\dots,d-1\}$, and consider the subspace $\mathcal{V}_{c}={\rm span}\left(\{\ket{ab}\}_{a+b=c}\right)$ with $c=0,1,\dots,2d-2$.} Each $\mathcal{V}_{c}$ can be written as a direct sum of several 1- or 2-dimensional subspace $\mathcal{V}_{a,b}={\rm span}\left(\ket{ab},\ket{ba}\right)$, i.e., 
\begin{align}
\mathcal{V}_{c}=\bigoplus_{\substack{a,b\in\{0,1,\dots,d-1\} \\ a+b=c}}\mathcal{V}_{a,b},  
\end{align}
{Hence, after the projection onto $\mc{V}_c$, one can create the desired superposition on each $\mathcal{V}_{a,b}$ by performing an appropriate unitary $U_c$ on the first replica.}

\red{To make the integer sum $c=a+b$ an unambiguous label, we enlarge the local dimension of the second replica from $d$ to $2d$. 
Physically, this means the second replica has computational basis $\{\ket{0},\dots.\ket{2d-1}\}$, so that each integer $c\in\{0,1,\dots,2d-2\}$ can be represented by a distinct basis state $\ket{c}$. Operationally, we perform a controlled $2d$-shift from the first to the second replica, $X_{2d}=\sum_{j=0}^{2d-1}\ket{j+1 \ (\text{mod}\ 2d)}\bra{j}$, so that a controlled-$X_{2d}$ maps $\ket{a}\ket{b}\mapsto\ket{a}\ket{a+b\ (\text{mod}\ 2d)}$. Measuring the second replica in the $\{|i\>\}_{i=0}^{2d-1}$ basis then gives the integer $c$; because the protocol populates only $c\in\{0,\dots,2d-2\}$, the modulo-$2d$ wrap-around does not create ambiguity and the measured $c$ directly identifies $\mc{V}_c$.}

\begin{figure}[htbp]
\centering
\includegraphics[width=0.45\textwidth]{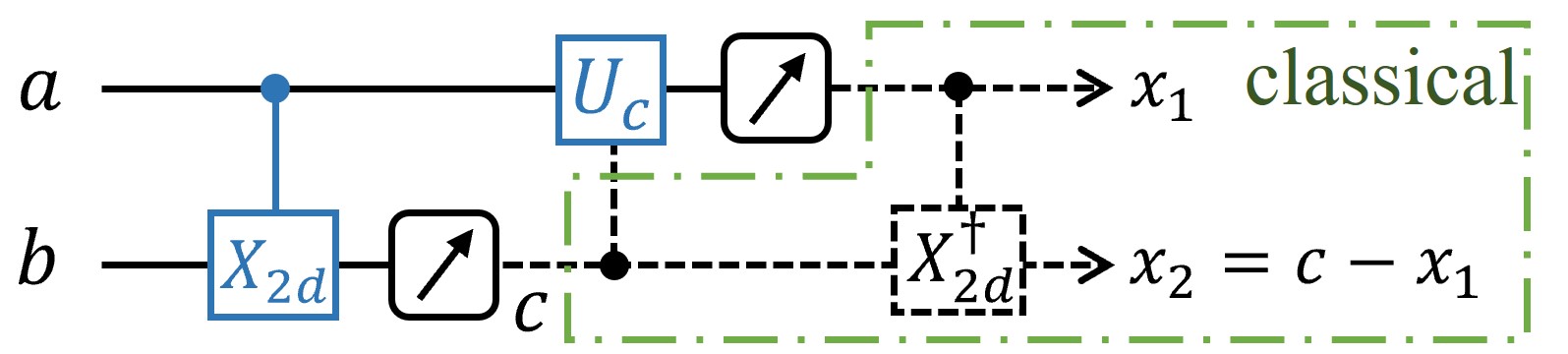}\\
\caption{Circuit compilation of $\mathcal{R}$ in the single-qudit case.
}
\label{fig:quditR}
\end{figure}

\red{After obtaining $c$, we set $q_{1}:=\max\{0,c-d+1\}$ and $q_{2}:=\min\{c,d-1\}$, so that the allowed values of $a$ with $a+b=c$ are $a\in\{q_1,\dots,q_2\}$.}
The form of the unitary $U_c$ controlled by the measurement result $c$ depends on the parity of $c$.
For odd $c$, we have
\begin{align}
U_c=  &\sum_{j=q_1}^{\frac{c-1}{2}} \left( \frac{|j\>+|c-j\>}{\sqrt{2}}\right) \<j| 
+ \sum_{j=\frac{c+1}{2}}^{q_2} \left( \frac{|c-j\>-|j\>}{\sqrt{2}}\right) \<j| + \sum_{j<q_1, j>q_2} \ketbra{j}{j}.
\end{align}
For even $c$, we have
\begin{align}
U_c=  &\sum_{j=q_1}^{\frac{c}{2}-1} \left( \frac{|j\>+|c-j\>}{\sqrt{2}}\right) \<j| 
+ \ketbra{\frac{c}{2}}
+ \sum_{j=\frac{c}{2}+1}^{q_2} \left( \frac{|c-j\>-|j\>}{\sqrt{2}}\right) \<j| + \sum_{j<q_1, j>q_2} \ketbra{j}{j}. 
\end{align}
\red{The index ranges are algebraically consistent: with $q_{1}:=\max\{0,c-d+1\}$ and $q_{2}:=\min\{c,d-1\}$, one always has $q_2\geq q_1$ for every $c\in{0,\dots,2d-2}$. Hence, $U_c$ is well-defined.} Besides, $U_c$ only introduces some superposition on selected pairs of computational bases of a single-qudit, and we remark such operations could be realized, for example, in optical systems by path encoding.
Finally, we measure the first replica in the computational basis $\{|i\>\}_{i=0}^{d-1}$, and denote the measurement result as $x_1$. By classical post-processing, we let $x_1$ and $x_2=c-x_1$ be the final output of the first and second replica, respectively. 

\red{The procedure is illustrated in \cref{fig:quditR}: we expand the second replica to $2d$ levels, perform controlled-$X_{2d}$, measure the second replica to obtain $c$, apply $U_c$ on the first replica, and finally measure the first replica. The same idea can be generalized to higher $t$, where class decomposition and conditional block unitaries become more involved.} 
\red{Note that the $2d$-expansion is an enlargement of the local levels of the existing second replica. In practice, many platforms allow access to higher local levels (or multiple paths/modes) and thus implementing a $2d$-level system can be experimentally more convenient than introducing and controlling an additional qudit. Examples include higher internal or motional levels in trapped ions, multi-path/multi-mode encodings in photonics, and higher excited levels in superconducting circuits. }

\subsubsection{Verification}
To verify that our circuit indeed implements the desired unitary $\mc{R}$ and the subsequent measurement, let us consider the evolution of 
the 2-qudit joint system. 
The initial state
of the joint system can be expressed as 
$\ket{\Phi_0}=\sum_{a,b=0}^{d-1} t_{a,b}\ket{ab}$. 
After the $CX$ operation, the joint state evolves into
$\ket{\Phi_1}=\sum_{a,b=0}^{d-1} t_{a,b}\ket{a}\ket{a+b}$. 
After measuring the second qudit, depending on the measurement outcome $c$, the reduced state on the first replica reads 
\begin{align}
\ket{\Phi_2}=\frac{(\id\otimes \bra{c})\ket{\Phi_1}}{\sqrt{\Pr_2(c)}}
= \frac{1}{\sqrt{\Pr_2(c)}} \sum_{j=q_1}^{q_2} t_{j,c-j} \ket{j}, 
\end{align}
where
\begin{align}
{\Pr}_2(c)= \bra{\Phi_1}\left(\id\otimes \ketbra{c}\right)\ket{\Phi_1}
= \sum_{j=q_1}^{q_2} \left| t_{j,c-j} \right|^2
\end{align}
is the probability of obtaining outcome $c$. Next, we consider two cases depending on the parity of $c$. 
\begin{itemize}
\item[1.] The measurement outcome $c$ is odd. \\
After the unitary evolution $U_c$, the state on the first replica reads 
\begin{align}
\ket{\Phi_3}=U_c\ket{\Phi_2}
= \frac{1}{\sqrt{\Pr_2(c)}} \left( \sum_{j=q_1}^{\frac{c-1}{2}} t_{j,c-j} \frac{|j\>+|c-j\>}{\sqrt{2}}
+\sum_{j=\frac{c+1}{2}}^{q_2} t_{j,c-j} \frac{|c-j\>-|j\>}{\sqrt{2}} \right). 
\end{align}
If we measure the first replica at this point, then the conditional probability of obtaining outcome $x_1$ is given by 
\begin{align}\label{eq:Pr1x1odd}
{\Pr}_1(x_1|c)= \left|\<x_1\!\ket{\Phi_3} \right|^2
=\frac{1}{\Pr_2(c)}\times \left\{
\begin{array}{lll}
0  &\  \text{when $x_1<q_1$ or $x_1>q_2$}, 
\\ [0.5ex]
\frac{1}{2}\left| t_{c-x_1,x_1}+t_{x_1,c-x_1} \right|^2 &\  \text{when $q_1\leq x_1\leq \frac{c-1}{2}$} ,
\\ [0.5ex]
\frac{1}{2}\left| t_{c-x_1,x_1}-t_{x_1,c-x_1}\right|^2 &\  \text{when $\frac{c+1}{2}\leq x_1\leq q_2$} . 
\end{array}
\right.
\end{align}

\item[2.] The measurement outcome $c$ is even. \\
After the unitary evolution $U_c$, the state on the first replica reads 
\begin{align}
\ket{\Phi_3}=U_c\ket{\Phi_2}
= \frac{1}{\sqrt{\Pr_2(c)}} \left( \sum_{j=q_1}^{\frac{c}{2}-1} t_{j,c-j} \frac{|j\>+|c-j\>}{\sqrt{2}}
+t_{\frac{c}{2},\frac{c}{2}} \ket{\frac{c}{2}}
+\sum_{j=\frac{c}{2}+1}^{q_2} t_{j,c-j} \frac{|c-j\>-|j\>}{\sqrt{2}} \right).
\end{align}  
If we measure the first replica at this point, then the conditional probability of obtaining outcome $x_1$ is given by 
\begin{align}\label{eq:Pr1x1even}
{\Pr}_1(x_1|c)= \left|\<x_1\!\ket{\Phi_3} \right|^2
=\frac{1}{\Pr_2(c)}\times \left\{
\begin{array}{lll}
0  &\  \text{when $x_1<q_1$ or $x_1>q_2$}, 
\\ [0.5ex]
| t_{\frac{c}{2},\frac{c}{2}} |^2  &\  \text{when $x_1=\frac{c}{2}$}, 
\\ [0.5ex]
\frac{1}{2}\left| t_{c-x_1,x_1}+t_{x_1,c-x_1} \right|^2 &\  \text{when $q_1\leq x_1\leq \frac{c}{2}-1$} ,
\\ [0.5ex]
\frac{1}{2}\left| t_{c-x_1,x_1}-t_{x_1,c-x_1}\right|^2 &\  \text{when $\frac{c}{2}+1\leq x_1\leq q_2$} . 
\end{array}
\right.
\end{align} 
\end{itemize} 
Recall that the final result for the second qudit is $x_2=c-x_1$. 
According to Eqs.~\eqref{eq:Pr1x1odd} and \eqref{eq:Pr1x1even}, for our circuit, 
the overall probability of obtaining outcomes $x_1$ on the first qudit and $x_2$ on the second qudit reads 
\begin{align}\label{eq:Prx1x2} 
\Pr(x_1,x_2)
&={\Pr}_1(x_1|c=x_1+x_2)
\,{\Pr}_2(c=x_1+x_2)
=\left\{
\begin{array}{lll}
| t_{x_1,x_1} |^2  &\  \text{when $x_1=x_2$}, 
\\ [0.5ex]
\frac{1}{2} \left| t_{x_2,x_1}+t_{x_1,x_2} \right|^2 &\  \text{when $x_1< x_2$} ,
\\ [0.5ex]
\frac{1}{2} \left| t_{x_2,x_1}-t_{x_1,x_2}\right|^2 &\  \text{when $x_1> x_2$} . 
\end{array}
\right.
\end{align} 

On the other hand, if the initial state $\ket{\Phi_0}$ is evolved under $\mc{R}$, then  it becomes 
\begin{align}
\ket{\Phi_1'}=\mc{R}\ket{\Phi_0}
= \sum_{a=0}^{d-1} t_{a,a}\ket{aa} 
 +\sum_{a<b} t_{a,b}\frac{\ket{ba}+\ket{ab}}{\sqrt{2}}
 +\sum_{a>b} t_{a,b}\frac{\ket{ba}-\ket{ab}}{\sqrt{2}}. 
\end{align}
If we measure this state in the computational basis, then the probability of obtaining outcomes $x_1$ on the first qudit and $x_2$ on the second qudit reads 
\begin{align}
\left|\<x_1,x_2\!\ket{\Phi_1'} \right|^2
=\left\{
\begin{array}{lll}
| t_{x_1,x_1} |^2  &\  \text{when $x_1=x_2$}, 
\\ [0.5ex]
\frac{1}{2} \left| t_{x_2,x_1}+t_{x_1,x_2} \right|^2 &\  \text{when $x_1< x_2$} ,
\\ [0.5ex]
\frac{1}{2} \left| t_{x_2,x_1}-t_{x_1,x_2}\right|^2 &\  \text{when $x_1> x_2$} . 
\end{array}
\right.
\end{align}
This probability exactly equals the outcome probability of our circuit shown in \cref{eq:Prx1x2}. 
Due to linearity, this result also holds when the initial state of the 2-qudit system is mixed. 
Therefore, our circuit in \cref{fig:quditR} can indeed realize the desired unitary $\mc{R}$ and the subsequent measurement.

\subsection{Many-qubit case}
 
\if0
\subsubsection{Circuit with ancillary qubits}
\zyr{revise language}
For each qubit pair $a_i$ (from the first replica) and $b_i$ (from the second replica), we use an 
ancillary qubit $c_i$ to obtain their parity information, where $c_i$ is initially set as $\ket{0}$. 
This parity check is achieved by first applying a $CX_{a_i\rightarrow c_i}$ and a $CX_{b_i\rightarrow c_i}$, and then measuring $c_i$ in the computational basis. 
If the measurement result is $c_i=0$, then we know that $a_i=b_i$. 
In this case, we measure these two qubits directly in the computational basis. 
Otherwise, if the result is $c_i=1$, we conclude that $a_i\ne b_i$. Let $[l]:=\{i\,|\,c_i=1\}$. We need to conduct some global unitary $U_{[l]}$ around all $l$ qubit pairs in $[l]$ before performing measurement on them. 
The action of $U_{[l]}$ is given as 
\begin{align}\label{eq:U[l]}
U_{[l]}\left(\ket{\mb{a}_{[l]}}\ket{\bar{\mb{a}}_{[l]}}\right)  
= \begin{cases}
\frac{1}{\sqrt{2}}\left(\ket{\mb{a}_{[l]}}\ket{\bar{\mb{a}}_{[l]}}+ \ket{\bar{\mb{a}}_{[l]}}\ket{\mb{a}_{[l]}}\right)  
& \mb{a}_{[l]}<\bar{\mb{a}}_{[l]}, 
\\
\frac{1}{\sqrt{2}}\left(\ket{\bar{\mb{a}}_{[l]}}\ket{\mb{a}_{[l]}}-\ket{\mb{a}_{[l]}}\ket{\bar{\mb{a}}_{[l]}}\right)   
& \mb{a}_{[l]}>\bar{\mb{a}}_{[l]}.  
\end{cases}
\end{align}
That is, for all $2^l$ different bit string $\mb{a}_{[l]}$, the unitary $U_{[l]}$ needs to perform a Hadamard-like rotation in each  subspace $\mathcal{V}_{\mb{a}_{[l]}}={\rm span}\!\left\{\ket{\mb{a}_{[l]}}\ket{\bar{\mb{a}}_{[l]}}, \ket{\bar{\mb{a}}_{[l]}}\ket{\mb{a}_{[l]}}\right\}$.

Note that the condition $\mb{a}_{[l]}<\bar{\mb{a}}_{[l]}$ ($\mb{a}_{[l]}>\bar{\mb{a}}_{[l]}$) means that the first bit in $\mb{a}_{[l]}$ is 0 (1). Using this fact, one can easily verify that \cref{eq:U[l]} can be satisfied if we choose 
\begin{align}\label{eq:LHadamard}
U_{[l]}=\frac1{\sqrt{2}}\left(X^{\otimes l}\otimes X^{\otimes l}+Z_1\right),  
\end{align}
where $Z_1$ is the Pauli-$Z$ operator acting on the first qubit. 

The remaining question is how to realize the unitary $U_{[l]}$ in \cref{eq:LHadamard} using elementary quantum gates. Since $U_{[l]}$ is Clifford operator, 
it suffices to find another Clifford unitary $\tilde U$ such that $\tilde U H_1 \tilde U^{\dag}=U_{[l]}$, where $H_1=(X_1+Z_1)/\sqrt{2}$ is the Hadamard gate on the first qubit. One can check that the goal can be achieved by constructing $\tilde U$ from a series of $CX$ gates, i.e.,  
\begin{align}\label{eq:ClUinfo1}
\begin{aligned}
\tilde U=CX_{l-1\rightarrow l}\, CX_{l-2\rightarrow l-1}\cdots CX_{1\rightarrow 2}
\end{aligned}
\end{align}
The gate synthesis of $U_{[l]}$
is illustrated  in \cref{fig:CLU-Wancila}.

With all these at hand, we give the final circuit (with auxiliaries for parity check) in \cref{fig:2qwithC}. Some remarks are as follows. First, the CX gates after $H_1$ and before the measurement can be substituted by classical post-processing. Second, you can see \orange{(too oral...)} the wire ($b_1$ and $b_2$ just counts the parity again) are redundant. The only meaningful wires are $a_1$ and $a_2$, which finally goes a $H$ and becomes $b_1+a_2(\bar{a_1}+a_2)$, respectively. These, especially the second one gives some hint to kill the auxiliaries!
\fi

\subsubsection{Circuit compilation}
Here, we give more explanations on our circuit compilation of $\mc{R}$ in the many-qubit case. 
We use $\mb{a}$ and $\mb{b}$ to label the $n$-qubit basis for the first and second replicas, respectively. As mentioned in END MATTER, 
the action of $\mc{R}$ on $|\mb{a}\>|\mb{b}\>$ is as follows: if $a^i=b^i$, $\mc{R}$ does nothing; and for the rest of qubits in $[l]:=\{i\,|\,c^i=1\}$ with $a^i\ne b^i$, $\mc{R}$ creates a GHZ-like `large' superposition in the subspace $\mathcal{V}_{\mb{a}^{[l]}}={\rm span}\!\left(\ket{\mb{a}^{[l]}}\ket{\overline{\mb{a}^{[l]}}}, \ket{\overline{\mb{a}^{[l]}}}\ket{\mb{a}^{[l]}}\right)$.

Following the circuit compilation in the single-qubit case, our circuit for $\mc{R}$ in the many-qubit case is designed as follows; see \cref{fig:noancilla1}~(a) for an illustration.  
Still, we use qubits of the second replica to check the parity information between the two replicas. 
This is again achieved by first applying a $CX_{a^i\rightarrow b^i}$ for each qubit pair  $(a^i, b^i)$, 
and then measuring the second qubit in each pair in the computational basis. 
If the measurement result is $c^i=0$, then we know that $a^i=b^i$. 
Otherwise, if the measurement result is $c^i=1$, we conclude that $a^i\ne b^i$. Notice that  
the measurement effectively projects (partially decoherence) the qubit pairs in $[l]$ into the following subspace
\begin{align}
\mathcal{V}_{[l]}=\bigoplus_{\mb{a}^{[l]}\in \{0,1\}^l }\mathcal{V}_{\mb{a}^{[l]}}.  
\end{align}
Hence, after the measurement, one can then create the desired superposition on each $\mathcal{V}_{\mb{a}^{[l]}}$ by performing a Hadamard-like  transformation $U_{[l]}'$ on all qubits $\{a^i\}_{i\in[l]}$. 
The action of $U_{[l]}'$ is given as  
\begin{align}\label{eq:U[l]'}
U_{[l]}'\ket{\mb{a}^{[l]}}
= \begin{cases}
\frac{1}{\sqrt{2}}\left(\ket{\mb{a}^{[l]}} + \ket{\overline{\mb{a}^{[l]}}}\right)  
&\quad \mb{a}^{[l]}<\overline{\mb{a}^{[l]}}, 
\\[1ex]
\frac{1}{\sqrt{2}}\left(\ket{\overline{\mb{a}^{[l]}}} -\ket{\mb{a}^{[l]}}\right)   
&\quad \mb{a}^{[l]}>\overline{\mb{a}^{[l]}}.  
\end{cases}
\end{align}
At last, we measure all $n$ qubits of the first replica in the computational basis. 
As shown in \cref{fig:noancilla1}~(a), 
for the $i$-th qubit, we denote the measurement result as $x_1^{i}$. In classical post-processing, we let $x_1^{i}$ and $x_2^{i}=c^i\oplus x_1^{i}$ be the final output of the $i$-th qubit pair.

\begin{figure}[htbp]
\centering
\includegraphics[width=0.75\textwidth]{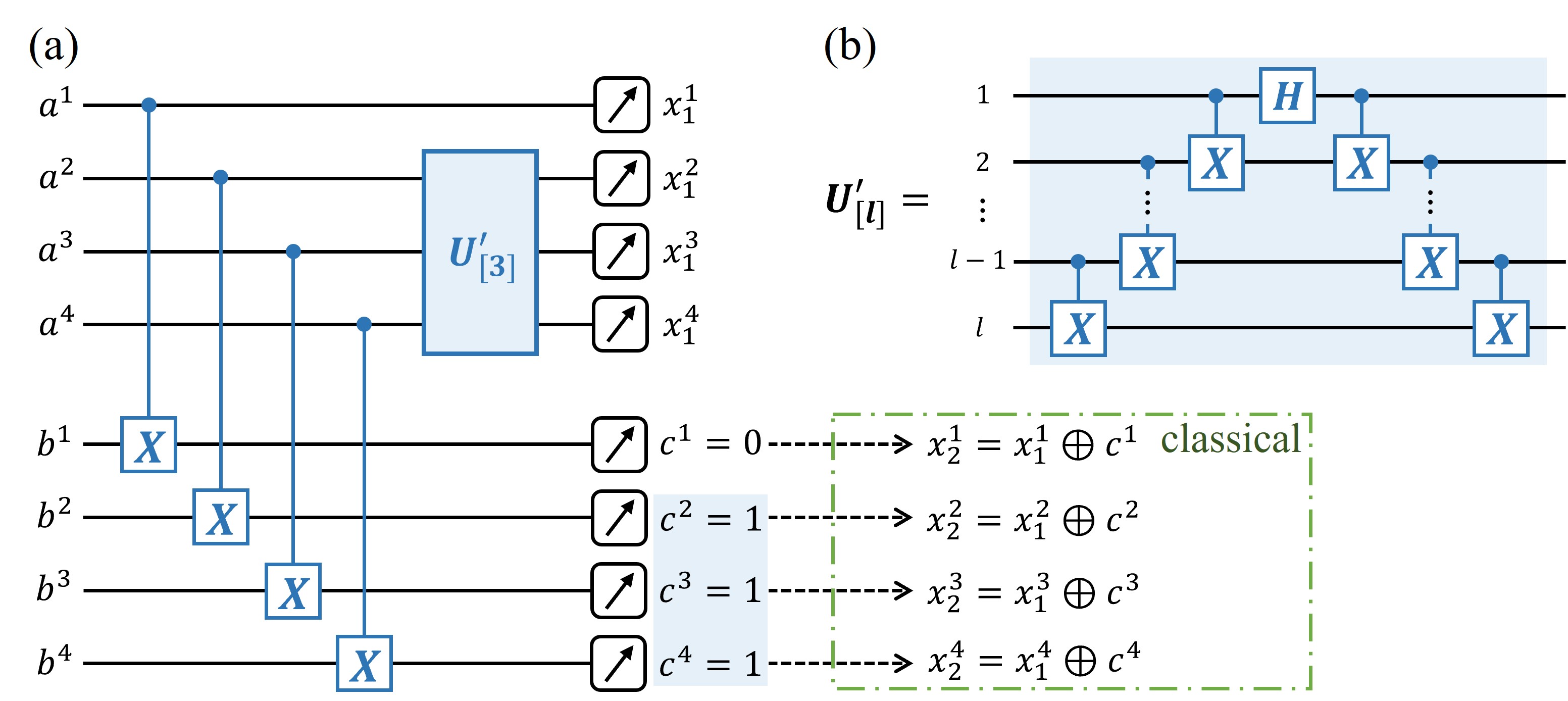}
\caption{\justifying{(a) Circuit compilation of $\mathcal{R}$ in the many-qubit case.  
Here, we take $n=4$, and assume that
the parity check $c^1=0$ and $c^i=1$ for $i\in\{2,3,4\}$. (b) Gate synthesis of $U_{[l]}'$.
}}
\label{fig:noancilla1}
\end{figure}

The remaining task is to decompose the unitary $U_{[l]}'$ into elementary quantum gates. To this end, we consider the subspace ${\rm span}\!\left(\ket{\mb{a}^{[l]}}, \ket{\overline{\mb{a}^{[l]}}}\right)$ as a code space, where $\ket{0_L}=\ket{\mb{a}^{[l]}}$ and $\ket{1_L}=\ket{\overline{\mb{a}^{[l]}}}$ are the logical 0 and 1 states, respectively (here we assume $\mb{a}^{[l]}<\overline{\mb{a}^{[l]}}$ without lose of generality). 
Recall that the condition $\mb{a}^{[l]}<\overline{\mb{a}^{[l]}}$ means that the first bit in the string $\mb{a}^{[l]}$ $\left(\overline{\mb{a}^{[l]}}\right)$ is 0 (1). Using this fact, one can easily verify that the operations $X^{\otimes l}$ and $Z_1$ act as logical $X$ and $Z$ gates on the code space, where $Z_1$ denotes the Pauli-$Z$ operator on the first qubit. According to \cref{eq:U[l]'}, the unitary $U_{[l]}'$ acts as a logical Hadamard operation on the code space. Therefore, it can be chosen as the following form, 
\begin{align}\label{eq:LHadamard}
U_{[l]}'=\frac1{\sqrt{2}}\left(X^{\otimes l}+Z_1\right),  
\end{align}
which is a Clifford operator. Now it suffices to find another Clifford unitary $\tilde U_{[l]}$ such that $\tilde U_{[l]} H_1 \tilde U_{[l]}^{\dag}=U_{[l]}'$, where $H_1=(X_1+Z_1)/\sqrt{2}$ is the Hadamard gate on the first qubit. One can check that the goal can be achieved by constructing $\tilde U_{[l]}$ from a series of $CX$ gates, that is,  
\begin{align}\label{eq:ClUinfo1}
\begin{aligned}
\tilde U_{[l]}=CX_{l-1\rightarrow l}\, CX_{l-2\rightarrow l-1}\cdots CX_{1\rightarrow 2}. 
\end{aligned}
\end{align}
With this construction, the gate synthesis of $U_{[l]}'$ is shown in \cref{fig:noancilla1}~(b).

By plugging the gate synthesis of $U_{[l]}'$ into the circuit of \cref{fig:noancilla1}~(a), we are very close to our final circuit. 
It worth noting that the $CX$ gates after $H_1$ and before the final measurement can be substituted by classical post-processing.
The final circuit after this simplification is shown in 
Fig.~4~(c) of the END MATTER.

\subsubsection{Verification}
To verify that our circuit constructed in the previous section indeed implements the desired unitary $\mc{R}$ and the subsequent measurement, let us consider the evolution of 
the $2n$-qubit joint system. 
The initial state
of the joint system can be expressed as 
$\ket{\Phi_0}=\sum_{\textbf{a},\textbf{b}\in\{0,1\}^n}t_{\textbf{a},\textbf{b}} \ket{\textbf{a}}_1\ket{\textbf{b}}_2$. 
After the $CX_{a_i\rightarrow b_i}$ operations, the joint state evolves into
$\ket{\Phi_1}=\sum_{\textbf{a},\textbf{b}\in\{0,1\}^n}t_{\textbf{a},\textbf{b}} \ket{\textbf{a}}_1\ket{\textbf{a}\oplus\textbf{b}}_2$.  
After measuring the $n$ qubits of the second replica, depending on the measurement outcome $\mb{c}\in\{0,1\}^n$, the reduced state on the first replica reads 
\begin{align}
\ket{\Phi_2^{\mb{c}}}=\frac{(\id\otimes \bra{\mb{c}})\ket{\Phi_1}}{\sqrt{\Pr_2(\mb{c})}}
= \frac{1}{\sqrt{\Pr_2(\mb{c})}} \sum_{\textbf{a},\textbf{b}\in\{0,1\}^n}t_{\textbf{a},\textbf{b}}\, \delta_{\textbf{c},\textbf{a}\oplus\textbf{b}}\ket{\textbf{a}} 
= \frac{1}{\sqrt{\Pr_2(\mb{c})}} \sum_{\textbf{a}\in\{0,1\}^n}t_{\textbf{a},\textbf{a}\oplus\textbf{c}}\ket{\textbf{a}} , 
\end{align}
where
\begin{align}
{\Pr}_2(\mb{c})= \bra{\Phi_1}\left(\id\otimes \ketbra{\mb{c}}\right)\ket{\Phi_1}
= \sum_{\textbf{a}\in\{0,1\}^n}\left|t_{\textbf{a},\textbf{a}\oplus\textbf{c}}\right|^2 
\end{align}
is the probability of obtaining outcome $\mb{c}$.
Next, we consider two cases depending on the outcome $\mb{c}$. 
\begin{itemize}
\item[1.] The outcome $\mb{c}=\mb{0}$.\\
In this case, we then measure $\ket{\Phi_2^{\mb{c}}}$ directly in the computational basis. The conditional probability of obtaining outcome $\mb{x}\in\{0,1\}^n$ is given by
\begin{align}\label{eq:Pr1xcis0}
{\Pr}_1(\mb{x}|\mb{c}=\mb{0})
=\left|\<\mb{x}\!\ket{\Phi_2^\mb{c}} \right|^2
= \left|\frac{1}{\sqrt{\Pr_2(\mb{c})}} \sum_{\textbf{a}\in\{0,1\}^n}t_{\textbf{a},\textbf{a}}\delta_{\mb{a},\mb{x}}\right|^2
= \frac{\left|t_{\mb{x},\mb{x}}\right|^2}{\Pr_2(\mb{c})}. 
\end{align}

\item[2.] The outcome $\mb{c}\ne\mb{0}$.\\
Let $[k]:=\{i\,|\,c^i=0\}$ and $[l]:=\{i\,|\,c^i=1\}$ with $k+l=n$. For any bit string $\textbf{a}\in\{0,1\}^n$, we write $\mb{a}=\mb{a}^{[l]}\mb{a}^{[k]}$. 
In this case, the unitary $U_{[l]}'$ is applied on the qubits of $\ket{\Phi_2^{\mb{c}}}$ in $[l]$, which evolves the state into  
\begin{align}
\ket{\Phi_3^{\mb{c}}}
&=
\left(\id_{[k]}\otimes U_{[l]}'\right)\ket{\Phi_2^{\mb{c}}}
\nonumber\\&= 
\frac{1}{\sqrt{\Pr_2(\mb{c})}} \sum_{\textbf{a}\in\{0,1\}^n}t_{\textbf{a},\textbf{a}\oplus\textbf{c}}
\ket{\textbf{a}^{[k]}}\otimes \left(U_{[l]}'\ket{\textbf{a}^{[l]}}\right)
\nonumber\\&=
\frac{1}{\sqrt{\Pr_2(\mb{c})}} 
\sum_{ \mb{a}^{[k]}\in\{0,1\}^{n-l}}
\ket{\textbf{a}^{[k]}}\otimes
\left[
\sum_{ \mb{a}^{[l]}|\, \mb{a}^{[l]}< \overline{\mb{a}^{[l]} }}
t_{\textbf{a},\textbf{a}\oplus\textbf{c}}\, 
\frac{\ket{\mb{a}^{[l]}} + \ket{\overline{\mb{a}^{[l]}}}}{\sqrt{2}} 
+
\sum_{ \mb{a}^{[l]}|\, \mb{a}^{[l]}> \overline{\mb{a}^{[l]} }}
t_{\textbf{a},\textbf{a}\oplus\textbf{c}}\, 
\frac{\ket{\overline{\mb{a}^{[l]}}} -\ket{\mb{a}^{[l]}}}{\sqrt{2}} 
\right]. 
\end{align}  
If we measure the first replica at this point, then the conditional probability of obtaining outcome $\mb{x}\in\{0,1\}^n$ is given by
\begin{align}\label{eq:Pr1xcne0}
{\Pr}_1(\mb{x}|\mb{c})
= \left|\<\mb{x}\!\ket{\Phi_3^{\mb{c}}} \right|^2
&=\frac{1}{\Pr_2(\mb{c})}\times \left\{
\begin{array}{lll}
\frac{1}{2}
\left| 
t_{\overline{\mb{x}^{[l]}}\mb{x}^{[k]},\mb{x}^{[l]}\mb{x}^{[k]}} 
+ t_{\mb{x}^{[l]}\mb{x}^{[k]},\overline{\mb{x}^{[l]}}\mb{x}^{[k]}}\right|^2
&\  \text{when $\mb{x}^{[l]}< \overline{\mb{x}^{[l]} }$}, 
\\ [1.2ex]
\frac{1}{2}
\left| 
t_{\overline{\mb{x}^{[l]}}\mb{x}^{[k]},\mb{x}^{[l]}\mb{x}^{[k]}} 
-t_{\mb{x}^{[l]}\mb{x}^{[k]},\overline{\mb{x}^{[l]}}\mb{x}^{[k]}} \right|^2
&\  \text{when $\mb{x}^{[l]}> \overline{\mb{x}^{[l]}} $}, 
\end{array}
\right.
\nonumber\\[0.7ex]
&=\frac{1}{\Pr_2(\mb{c})}\times \left\{
\begin{array}{lll}
\frac{1}{2}
\left| t_{\mb{x}\oplus\mb{c},\mb{x}} 
+ t_{\mb{x},\mb{x}\oplus\mb{c}}\right|^2
&\  \text{when $\mb{x}<\mb{x}\oplus\mb{c} $}, 
\\ [0.7ex]
\frac{1}{2}
\left| t_{\mb{x}\oplus\mb{c},\mb{x}} 
- t_{\mb{x},\mb{x}\oplus\mb{c}}\right|^2
&\  \text{when $\mb{x}>\mb{x}\oplus\mb{c}$}, 
\end{array}
\right.
\end{align} 
where we have used the relation $\mb{x}\oplus\mb{c}=\overline{\mb{x}^{[l]}}\mb{x}^{[k]}$. 
\end{itemize}
Recall that the final output result for the second replica is $\mb{x}\oplus\mb{c}$ if the measurement outcome of the first replica is $\mb{x}$.
According to Eqs.~\eqref{eq:Pr1xcis0} and \eqref{eq:Pr1xcne0}, for our
circuit, the overall probability of obtaining outcomes $\mb{x}_1\in\{0,1\}^n$ on the first replica and $\mb{x}_2\in\{0,1\}^n$ on the second replica reads
\begin{align}\label{eq:Prx1x2nqubit} 
\Pr(\mb{x}_1,\mb{x}_2)
&={\Pr}_1(\mb{x}_1|\mb{c}=\mb{x}_1\oplus\mb{x}_2)
\,{\Pr}_2(\mb{c}=\mb{x}_1\oplus\mb{x}_2)
=\left\{
\begin{array}{lll}
\left|t_{\mb{x}_1,\mb{x}_1}\right|^2.   &\  \text{when $\mb{x}_1=\mb{x}_2$}, 
\\ [0.8ex]
\frac{1}{2} \left| t_{\mb{x}_2,\mb{x}_1}+t_{\mb{x}_1,\mb{x}_2} \right|^2 &\  \text{when $\mb{x}_1<\mb{x}_2$} ,
\\ [0.8ex]
\frac{1}{2} \left| t_{\mb{x}_2,\mb{x}_1}-t_{\mb{x}_1,\mb{x}_2}\right|^2 &\  \text{when $\mb{x}_1>\mb{x}_2$} . 
\end{array}
\right.
\end{align}

On the other hand, if the initial state $\ket{\Phi_0}$ is evolved under $\mc{R}$, then  it becomes 
\begin{align}
\ket{\Phi_1'}=\mc{R}\ket{\Phi_0}
= \sum_{\mb{a}\in\{0,1\}^n} t_{\mb{a},\mb{a}} \ket{\mb{a}}_1 \ket{\mb{a}}_2 
+\sum_{\mb{a},\mb{b}|\,\mb{a}<\mb{b}} t_{\mb{a},\mb{b}}\frac{\ket{\mb{b}}_1\ket{\mb{a}}_2+\ket{\mb{a}}_1\ket{\mb{b}}_2}{\sqrt{2}}
+\sum_{\mb{a},\mb{b}|\,\mb{a}>\mb{b}} t_{\mb{a},\mb{b}}\frac{\ket{\mb{b}}_1\ket{\mb{a}}_2-\ket{\mb{a}}_1\ket{\mb{b}}_2}{\sqrt{2}}. 
\end{align}
If we measure this state in the computational basis, then the probability of obtaining outcomes $\mb{x}_1\in\{0,1\}^n$ on the first replica and $\mb{x}_2\in\{0,1\}^n$ on the second replica reads 
\begin{align}
\left|(\bra{\mb{x}_1}\otimes\bra{\mb{x}_2})\!\ket{\Phi_1'} \right|^2
=\left\{
\begin{array}{lll}
\left|t_{\mb{x}_1,\mb{x}_1}\right|^2.   &\  \text{when $\mb{x}_1=\mb{x}_2$}, 
\\ [0.8ex]
\frac{1}{2} \left| t_{\mb{x}_2,\mb{x}_1}+t_{\mb{x}_1,\mb{x}_2} \right|^2 &\  \text{when $\mb{x}_1<\mb{x}_2$} ,
\\ [0.8ex]
\frac{1}{2} \left| t_{\mb{x}_2,\mb{x}_1}-t_{\mb{x}_1,\mb{x}_2}\right|^2 &\  \text{when $\mb{x}_1>\mb{x}_2$} . 
\end{array}
\right.
\end{align}
This probability exactly equals the outcome probability of our circuit shown in \cref{eq:Prx1x2nqubit}. 
Due to linearity, this result also hold when the initial state of the $2n$-qubit system is mixed. 
Therefore, our circuit can indeed realize the desired unitary $\mc{R}$ and the subsequent measurement in the many qubit case. 

\subsection{\red{Single-qubit case for $t=3$}}
\red{
As discussed in the main text, the joint unitary  $\mc{R}$ of Eq.~(9) is defined to simultaneously diagonalize the cyclic permutation $S_t$ and the projectors $O_{\mb{b}}$. 
For the case $t=2$, this reduces to the symmetric/antisymmetric decomposition, which admits a Clifford-only circuit implementation as demonstrated in the previous subsection. 
However, for $t>2$, the eigenvalues of the cyclic shift are complex roots of unity, $\exp(-2\pi\mathrm{i}k/|[\mb{z}]|)$, and thus the exact compilation of $\mc{R}$ may involve non-Clifford rotations. Systematic synthesis of such unitaries is a highly valuable open problem, and we regard it as an important direction for future work.

\begin{figure}[htbp]
\centering
\includegraphics[width=0.55\textwidth]{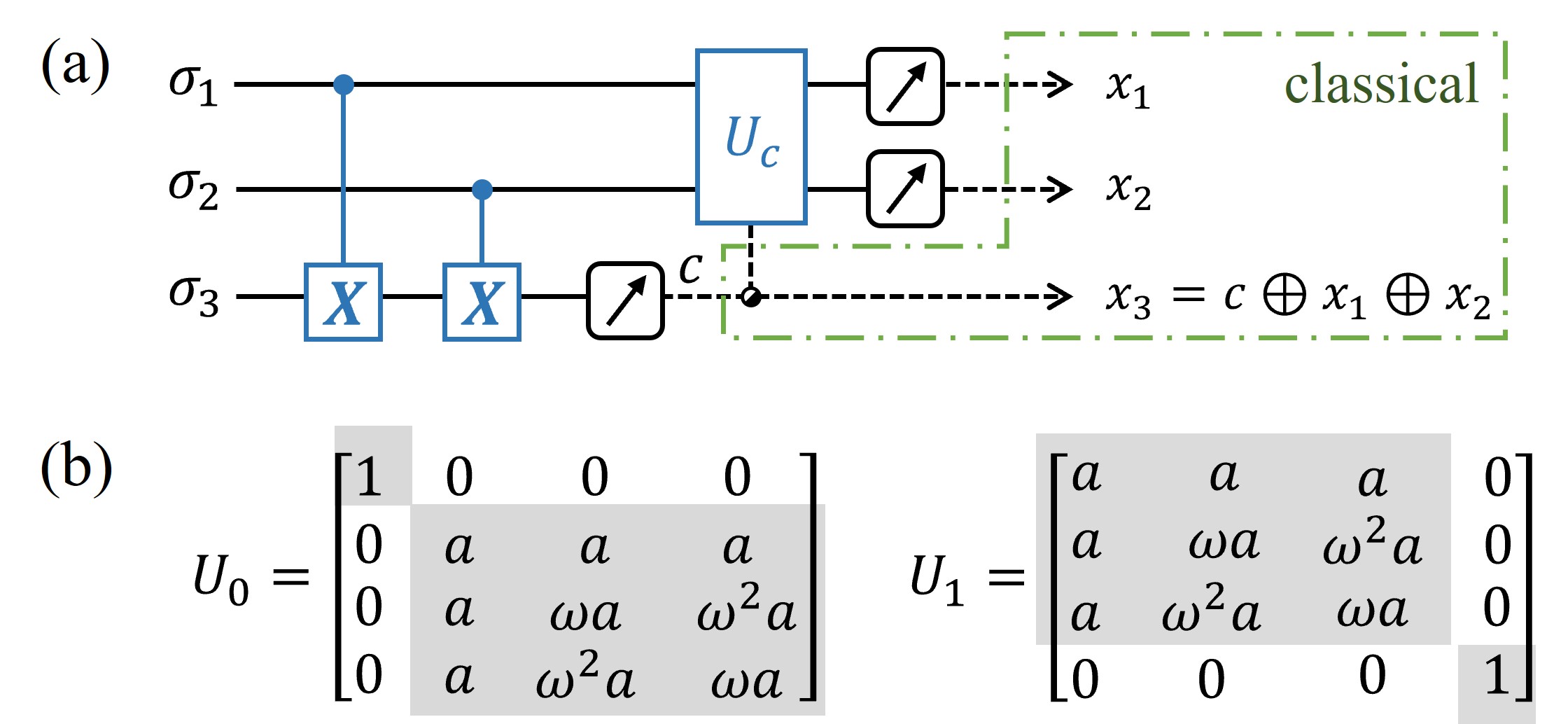}
\caption{\red{\justifying{(a) Circuit implementation of the joint unitary $\mc{R}$ for the single-qubit, $t=3$ case. Two $CX$ gates map the parity of the first two replicas onto the third one, producing the classical outcome $c$, which selects the block unitary $U_c$ acting on the first two replicas. The measurement results $\{x_1, x_2\}$ are obtained directly, while $x_3$ is classically reconstructed as $x_3=c\oplus x_1\oplus x_2$. (b) Matrix form of $U_c$ for $c=0,1$. Each $U_c$ contains a $1\times 1$ identity block and a $3\times 3$ Fourier-transform-like block with non-Clifford phases. Here, $a=1/\sqrt{3}$ and $\omega=\exp(-2\pi\mathrm{i}/3)$. 
}}}
\label{fig:t3}
\end{figure}

As a first step, here we present the compilation for the smallest nontrivial instance: 
the $t=3$ replicas associated with a single qubit of $\rho$.
The explicit circuit is shown in \cref{fig:t3}~(a). 
Two $CX$ gates are applied to the third replica to extract a parity-check-like bit $c$. Depending on the measurement outcome $c$, we apply a corresponding two-qubit unitary $U_c$ on the first two replicas. The explicit forms of $U_0$ and $U_1$, as shown in \cref{fig:t3}~(b), are derived directly from the block structure of the $\mc{R}$ matrix. Each $U_c$ is block diagonal: it contains a $1\times 1$ identity block and a $3\times 3$ Fourier-transform-like block acting on the three elements within a class. This $3\times 3$ block carries the non-Clifford phases $\exp(\pm 2\pi \mathrm{i}/3)$, making explicit where non-Clifford resources first appear in exact compilation.
After applying $U_c$, one measures the first two replicas to obtain outcomes $\{x_1, x_2\}$. The outcome of the third replica is then constructed classically as 
\begin{equation}
    x_3=c\oplus x_1\oplus x_2,
\end{equation}
where $\oplus$ denotes XOR (i.e., a modulo-$2$ addition). Finally, following the mapping strategy $\mc{G}_{\mb{x}\to\mb{b}}$ described in Lemma 1 of the End Matter, one obtains the post-processed string $\mb{b}$ and the correction factor $f(\mb{x})$ for the construction of $\widehat{\rho^3}$ defined in Theorem 1 of the main text.

We note that this parity-check structure is conceptually similar to the $t=2$ case: one first projects (partially decoheres) into the relevant subspace, then applies a unitary within the subspace (Hadamard gate $H$ for $t=2$ and block unitary $U_c$ here).
Although this explicit construction is limited to the single-qubit, $t=3$ case, it illustrates several important features: (\romannumeral1) the emergence of Fourier-type blocks in $\mc{R}$, (\romannumeral2) the potential need for non-Clifford gates beyond $t=2$, and (\romannumeral3) the possibility of combining parity-check-like measurements with conditional block unitaries to reduce implementation overhead. 
Notably, such small-scale constructions can be directly applied with local-AFRS, making them practically useful even for multi-qubit states.
We believe this example provides useful hints and guiding structure for constructing more general realizations of $\mc{R}$ at larger system sizes and higher replica orders.
In addition to explicit construction, larger-$t$ realizations can be pursued along other practical methods. One method is variational quantum compilation \cite{Khatri2019quantumassisted,sim2019expressibility,mitarai2018quantum}, where a hardware-efficient parameterized circuit is optimized to approximate the known matrix form of $\mc{R}$ with controllable approximation error; an alternative is the hybrid strategy \cite{zhou2024hybrid} that assembles unbiased estimators for $\tr(\rho^m)$ from repeated, efficient $t=2$ AFRS estimators (e.g., $\widehat{P_m}=\tr(S_L\widehat{\rho^{m_1}}\otimes\cdots\widehat{\rho^{m_L}})$ with $m=\sum_i m_i$), thereby avoiding heavy non-Clifford compilation at the expense of a milder variance scaling $\mc{O}(d^L)$. In this work we do not further elaborate on these alternative implementations; we leave detailed exploration and platform-specific optimization to future studies.
}


\section{More information about the AFRS protocol}\label{SM:AFRSmore}

\subsection{\red{Computational cost in the classical post-processing stage}}\label{SM:postcost}
\red{
After each experiment of the AFRS protocol, one obtains a measurement outcome string $\mb{x}=\{\mb{x}_1 \mb{x}_2\cdots \mb{x}_t\}\in[d]^t$ ($d=2^n$ is the system dimension). 
In the subsequent classical post-processing stage, one needs to convert $\mb{x}$ to an $n$-bit string $\mb{b}\in[d]$ through the mapping strategy $\mc{G}_{\mb{x}\to\mb{b}}$, and calculate 
the correlation function $f(\mb{x})$.
In this subsection, we show that these operations can be 
efficiently performed on a classical computer with time complexity $\mathcal{O}(nt^2)$, where $n$ is the number of qubits and $t$ is the number of replicas. 
}
\subsubsection{\red{Sampling $\mathbf{b}$ via mapping strategy \texorpdfstring{$\mathcal{G}_{\mathbf{x}\to\mathbf{b}} = \Pr(\mathbf{b}|\mathbf{x})$}{}}}

\red{
Our mapping probability in Observation~1 of the main text is defined as $\Pr(\mathbf{{b}}|\mathbf{x}):= \bra{\psi_{\mathbf{x}}}Q_{\mathbf{b}}\ket{\psi_{\mathbf{x}}}$, where
$\{\ket{\psi_\mb{x}}\}_{\mathbf{x}}$ is a common eigenbasis of $S_t$ and $Q_\mathbf{b}$.  According to Lemma~1 in End Matter, we can choose $\ket{\psi_\mb{x}}$ to be the state $\left|\Psi_{[\mathbf{z}]}^{(k)}\right\rangle$ defined in Eq.~(8) of the End Matter, where $[\mb{z}]$ is the class to which $\mb{x}$ belongs, $k$ is the integer that satisfies $\tau^k(\mb{z})=\mb{x}$, and $\tau$ is the one-step cyclic permutation. 
In this case, the mapping probability can be expressed as 
\begin{equation}\label{eq:r8_8}
\Pr(\mathbf{{b}}|\mathbf{x})= \left\langle\Psi_{[\mathbf{z}]}^{(k)}\right| Q_{\mathbf{b}}\left|\Psi_{[\mathbf{z}]}^{(k)}\right\rangle=\frac{\text{the number of }\mathbf{b}\ \text{in the string}\ \mathbf{x}}{t}=\frac{1}{t}\sum_{i=1}^{t}\delta_{\mathbf{b},\mathbf{x}_i}, 
\end{equation}
where the second equality follows because $\tau^k(\mb{z})=\mb{x}$ and $\ket{\Psi_{[\mb{z}]}^{(k)}}$ is an eigenstate of $Q_{\mb{b}}$ with eigenvalue
$t^{-1}\times$(\textit{the number of $\mathbf{b}$ in the string $\mathbf{z}$}) (see Lemma~1 in End Matter).

Therefore, the procedure (the practical method introduced in End Matter) of obtaining the sampling result $\mathbf{b}\sim\Pr(\mathbf{b}|\mathbf{x})$ is given as follows: 
\begin{itemize}
\item Input the measurement outcome string $\mathbf{x}=\{\mathbf{x}_1 \mathbf{x}_2\dots\mathbf{x}_t\}$.
\item Select an index $i$ from $\{1,2,\dots,t\}$ uniformly at random. 
\item Output $\mathbf{b}=\mathbf{x}_i$.
\end{itemize}
In short, the mapping strategy $\mathcal{G}_{\mb{x}\to \mb{b}}$ simply consists of randomly selecting $\mb{b}$ from $\{\mb{x}_1, \mb{x}_2, \ldots, \mb{x}_t\}$ with a uniform probability of $1/t$ (as stated in the End Matter). This process requires no knowledge of $[\mathbf{z}]$ or $k$.  
It is straightforward to verify that the resulting mapping probability matches the expression given in \cref{eq:r8_8}.  
Since the procedure only involves generating a random integer $i \in \{1,2,\dots,t\}$, its time complexity is $\mathcal{O}(1)$, making it highly efficient.
}

\subsubsection{\red{Computation of the function \texorpdfstring{$f(\mathbf{x})$}{}}} 

\red{
The function $f(\mathbf{x})$ is defined as $f(\mathbf{x}):=\bra{\psi_{\mathbf{x}}}S_t\ket{\psi_{\mathbf{x}}}$, 
where $\ket{\psi_\mb{x}}$ can be chosen as the state $\left|\Psi_{[\mathbf{z}]}^{(k)}\right\rangle$ defined in Eq.~(8) of the End Matter. In this case, we have 
\begin{equation}
f(\mathbf{x})=\left\langle\Psi_{[\mathbf{z}]}^{(k)}\right| S_t\left|\Psi_{[\mathbf{z}]}^{(k)}\right\rangle
=\exp\!\left(-\frac{k\cdot2\pi \mathrm{i}}{|[\mathbf{z}]|}\right). 
\end{equation}
To compute $f(\mathbf{x})$ from a measurement result $\mathbf{x}$, we need to  
\begin{itemize}
\item[(A)] find the cardinality  $|[\mathbf{z}]|$ of the class $[\mb{z}]$ to which $\mb{x}$ belongs; 
\item[(B)] find the integer $k$ that satisfies $\tau^k(\mb{z})=\mb{x}$, where $\mb{z}$ is the smallest element in the class $[\mb{z}]$; and 
\item[(C)] calculate the value of $\exp\!\left(-k\cdot2\pi \mathrm{i}/|[\mathbf{z}]|\right)$. 
\end{itemize}
 Below, we describe the algorithms for solving these tasks and analyze their time complexities.
It turns out that all these tasks can be solved in $\mathcal{O}(nt^2)$ time.

First, we consider task A. Notice that the class cardinality $|[\mathbf{z}]|$ equals the minimal period of $\mathbf{x}$, i.e., the smallest positive integer $r$ such that $\tau^r(\mathbf{x})=\mathbf{x}$.  To compute this minimal period, it suffices to iteratively check whether $\tau^j(\mathbf{x}) = \mathbf{x}$ for $j=1,2,3,\dots$, and return the smallest such integer $j$. The complete procedure is described in Algorithm~\ref{alg:FindCardinality}. 
Next, we analyze the time complexity of Algorithm~\ref{alg:FindCardinality}. Since $\mathbf{x}$ contains $t$ strings $\mathbf{x}_1 \mathbf{x}_2 \dots \mathbf{x}_t$, the number of iterations is at most $t$. Each $\mathbf{x}_i$ is an $n$-bit string, so $\mathbf{x}$ (and $\mathbf{y}$) has $nt$ bits. Hence, in each iteration, shifting $\mathbf{y}$ by one position (Step 4) requires time $\mathcal{O}(nt)$. 
In addition, comparing $\mathbf{y}$ with $\mathbf{x}$ (Step 6) also takes time $\mathcal{O}(nt)$, since we need to check each bit pair one by one. 
Therefore, the time complexity of Algorithm~\ref{alg:FindCardinality} is 
\begin{equation}
T_A=\mathcal{O}(t) \cdot \mathcal{O}(nt)=\mathcal{O}(nt^2).
\end{equation}

\begin{figure}
\begin{algorithm}[H]
{\small
\hspace{-398pt}\textbf{Input:}  $\mathbf{x}=\{\mathbf{x}_1 \mathbf{x}_2\dots\mathbf{x}_t\}$ \\
\hspace{-448pt} \textbf{Output:} $|[\mathbf{z}]|$ 

\begin{algorithmic}[1]
\caption{{\small Algorithm for finding the cardinality $|[\mathbf{z}]|$} \ \ }
\label{alg:FindCardinality}
\State{$\mathbf{y}\leftarrow \mathbf{x}$}
\State{$r\leftarrow 0$} 

\While{True}

\State{$\mathbf{y}\leftarrow \tau(\mathbf{y})$} 
\State{$r\leftarrow r+1$}

\If{$\mathbf{y} = \mathbf{x}$} 
    \State \textbf{break} \Comment{Terminate when the original $\mathbf{x}$ is recovered}
\EndIf

\EndWhile

\State{\textbf{return} $|[\mathbf{z}]|=r$}  \Comment{Output the number of shifts needed}
\end{algorithmic}
}
\end{algorithm}
\end{figure}

\begin{figure}
\begin{algorithm}[H]
{\small
\hspace{-398pt}\textbf{Input:}  $\mathbf{x}=\{\mathbf{x}_1 \mathbf{x}_2\dots\mathbf{x}_t\}$ \\
\hspace{-458pt} \textbf{Output:} $k$ 

\begin{algorithmic}[1]
\caption{{\small Algorithm for finding the integer $k$ that satisfies $\tau^k(\mb{z})=\mb{x}$} \ \ }
\label{alg:Findk}
\State{$\mathbf{y}\leftarrow \mathbf{x}$}
\State{$j\leftarrow 1$}

\While{$j< |[\mathbf{z}]|$}

\State{$j\leftarrow j+1$} 

\If{$\tau(\mathbf{y}) < \mathbf{y}$} \Comment{Compare $\tau(\mathbf{y})$ and $\mathbf{y}$, and store the smaller one}
\State{$\mathbf{y}\leftarrow \tau(\mathbf{y})$} 
\EndIf

\EndWhile

\State{$\mathbf{z}\leftarrow \mathbf{y}$} \Comment{Find the smallest element in the class $[\mb{z}]$ to which $\mb{x}$ belongs}

\State{$r\leftarrow 0$}

\While{True}

\If{$\mathbf{y} = \mathbf{x}$} 
    \State \textbf{break} \Comment{Terminate when $\mathbf{x}$ is recovered}
\EndIf

\State{$\mathbf{y}\leftarrow \tau(\mathbf{y})$}
\State{$r\leftarrow r+1$} 

\EndWhile

\State{\textbf{return} $k=r$}  \Comment{Output the number of shifts needed to recover $\mathbf{x}$ from $\mathbf{z}$}
\end{algorithmic}
}
\end{algorithm}
\end{figure}

Second, we consider task B. 
To solve this task, we need to first find the smallest element $\mathbf{z}$ in the class to which $\mathbf{x}$ belongs. 
This can be completed by virtue of a simple linear search algorithm.  
We start by storing $\mathbf{x}$ in a variable $\mathbf{y}$, then iterate through the remaining elements in the class one by one. For each element, we compare it with the current $\mathbf{y}$; if it is smaller, we update $\mathbf{y}$ to this new string. By the end of the iteration, $\mathbf{y}$ will be the smallest number in the class. This procedure is described in Steps 1--9 of Algorithm~\ref{alg:Findk}, in which  
the number of iterations is $|[\mathbf{z}]|-1<t$. In each iteration, comparing $\tau(\mathbf{y})$ with $\mathbf{y}$ (Step 5) takes time $\mathcal{O}(nt)$. 
So the time cost of Steps 1--9 in Algorithm~\ref{alg:Findk} is 
\begin{equation}
T_B^1=(|[\mathbf{z}]|-1) \cdot \mathcal{O}(nt)=\mathcal{O}(nt^2).
\end{equation}
After obtaining $\mathbf{z}$, one can then calculate the integer $k$ by  iteratively checking whether $\tau^j(\mathbf{z}) = \mathbf{x}$ for $j=0,1,2,\dots$, and return the smallest such integer $j$. This procedure is described in Steps 10--18 of Algorithm~\ref{alg:Findk}, and similarly requires $\mathcal{O}(nt^2)$ time. 
Therefore, the overall time complexity of Algorithm~\ref{alg:Findk} for task B is $T_B=\mathcal{O}(nt^2)$.

Third, task C is trivial and can be solved in $\mathcal{O}(\text{poly log} \; t)$ time using an ordinary calculator, as both $k$ and $|[\mathbf{z}]|$ are $\mathcal{O}(t)$ and do not depend on $n$.

In summary, the computation of $f(\mathbf{x})$ can be done efficiently in $\mathcal{O}(nt^2)$  time. 
}

\subsection{\red{Sample complexity for estimating a collection of observables}}
\red{
Our AFRS protocol inherits the mindset of shadow estimation  that 
the experimental procedure is independent of the specific properties to be predicted.
This feature enables it to simultaneously predict multiple different nonlinear properties of the system without losing high efficiency. 
In this subsection, we analyze  the sample cost of AFRS for estimating $\tr(O_l\rho^t)$ with a collection of $L$ observables $\mc{W}=\{O_l\}_{l=1}^L$.

In each experimental round of our AFRS protocol, we employ $t$ copies of the unknown state $\rho$ to construct an unbiased estimator $\hat{o}_t$ for $\tr(O\rho^t)$. 
After $M$ experiments, we obtain a collection of estimators $\{\hat{o}_{t,(i)}\}_{i=1}^M$. For estimating a single $\tr(O\rho^t)$ with precision $\epsilon$, the empirical mean estimator $\frac{1}{M}\sum_{i=1}^M \hat{o}_{t,(i)}$ can be employed, and Chebyshev's inequality dictates that the required number of experimental repetitions scales as $M\sim\mathrm{Var}(\widehat{o_t})\epsilon^{-2}$.
When the goal is to simultaneously estimate multiple properties $\{\tr(O_l\rho^t)\}_{l=1}^L$, we adopt the median-of-means technique \cite{huang2020predicting,jerrum1986random}. This approach requires $M = NR$ repetitions, where $R\sim\log(L)$ and $N\sim\max_l \mathrm{Var}\Big(\widehat{o_t^{(l)}}\Big)\epsilon^{-2}$.
More concretely, for each observable $O_l \in \mathcal{W}$, we construct $R$ independent sample means of size $N$, and output their median as our estimator for $\tr(O_l\rho^t)$.  
Thus, the sample cost of our AFRS protocol is 
\begin{equation}\label{eq:SamCost}
Mt=NR\,t\sim\frac{t  \log(L)}{\epsilon^2} \cdot \max_l \mathrm{Var} \Big(\widehat{o_t^{(l)}}\Big). 
\end{equation}
As shown in Theorem~1 of the main text, the variance admits the following upper bound:
\begin{equation}\label{eq:VarAFRS}
    \mathrm{Var}(\widehat{o_t})
    \leq \|O_0\|^2_{\mathrm{sh},\mathcal{E}}+\|O\|^2_{\infty},
\end{equation}
where $O_0$ denotes the traceless part of observable $O$, 
and $\|O_0\|_{\mathrm{sh},\mathcal{E}}$ represents the shadow norm of $O_0$, which depends on the unitary ensemble $\mathcal{E}$.
We explicitly characterize the scaling for two practically important ensembles:
\begin{itemize}
\item[a)] For the local Clifford ensemble $\mathcal{E}_{\rm LCl}$, 
Eq.~(S17) in Ref.~\cite{huang2020predicting} establishes:
\begin{equation}
    \|O_0\|^2_{\mathrm{sh, LCl}}\leq 4^{\text{locality}(O)} \|O\|_\infty^2, 
\end{equation}
where $\text{locality}(O)$ counts the number of qubits on which $O$ acts nontrivially. 
Combining this with Eqs.~\eqref{eq:SamCost} and \eqref{eq:VarAFRS} yields the sample complexity
\begin{equation}
Mt\sim \frac{t  \log(L)}{\epsilon^2} \cdot 
\max_l \left[ \left(4^{\text{locality}(O_l)}+1\right) \|O_l\|_\infty^2 \right].
\end{equation} 
The scaling is linear in $t$ and independent of the total qubit number $n$.

\item[b)] For the Global Clifford ensemble $\mathcal{E}_{\rm Cl}$, Eq.~(S16) in Ref.~\cite{huang2020predicting} gives:
\begin{equation}
    \|O_0\|^2_{\mathrm{sh, Cl}}\leq 3\,\|O_0\|_2^2.
\end{equation}
Combining this with Eqs.~\eqref{eq:SamCost} and \eqref{eq:VarAFRS} leads to the sample complexity of AFRS:
\begin{equation}
Mt\sim \frac{t  \log(L)}{\epsilon^2} \cdot 
\max_l \Big( 3\,\|(O_l)_0\|_2^2 + \|O_l\|_\infty^2 \Big),
\end{equation}
which similarly exhibits linear $t$-dependence and no explicit scaling with $n$.

\end{itemize}


When the observables $\{O_l\}_{l=1}^L$ of interest have small localities, one can further employ the local-AFRS protocol to significantly reduce the circuit depth, thereby improving practicality.  
According to Corollary 1 in the main text, to estimate all $\tr(O_l\rho^t)$ with error at most $\epsilon$, it suffices to run the local-AFRS circuit the following number of times:  
\begin{equation}  
M \sim \frac{K \log(L)}{\epsilon^2} \max_l \mathrm{Var}\Big(\widehat{o^{(l)}_t}\Big),  
\end{equation}  
where $K \leq L$ is the number of subsets obtained by partitioning $\{O_l\}_{l=1}^L$ [see Eq.~(6) in the main text].  
In local-AFRS, we adopt the local Clifford ensemble $\mathcal{E}_{\rm LCl}$, so the variance of the estimator satisfies  
\begin{equation}  
\mathrm{Var}(\widehat{o_t}) \leq \left\|O_{0}\right\|_{\mathrm{sh},\mathcal{E}}^2 + \|O\|_{\infty}^2 \leq \left(4^{\text{locality}(O)} + 1\right) \|O\|_\infty^2.  
\end{equation}  
Moreover, each round of experiments utilizes $t$ copies of the unknown state $\rho$. Combining these observations, we obtain the following sample complexity for the local-AFRS protocol:  
\begin{equation}  
Mt \sim \frac{t K  \log(L)}{\epsilon^2} \max_l \left[ \left(4^{\text{locality}(O_l)} + 1\right) \|O_l\|_\infty^2 \right]. 
\end{equation}  
Again, the scaling is still linear in $t$ and independent of $n$.
In particular, if $L=1$ and the observable of interest is the identity operator $O = \mathbb{I}$ (i.e., we aim to estimate the moment $P_t = \tr(\rho^t)$), then we have $K=1$, $\text{locality}(O)=0$, and the sample complexity reduces to  $Mt \sim t\epsilon^{-2}$. 

In summary, for estimating a collection of $L$ nonlinear properties 
$\{\tr(O_l\rho^t)\}_{l=1}^L$, the sample complexities  of our AFRS and local-AFRS protocols both scale linearly in $t$ and logarithmically in $L$; they have no explicit dependence on the qubit number $n$, but depend on the observable structure, i.e., the locality when using the local Clifford ensemble $\mathcal{E}_{\rm LCl}$, and the Frobenius norm when using the global Clifford ensemble $\mathcal{E}_{\rm Cl}$. 
All these scaling behaviors are the best that one can expect. 
Compared to the original shadow protocol \cite{huang2020predicting} for estimating linear properties $\{\tr(O_l\rho)\}_{l=1}^L$, our sample costs only increase by a factor of $\mathcal{O}(t)$. In fact, as shown by recent works Refs.~\cite{chen2025simultaneous,ZhangWuZhou25}, this linear dependence in $t$ is inevitable, even when estimating a single nonlinear property $\tr(O\rho^t)$. Therefore, our AFRS and local-AFRS protocol achieve the optimal scaling behavior in $t$.  
}

\section{Local-AFRS protocol for local observables}
In this section, we present more details of the local-AFRS protocol mentioned in the main text. 
In Supplementary Note~\ref{SM:ob2}, we introduce the key observation  for constructing the protocol. 
In Supplementary Note~\ref{SM:Th2}, we present the detailed form of the estimator used in local-AFRS and prove the performance shown in Theorem 2 of the main text. 
In Supplementary Note~\ref{SM:Th2part3}, 
we analyze the efficiency of applying local-AFRS to estimate the state moment.

\subsection{Key observation behind local-AFRS}\label{SM:ob2} 
Consider the observable $O$ of interest being local and acting nontrivially on the subsystem $A$, say, $O=O_{A}\otimes \id_{\bar{A}}$. 
With respect to the locality property, the random unitary used in local-AFRS can be taken as the product form $V=V_A\otimes V_{\bar{A}}$.
Similar to the framework of AFRS introduced in the main text, here one needs to effectively generate the \emph{pseudo} but now \emph{marginal} conditional probability 
\begin{equation}
\Pr(\mb{b}^{A}|V_{A})=\bra{\mb{b}^{A}} V_{A} (\tr_{\bar{A}}(\rho^t)) \ket{\mb{b}^{A}},
\end{equation}
where $\mb{b}^{A}$ labels the bit-string restricted on $A$. Actually, Observation 1 in the main text
can be applied to the local case here, but with the observable $Q_{\mb{b}}$ 
updated to 
\begin{equation}\label{eq:QbA}
Q^{\mathrm{local}} (\mb{b}^{A})=\left[t^{-1}\sum_{i} \ket{\mb{b}^{A}}\bra{\mb{b}^{A}}_i\otimes \id^{\otimes (t-1) }_{A} \right] \otimes \id^{\otimes t }_{\bar{A}}=Q_{\mb{b}^{A}} \otimes \id^{\otimes t }_{\bar{A}}.
\end{equation}
Here, $Q_{\mb{b}^{A}}$ is symmetric on all $t$ replicas, but restricted on $A$; the complement subsystem $\bar{A}$ takes the identity $\id_{\bar{A}}$ for all $t$ replicas.

Recall that the key step to develop AFRS is to find the common eigenbasis of both $S_t$ and $Q_\mathbf{b}$. 
Now the updated $Q^{\mathrm{local}}(\mb{b}^{A})$ 
is in the tensor-product form. Also, $S_t$ can be factorized into $S_t=S_t^{A}\otimes S_t^{\bar{A}}$, and even further into the qubit level operators on $t$ replicas. 
In this way, for any $\mb{x}^{A}$ and $\mb{x}^{\bar{A}}$, the state 
\begin{equation}
\mc{R}^{\dag}_A\otimes \mc{R}^{\dag}_{\bar{A}}\ket{\mb{x}^{A},\mb{x}^{\bar{A}}}=\ket{\psi_{\mb{x}^{A}}}\ket{\psi_{\mb{x}^{\bar{A}}}}
\end{equation}
is a common eigenstate for $Q^{\mathrm{local}}(\mb{b}^{A})$ and $S_t$, with $\mb{x}^{A}=\{\mb{x}_1^{A}\mb{x}_2^{A}\cdots\mb{x}_t^{A}\}$ \big($\mb{x}^{\bar{A}}=\{\mb{x}_1^{\bar{A}}\mb{x}_2^{\bar{A}}\cdots\mb{x}_t^{\bar{A}}\}$\big) being the bit-strings restricted on subsystem $A$ ($\bar{A}$) for all replicas. 
Accordingly, for local-AFRS here, the function $f(\mb{x})$, mapping strategy $\mathcal{G}_{\mb{x}\to \mb{b}}$, and the quantum probability $\Pr(\mb{x}|V)$ in Observation 1 are updated to 
\begin{align}\label{eq:mapLocal}
f(\mb{x}^{A},\mb{x}^{\bar{A}})=
\bra{\phi_{\mb{x}}} S_t^{A}\otimes S_t^{\bar{A}} \ket{\phi_{\mb{x}}},
\quad
\mathcal{G}_{\mb{x}^A\to \mb{b}^A}:\Pr(\mb{b}^{A}|\mb{x}^{A})=\bra{\psi_{\mb{x}^{A}}} Q_{\mb{b}^{A}}\ket{\psi_{\mb{x}^{A}}}, 
\quad
\Pr(\mb{x}|V)= \bra{\phi_{\mb{x}}}V(\rho)^{\otimes t} \ket{\phi_{\mb{x}}},
\end{align}
respectively, where we denote by $\ket{\phi_{\mb{x}}}:=\ket{\psi_{\mb{x}^{A}}}\otimes\ket{\psi_{\mb{x}^{\bar{A}}}}$ for simplicity, with $\mb{x}=\mb{x}^{A}\mb{x}^{\bar{A}}$.

The following observation clarifies the relation between $\Pr(\mb{b}^A|V_A)$, $f(\mb{x}^A,\mb{x}^{\bar{A}})$,  $\Pr(\mb{b}^A|\mb{x}^A)$, and $\Pr(\mb{x}|V)$. It is the counterpart of 
Observation 1 in the main text.

\begin{observation}\label{obs:local}
The pseudo and marginal conditional probability $\Pr(\mb{b}^A|V_A)$ can be written as
\begin{equation}\label{eq:local_conprob}
\begin{aligned}
\Pr(\mb{b}^A|V_A) &= \tr\left[ S_t \left(Q_{\mb{b}^A}\otimes \identity_{\bar{A}}^{\otimes t}\right) V(\rho)^{\otimes t} \right]\\
&= \sum_{\mb{x}} \bra{\phi_{\mb{x}}}S_t\ket{\phi_{\mb{x}}} \bra{\phi_{\mb{x}}}\left(Q_{\mb{b}^A}\otimes \id_{\bar{A}}^{\otimes t}\right)\ket{\phi_{\mb{x}}} \bra{\phi_{\mb{x}}}V(\rho)^{\otimes t}\ket{\phi_{\mb{x}}}\\
&=\sum_{\mb{x}} f(\mb{x}^A,\mb{x}^{\bar{A}}) \Pr(\mb{b}^A|\mb{x}^A) \Pr(\mb{x}|V).
\end{aligned}
\end{equation}
\end{observation}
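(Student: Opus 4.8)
The plan is to establish the three equalities in \cref{eq:local_conprob} in turn, mirroring the proof of \cref{ob:wholeCprob} in \cref{SM:Ob1} but carefully tracking the $A/\bar{A}$ bipartition. Throughout I write $\sigma:=V(\rho)=V\rho V^{\dag}$ with $V=V_A\otimes V_{\bar{A}}$, so that $\sigma^{\otimes t}=V(\rho)^{\otimes t}$ and, since $V$ is unitary, the matrix power obeys $\sigma^{t}=V\rho^{t}V^{\dag}$.

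\emph{First equality (the main step).} Because $Q_{\mb{b}^A}\otimes\id_{\bar{A}}^{\otimes t}$ is symmetric under the cyclic permutation generated by $S_t$, every term $\ketbra{\mb{b}^A}{\mb{b}^A}_i\otimes\id$ in the sum defining $Q_{\mb{b}^A}$ contributes equally inside the trace, so the factor $t^{-1}\sum_i$ collapses to a single representative $i=1$. Applying the same shift-trace identity used for \cref{eq:obs1_proof}, namely $\tr[S_t(A_1\otimes A_2\otimes\cdots\otimes A_t)]=\tr(A_1 A_2\cdots A_t)$, reduces the right-hand side to $\tr[(\ketbra{\mb{b}^A}{\mb{b}^A}\otimes\id_{\bar{A}})\,\sigma^{t}]=\bra{\mb{b}^A}\tr_{\bar{A}}(\sigma^{t})\ket{\mb{b}^A}$. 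It then remains to identify $\tr_{\bar{A}}(\sigma^{t})$ with $V_A(\tr_{\bar{A}}(\rho^t))$, and this is where the product structure $V=V_A\otimes V_{\bar{A}}$ is essential: writing $\sigma^{t}=(V_A\otimes V_{\bar{A}})\rho^t(V_A^{\dag}\otimes V_{\bar{A}}^{\dag})$ and pulling $V_A$ out of the partial trace over $\bar{A}$, the residual conjugation by $V_{\bar{A}}$ disappears because the partial trace over $\bar{A}$ is invariant under a unitary supported on $\bar{A}$ (cyclicity of the trace on the $\bar{A}$ factor). Hence $\tr_{\bar{A}}(\sigma^{t})=V_A\tr_{\bar{A}}(\rho^t)V_A^{\dag}$ and $\bra{\mb{b}^A}\tr_{\bar{A}}(\sigma^{t})\ket{\mb{b}^A}=\Pr(\mb{b}^A|V_A)$, as claimed.

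\emph{Second equality.} I would first confirm that $\{\ket{\phi_{\mb{x}}}=\ket{\psi_{\mb{x}^A}}\otimes\ket{\psi_{\mb{x}^{\bar{A}}}}\}$ is a common eigenbasis of $S_t$ and $Q_{\mb{b}^A}\otimes\id_{\bar{A}}^{\otimes t}$. Since $S_t=S_t^{A}\otimes S_t^{\bar{A}}$ factorizes across the bipartition and $Q^{\mathrm{local}}(\mb{b}^A)$ in \cref{eq:newOb} is already a tensor product, I apply \cref{lem:matrixR} separately on $A$ and on $\bar{A}$: the states $\ket{\psi_{\mb{x}^A}}$ simultaneously diagonalize $S_t^{A}$ and $Q_{\mb{b}^A}$, while $\ket{\psi_{\mb{x}^{\bar{A}}}}$ diagonalize $S_t^{\bar{A}}$ and are trivially eigenstates of $\id_{\bar{A}}^{\otimes t}$. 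Therefore $\ket{\phi_{\mb{x}}}$ is a joint eigenstate of both operators, so $S_t$, $Q_{\mb{b}^A}\otimes\id_{\bar{A}}^{\otimes t}$, and $\ketbra{\phi_{\mb{x}}}{\phi_{\mb{x}}}$ mutually commute. Inserting the resolution of identity $\sum_{\mb{x}}\ketbra{\phi_{\mb{x}}}{\phi_{\mb{x}}}=\id$ and using simultaneous diagonalizability exactly as in \cref{SM:Ob1} splits the trace into the product of three diagonal matrix elements, giving the second line.

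\emph{Third equality and the main obstacle.} The third line is obtained by reading off definitions: $\bra{\phi_{\mb{x}}}S_t\ket{\phi_{\mb{x}}}=f(\mb{x}^A,\mb{x}^{\bar{A}})$ from \cref{eq:fLocal} (which further factorizes as $f(\mb{x}^A)f(\mb{x}^{\bar{A}})$ because $S_t$ and $\ket{\phi_{\mb{x}}}$ both factorize), $\bra{\phi_{\mb{x}}}(Q_{\mb{b}^A}\otimes\id_{\bar{A}}^{\otimes t})\ket{\phi_{\mb{x}}}=\bra{\psi_{\mb{x}^A}}Q_{\mb{b}^A}\ket{\psi_{\mb{x}^A}}=\Pr(\mb{b}^A|\mb{x}^A)$ from \cref{eq:mapLocal}, and $\bra{\phi_{\mb{x}}}V(\rho)^{\otimes t}\ket{\phi_{\mb{x}}}=\Pr(\mb{x}|V)$. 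I expect the first equality to be the only genuinely non-routine step, since the second and third merely transport the machinery of \cref{ob:wholeCprob} and \cref{lem:matrixR} through the bipartition; the delicate point is the partial-trace/unitary-invariance manipulation that turns the global conjugation $V(\cdot)V^{\dag}$ into the purely local channel $V_A(\cdot)V_A^{\dag}$ after tracing out $\bar{A}$.
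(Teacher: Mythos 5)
Your proof is correct and follows essentially the same route as the paper's: the second and third equalities are handled identically (common eigenbasis of $S_t=S_t^A\otimes S_t^{\bar{A}}$ and $Q_{\mb{b}^A}\otimes\id_{\bar{A}}^{\otimes t}$, resolution of identity, then reading off definitions), and your first equality is just the algebraic transcription of the tensor-network identity the paper establishes graphically in \cref{fig:obs2_sketch}, including the key cancellation $V_{\bar{A}}V_{\bar{A}}^{\dag}=\id_{\bar{A}}$ under the partial trace over $\bar{A}$.
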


\begin{proof}[Proof of Observation~\ref{obs:local}]
Recall that $\Pr(\mb{b}^A|V_A)= \bra{\mb{b}^A}V_A(\tr_{\bar{A}}(\rho^t)) \ket{\mb{b}^A}$ and $Q_{\mb{b}^A}=t^{-1}\sum_{i} \ketbra{\mb{b}^{A}}_i\otimes \id^{\otimes (t-1) }_{A}$. 
So the first line of \cref{eq:local_conprob} can be proved by showing that
\begin{equation}\label{eq:obs2_proof}
    \bra{\mb{b}^A}V_A(\tr_{\bar{A}}(\rho^t)) \ket{\mb{b}^A} = \tr\left( S_t \left[ \left(\ketbra{\mb{b}^A}_i \otimes \identity_{A}^{\otimes (t-1)} \right)\otimes \identity_{\bar{A}}^{\otimes t} \right]V(\rho)^{\otimes t} \right)\quad \forall i=1,2,\cdots,t.
\end{equation}
And we graphically show the derivation of this relation in \cref{fig:obs2_sketch}.

\begin{figure}[t]
\centering
\includegraphics[width=0.8\textwidth]{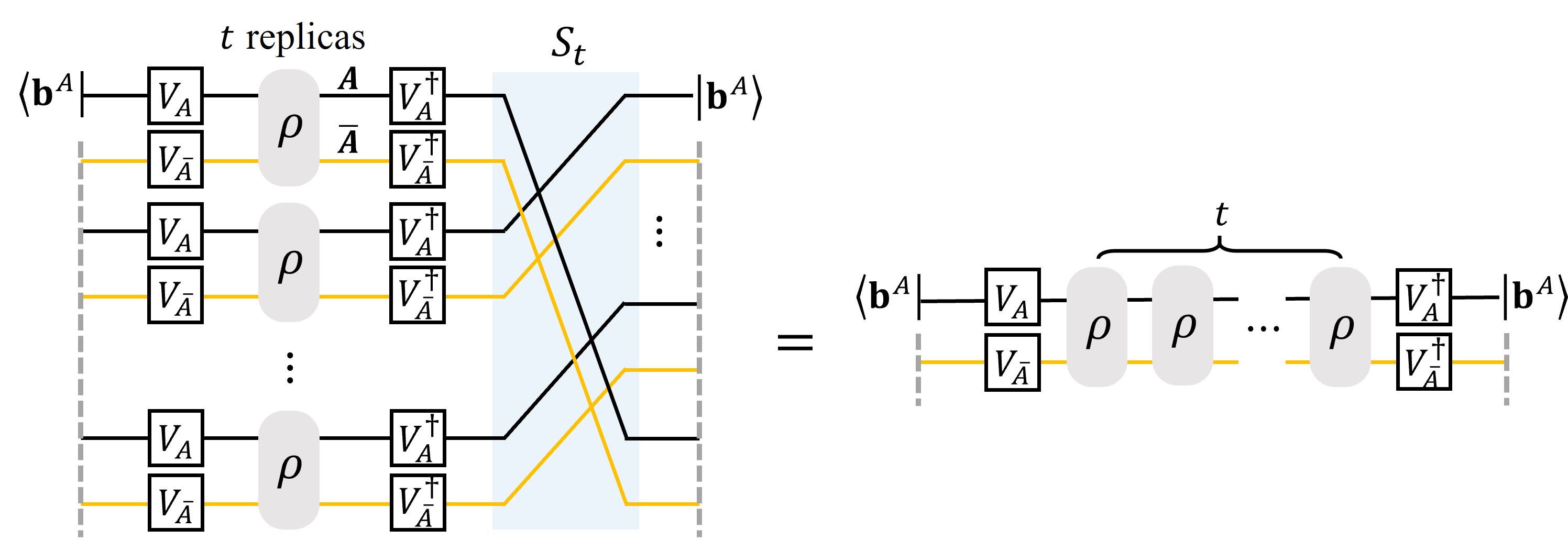}\\
\caption{\justifying{A proof sketch of \cref{eq:obs2_proof} for local-AFRS protocol. Here, the yellow lines represent the subsystem $\bar{A}$ of $\rho$, and the gray dashed lines denote the trace operation. It indicates in the right sketch that $V_{\bar{A}}$ is not essential as $V_{\bar{A}} V_{\bar{A}}^\dag = \identity_{\bar{A}}$.
Thus, in practice, we can set $V_{\bar{A}}= \identity_{\bar{A}}$ for simplicity.
}}
\label{fig:obs2_sketch}
\end{figure}

Recall that $S_t^A$ and $Q_{\mb{b}^A}$ can be diagonalized simultaneously under the basis $\left\{\ket{\psi_{\mb{x}^{A}}}\right\}$, and $S_t^{\bar{A}}$ can be diagonalized in the basis $\left\{\ket{\psi_{\mb{x}^{\bar{A}}}}\right\}$. Hence, the operators $\ketbra{\phi_{\mb{x}}}=\ketbra{\psi_{\mb{x}^{A}}}\otimes\ketbra{\psi_{\mb{x}^{\bar{A}}}}$, $S_t=S_t^A\otimes S_t^{\bar{A}}$, and $Q_{\mb{b}^A}\otimes \id_{\bar{A}}^{\otimes t}$ commute with each other. 
Using this fact, the second line of \cref{eq:local_conprob} can be derived as
\begin{align}\label{eq:localtr_sum}
    \tr\left[ S_t \left(Q_{\mb{b}^A}\otimes \identity_{\bar{A}}^{\otimes t}\right) V(\rho)^{\otimes t} \right] &= \sum_{\mb{x}} \bra{\phi_{\mb{x}}} S_t \left(Q_{\mb{b}^A}\otimes\identity_{\bar{A}}^{\otimes t}\right) V(\rho)^{\otimes t} \ket{\phi_{\mb{x}}} \nonumber\\
    & = \sum_{\mb{x}} \bra{\phi_{\mb{x}}}S_t\ket{\phi_{\mb{x}}} \bra{\phi_{\mb{x}}}\left(Q_{\mb{b}^A}\otimes \id_{\bar{A}}^{\otimes t}\right)\ket{\phi_{\mb{x}}} \bra{\phi_{\mb{x}}}V(\rho)^{\otimes t}\ket{\phi_{\mb{x}}}.
\end{align}

Finally, the third line of \cref{eq:local_conprob} holds because
\begin{equation}
\begin{aligned}
    \bra{\phi_{\mb{x}}}\left(Q_{\mb{b}^A}\otimes \id_{\bar{A}}^{\otimes t}\right)\ket{\phi_{\mb{x}}} = \bra{\psi_{\mb{x}^A}}Q_{\mb{b}^A}\ket{\psi_{\mb{x}^A}} \bra{\psi_{\mb{x}^{\bar{A}}}} \ket{\psi_{\mb{x}^{\bar{A}}}} = \Pr(\mb{b}^A|\mb{x}^A).
\end{aligned}
\end{equation}
This completes the proof.
\end{proof}

We remark that, as mentioned in the main text, one can set $V_{\bar{A}}=\id_{\bar{A}}$ here, which does not change the relation in \cref{eq:obs2_proof}, and thus the final conclusion. This is also demonstrated by \cref{fig:obs2_sketch}. 
Moreover, the used eigenbasis $\ket{\phi_{\mb{x}}}:=\ket{\psi_{\mb{x}^{A}}}\otimes\ket{\psi_{\mb{x}^{\bar{A}}}}$ can also be simplified. To get the result in \cref{eq:localtr_sum}, one just need to find a basis on $\bar{A}$ of $t$ replicas which is the eigenbasis of $S_t^{\bar{A}}$. As mentioned in the main text, one thus can choose $\ket{\phi_{\mb{x}}}=\ket{\psi_{\mb{x}^{A}}}\otimes \bigotimes  _{i\in \bar{A}}\ket{\psi_{\mb{x}^{i}}}$, which can be realized by $\mc{R}_A\otimes \bigotimes _{i\in \bar{A}}\mc{R}_{(i)}$. 
In summary, \cref{obs:local} holds if one takes $V_{\bar{A}}=\id_{\bar{A}}$, and $\ket{\phi_{\mb{x}}}=\ket{\psi_{\mb{x}^{A}}}\otimes \bigotimes  _{i\in \bar{A}}\ket{\psi_{\mb{x}^{i}}}$, with the function $f$ updated to
\begin{align}
f(\mb{x}^{A},\mb{x}^{\bar{A}})
&\rightarrow
\bigg(\bra{\psi_{\mb{x}^{A}}}\otimes \bigotimes  _{i\in \bar{A}}\bra{\psi_{\mb{x}^{i}}}\bigg)
\bigg(S_t^{A}\otimes \bigotimes_{i\in \bar{A}}   S_t^{i}\bigg) 
\bigg(\ket{\psi_{\mb{x}^{A}}}\otimes \bigotimes _{i\in \bar{A}}\ket{\psi_{\mb{x}^{i}}}\bigg) 
\nonumber\\
&\qquad =f(\mb{x}^{A})\prod_{i\in \bar{A}}f(\mb{x}^{i}).
\end{align}
With such changes, the results of the estimator and the variance given  in Supplementary Note~\ref{SM:Th2} also hold.

\subsection{The estimator and performance of local-AFRS}
\label{SM:Th2}
The experimental procedure of local-AFRS is described in the main text. 
Let $V_A$ be the random unitary performed on subsystem $A$ of each replica, and $\mb{x}=\{\mb{x}^{A},\mb{x}^{\bar{A}}\}$ represent the actual measurement outcome [see Fig.~3~(a) of the main text]. 
In the postprocessing stage, we first map $\mb{x}^{A}$ to a string $\mb{b}^{A}$ with probability $\Pr(\mb{b}^{A}|\mb{x}^{A})$, then use 
$\mb{b}^{A}$, $\mb{x}^{A}$, $\mb{x}^{\bar{A}}$, and $V_A$ to construct the estimator $\widehat{\rho^t_A}$.
The detailed form of $\widehat{\rho^t_A}$ and the performance of local-AFRS are shown in \cref{th:localAFRS} below, which is the formal version of Theorem~2 in the main text.

\begin{theorem}\label{th:localAFRS}
    For a single implementation of the circuit in Fig.~3~(a) (single-shot, $M=1$). The unbiased estimator of $\rho^t$ reduced on subsystem $A$ is constructed as
\begin{equation}\label{eq:unA-Ptr}
\begin{aligned}
\widehat{\rho^t_A}=\mathrm{Re}\left(f(\mb{x}^{A},\mb{x}^{\bar{A}})\right)\mathcal{M}_A^{-1}\left( 
V_A^{\dag} \ketbra{\mb{b}^A}{\mb{b}^A}V_A\right).
\end{aligned}
\end{equation}
Here, the inverse channel $\mathcal{M}_A^{-1}$ on $A$ depends on the random unitary ensemble $\mc{E}^{(A)}$, the function $f$ and the mapping result $\mb{b}^A$ are given in \cref{eq:mapLocal}.

The local observable $O=O_A\otimes \id_{\bar{A}}$ on $\rho^t$ is estimated by $\widehat{o_t}=\tr\!\left(\widehat{\rho^t_A}\,O_A\right)$,
and its variance shows 
\begin{equation}\label{eq:varmLocal}
\begin{aligned}
\mathrm{Var}(\widehat{o_t}) \leq \left\|O_{A,0}\right\|_{\mathrm{sh},\mc{E}^{(A)}}^2 + \|O\|_{\infty} ^2,
\end{aligned}
\end{equation}
where 
$O_{A,0}$ is the traceless part of $O_A$; and 
$\|\cdot\|_{\mathrm{sh},\mc{E}^{(A)}}$ is the shadow norm depending on the unitary ensemble $\mc{E}^{(A)}$. 
\end{theorem}

The proof of \cref{th:localAFRS} is separated into two parts. 
Using \cref{obs:local}, the first part in Supplementary Note~\ref{SM:Th2part1} shows that $\widehat{\rho^t_A}$ defined in 
\cref{eq:unA-Ptr} reproduces $\tr_{\bar{A}}(\rho^t)$ exactly in expectation (over the randomness in the choice of unitary $V=V_A\otimes V_{\bar{A}}$, measurement outcome $\mb{x}=\{\mb{x}^A, \mb{x}^{\bar{A}}\}$, and the mapping result $\mb{b}^A$ from $\mb{x}^A$).
The second part in Supplementary Note~\ref{SM:Th2part2} proves the bound of $\mathrm{Var}(\widehat{o_t})$ in \cref{eq:varmLocal}.

\subsubsection{Unbiasedness of the estimator \texorpdfstring{$\widehat{\rho^t_A}$}{}}\label{SM:Th2part1}

The expectation of the estimator $\widehat{\rho^t_A}$ is given as 
\begin{equation}\label{eq:localunbiased}
\begin{aligned}
\mathbb{E}_{\{ V,\mb{x},\mb{b}^A\}}\widehat{\rho^t_A}
&= 
\mathbb{E}_{\{ V,\mb{x},\mb{b}^A\}}
{\mathrm{Re}} \left(f(\mb{x}^A,\mb{x}^{\bar{A}}) \right) \mathcal{M}_A^{-1}\left(V_A^{\dag}\ketbra{\mb{b}^A}{\mb{b}^A}V_A \right)\\
&= 
\mathbb{E}_{V \sim \mc{E}}\sum_{\mb{x},\mb{b}^A} \Pr(\mb{x},\mb{b}^A|V) 
{\mathrm{Re}} \left( f(\mb{x}^A,\mb{x}^{\bar{A}}) \right)
\mathcal{M}_A^{-1}\left(V_A^{\dag}\ketbra{\mb{b}^A}{\mb{b}^A}V_A \right)   \\
&=
\mathbb{E}_{V \sim \mc{E}}\sum_{\mb{b}^A} \mathcal{M}_A^{-1}\left(V_A^{\dag}\ketbra{\mb{b}^A}{\mb{b}^A}V_A \right) 
\left[\sum_{\mb{x}} \Pr(\mb{x}|V) \Pr(\mb{b}^A|\mb{x}^A)
{\mathrm{Re}} \left( f(\mb{x}^A,\mb{x}^{\bar{A}}) \right)\right] ,
\end{aligned}
\end{equation}
where we used that $\Pr(\mb{x},\mb{b}^A|V)=\Pr(\mb{x}|V) \Pr(\mb{b}^A|\mb{x}^A)$. 
Since $\Pr(\mb{x}|V)$ and $\Pr(\mb{b}^A|\mb{x}^A)$ are real numbers, the last term in the parentheses can be 
further written as 
\begin{align}
{\mathrm{Re}} \left[\sum_{\mb{x}} \Pr(\mb{x}|V) \Pr(\mb{b}^A|\mb{x}^A)
f(\mb{x}^A,\mb{x}^{\bar{A}}) \right]
= {\mathrm{Re}}\left[\Pr(\mb{b}^A|V_A)\right]
= \bra{\mb{b}^A}V_A\left( 
\tr_{\bar{A}}(\rho^t) \right) \ket{\mb{b}^A}, 
\end{align}
where the first equality follows from 
\cref{obs:local}. 
By plugging this relation into \cref{eq:localunbiased}, we get 
\begin{equation}
\begin{aligned}
\mathbb{E}_{\{ V,\mb{x},\mb{b}^A\}}\widehat{\rho^t_A}
&= 
\mathbb{E}_{V \sim \mc{E}}\sum_{\mb{b}^A} \mathcal{M}_A^{-1}\left(V_A^{\dag}\ketbra{\mb{b}^A}{\mb{b}^A}V_A \right) 
\bra{\mb{b}^A}V_A\left( 
\tr_{\bar{A}}(\rho^t) \right) \ket{\mb{b}^A}
\\
&= \mathcal{M}_A^{-1}\left[ \mathbb{E}_{V_A \sim \mc{E}^{(A)}} \sum_{\mb{b}^A} \bra{\mb{b}^A}V_A\left( 
\tr_{\bar{A}}(\rho^t) \right) \ket{\mb{b}^A}  V_A^{\dag}\ketbra{\mb{b}^A}{\mb{b}^A}V_A \right]\\
    &=\mathcal{M}_A^{-1}\left[ \mathcal{M}_A\left(\tr_{\bar{A}}(\rho^t)\right)\right]=\tr_{\bar{A}}(\rho^t).
\end{aligned}
\end{equation}
In conclusion, $\widehat{\rho^t_A}$ is an unbiased estimator of $\tr_{\bar{A}}(\rho^t)$.

\subsubsection{Proof of the estimation variance in \texorpdfstring{\cref{eq:varmLocal}}{Eq.(S53)}}\label{SM:Th2part2}

The variance of $\widehat{o_t}$ reads 
\begin{equation}\label{eq:partial_otv}
\mathrm{Var}(\widehat{o_t}) 
= \mathbb{E}_{\{V,\mb{x},\mb{b}^A\}}[\widehat{o_t}^2]-\left[\mathbb{E}_{\{V,\mb{x},\mb{b}^A\}}\widehat{o_t}\right]^2=\mathbb{E}_{\{V,\mb{x},\mb{b}^A\}}[\widehat{o_t}^2]-\left[\tr(O_A \rho^t)\right]^2.
\end{equation}
By the self-adjoint property of $\mathcal{M}_A^{-1}$, 
the estimator $\widehat{o_t}$ can be rewritten as 
\begin{align}
\widehat{o_t}
=
\tr\!\left[O_A\widehat{\rho^t_A}\right]
&=
{\mathrm{Re}} \left(f(\mb{x}^A,\mb{x}^{\bar{A}}) \right) \tr\left[ O_A\mathcal{M}_A^{-1}\!\left(V_A^\dag \ketbra{\mb{b}^A}{\mb{b}^A}V_A\right)\right]
\nonumber\\
&={\mathrm{Re}} \left(\bra{\phi_\mb{x}}S_t\ket{\phi_\mb{x}} \right)\bra{\mb{b}^A}V_A \mathcal{M}_A^{-1}(O_A)V_A^\dag \ket{\mb{b}^A}.
\end{align}
It follows that
\begin{equation}\label{eq:ot_local}
\begin{aligned}
&\mathbb{E}_{\{V,\mb{x},\mb{b}^A\}}\left[\widehat{o_t}^2\right]
\\
&= 
\mathbb{E}_{V\sim \mathcal{E}}\sum_{\mb{x},\mb{b}^A} \Pr(\mb{x}|V) \Pr(\mb{b}^A|\mb{x}^A) \,\widehat{o_t}^2\\
&\stackrel{(a)}{=} 
\mathbb{E}_{V\sim \mathcal{E}} \sum_{\mb{x},\mb{b}^A} \bra{\phi_{\mb{x}}} V(\rho)^{\otimes t}\ket{\phi_{\mb{x}}} \bra{\psi_{\mb{x}^{A}}} Q_{\mb{b}^{A}}\ket{\psi_{\mb{x}^{A}}} \left[{\mathrm{Re}} \left(\bra{\phi_\mb{x}}S_t\ket{\phi_\mb{x}} \right)\right]^2 \bra{\mb{b}^A}V_A \mathcal{M}_A^{-1}(O_A)V_A^\dag \ket{\mb{b}^A}^2\\
&\stackrel{(a)}{\leq} 
\mathbb{E}_{V\sim \mathcal{E}} \sum_{\mb{b}^A} 
\bra{\mb{b}^A}V_A \mathcal{M}_A^{-1}(O_A)V_A^\dag \ket{\mb{b}^A}^2
\underbrace{\left(\sum_{\mb{x}}
\bra{\phi_{\mb{x}}} V(\rho)^{\otimes t}\ket{\phi_{\mb{x}}} \bra{\psi_{\mb{x}^{A}}} Q_{\mb{b}^{A}}\ket{\psi_{\mb{x}^{A}}} \right)}_{(*)}, 
\end{aligned}
\end{equation}
where $(a)$ holds  because $\Pr(\mb{x}|V)=\bra{\phi_\mb{x}}V(\rho)^{\otimes t}\ket{\phi_\mb{x}}$ and $\Pr(\mb{b}^A|\mb{x}^A)=\bra{\phi_{\mb{x}}}
Q_{\mb{b}^A}\otimes \id_{\bar{A}}^{\otimes t}\ket{\phi_{\mb{x}}}$, and $(b)$ holds  because $\left[\mathrm{Re}(f(\mb{x}))\right]^2=\left[\mathrm{Re}(\bra{\psi_\mb{x}}S_t\ket{\psi_\mb{x}})\right]^2\leq1$.
The term $(*)$ can be further written as  
\begin{equation}\label{eq:B10UB}
\begin{aligned}
(*)
&\stackrel{(a)}{=}  \sum_{\mb{x}}
\bra{\phi_{\mb{x}}} V(\rho)^{\otimes t}\left(Q_{\mb{b}^A}\otimes \id_{\bar{A}}^{\otimes t}\right)\ket{\phi_\mb{x}}
\stackrel{(b)}{=} 
\tr\left[V(\rho)^{\otimes t}\left(Q_{\mb{b}^A}\otimes \id_{\bar{A}}^{\otimes t}\right)\right]
=
\bra{\mb{b}^A}V_A(\tr_{\bar{A}}\rho)\ket{\mb{b}^A}, 
\end{aligned}
\end{equation}
where $(a)$ holds because $Q_{\mb{b}^A}$ commutes with $\ketbra{\phi_\mb{x}}{\phi_\mb{x}}$, $(b)$ holds because $\{\ket{\phi_\mb{x}}\}_{\mb{x}}$ forms an orthonormal basis on $\mc{H}_d^{\otimes t}$.

Equations \eqref{eq:ot_local} and \eqref{eq:B10UB}  together imply that 
\begin{equation}\label{eq:local_upper}
\mathbb{E}_{\{V,\mb{x},\mb{b}^A\}}\left[\widehat{o_t}^2\right]
\leq 
\mathbb{E}_{V_A\sim \mc{E}^{(A)}} \sum_{\mb{b}^A} 
\bra{\mb{b}^A}V_A \mathcal{M}_A^{-1}(O_A)V_A^\dag \ket{\mb{b}^A}^2
\bra{\mb{b}^A}V_A(\tr_{\bar{A}}\rho)\ket{\mb{b}^A}
= \mathbb{E}\left[ \tr(O_A 
\widehat{\rho_{\bar{A}}})^2 \right], 
\end{equation}
where $\widehat{\rho_{\bar{A}}}$ is the original single-copy shadow snapshot \cite{huang2020predicting}, satisfying $ \mathbb{E}\left[\widehat{\rho_{\bar{A}}}\right]=\tr_{\bar{A}}\rho$.  
Hence, the expectation value of $\widehat{o_t}^2$ is not larger than that of $\left[\tr(O_A \widehat{\rho_{\bar{A}}})\right]^2$ in the original shadow protocol on a single-copy of the state $\tr_{\bar{A}}\rho$ using the unitary ensemble 
$\mc{E}^{(A)}$, no matter what value $t$ takes.

By substituting \cref{eq:local_upper} into \cref{eq:partial_otv}, we have
\begin{equation}\label{app:var:local}
\begin{aligned}
\mathrm{Var}(\widehat{o_t}) 
&\leq 
\mathbb{E}\left[ \tr(O_A \widehat{\rho_{\bar{A}}})^2 \right] - \left[\tr(O_A \rho^t)\right]^2
\\
& = 
\mathbb{V}^{\mc{E}^{(A)}}(O_{A},\tr_{\bar{A}}\rho) + \left[\tr(O_A \tr_{\bar{A}}\rho)\right]^2 - \left[\tr(O_A \rho^t)\right]^2
\\
& \stackrel{(a)}{=} \mathbb{V}^{\mc{E}^{(A)}}(O_{A,0},\tr_{\bar{A}}\rho)
+ \left[\tr(O_A \rho)\right]^2 - \left[\tr(O_A \rho^t)\right]^2
\\
& \stackrel{(b)}{\leq }
\left\|O_{A,0}\right\|_{\mathrm{sh},\mc{E}^{(A)}}^2 + \|O\|_{\infty}^2 - \left[\tr(O_A \rho^t)\right]^2,
\end{aligned}
\end{equation}
which confirms \cref{eq:varmLocal}. Here,  $\mathbb{V}^{\mc{E}^{(A)}}(O_{A},\tr_{\bar{A}}\rho)$ is the variance of $\tr(O_A \widehat{\rho_{\bar{A}}})$ associated with the original shadow protocol on a single-copy of $\tr_{\bar{A}}\rho$ \cite{huang2020predicting},  
$(a)$ holds because shifting the operator $O_A$ to its traceless part $O_{A,0}$ does not change the variance of $\tr(O_A \widehat{\rho_{\bar{A}}})$, $(b)$ holds because 
$\mathbb{V}^{\mc{E}^{(A)}}(O_{A,0},\tr_{\bar{A}}\rho)\leq \left\|O_{A,0}\right\|_{\mathrm{sh},\mc{E}^{(A)}}^2$ \cite{huang2020predicting} and $\left|\tr(O_A \rho)\right|\leq \|O\|_{\infty}$.

\section{New (multi-shot like) estimator \texorpdfstring{$\widehat{\rho^t}_*$}{} and its performance}\label{app:Multishot}

In the following \cref{th:NewEst}, we show a new estimator $\widehat{\rho^t}_*$ of $\rho^t$ with its performance, which is constructed from 
Eq.~(3) of 
Theorem 1 in the main text by averaging the estimator $\mb{b}$ from $\mb{x}=\{\mb{x}_1,\mb{x}_2, \cdots ,\mb{x}_t\}$.

\begin{prop}\label{th:NewEst}
For a single implementation of the circuit in Fig.~1 of the main text (i.e., single-shot, $M=1$), the unbiased estimator of $\rho^t$ shows 
\begin{align}
\widehat{\rho^t}_*
&:= \mathrm{Re}(f(\mb{x}))\cdot
\sum_{\mb{b}} \Pr(\mb{b}|\mb{x}) \cdot
\mathcal{M}^{-1}\left( V^{\dag} \ketbra{\mb{b}}{\mb{b}} V \right), \label{eq:NewEst}
\end{align}
which satisfies $\mathbb{E}_{\{V,\mb{x}\}}\widehat{\rho^t}_* = \rho^t$. 
Here, the inverse channel $\mathcal{M}^{-1}$ depends on the random unitary ensemble $\mathcal{E}$; $\mb{x}$ is the actual measurement outcome from the quantum circuit in 
Fig.~1 of the main text; the function $f$ and the mapping probability $\Pr(\mb{b}|\mb{x})$ are given by 
Observation 1 of the main text. 
Thanks to 
Lemma 1 in END MATTER, the estimator in \cref{eq:NewEst} can be further written as 
\begin{align}
\widehat{\rho^t}_*
= \mathrm{Re}(f(\mb{x}))\cdot
\frac{1}{t}\sum_{i=1}^t 
\mathcal{M}^{-1}\left( V^{\dag} \ketbra{\mb{x}_i}{\mb{x}_i} V \right),   
\end{align}
where $\mb{x}_i$ is the $i$-th element of the string $\mb{x}=\{\mb{x}_1,\mb{x}_2,\cdots,\mb{x}_t\}$. 

The unbiased estimator of some observable $O$ on $\rho^t$, $\tr(O \rho^t)$, is $\widehat{o_t}^*=\tr(O\widehat{\rho^t}_*)$, and its variance shows
\begin{align}\label{eq:VarNewEst}
\mathrm{Var}\left(\widehat{o_t}^*\right) 
\leq \mathbb{V}^{\mc{E}}_t(O, \rho)
+\left[\tr(O\rho)\right]^2,
\end{align}
with
\begin{align}\label{app:MSvar}
\mathbb{V}^{\mc{E}}_t(O, \rho)
&:= \frac{t-1}{t} \mathbb{V}^{\mc{E}}_*(O, \rho)
+\frac{1}{t} \, \mathbb{V}^{\mc{E}}(O, \rho),
\nonumber\\[1.0ex]
\mathbb{V}^{\mc{E}}_*(O, \rho)
&:=
\underset{V \sim \mc{E}}{\mathbb{E}} \left(\sum_{\mb{b}} \bra{\mb{b}}V\mathcal{M}^{-1}(O)V^{\dag} \ket{\mb{b}} \bra{\mb{b}} V(\rho)\ket{\mb{b}} \right)^2
-\left[\tr(O\rho)\right]^2,
\nonumber\\[0.5ex]
\mathbb{V}^{\mc{E}}(O, \rho)&
:=\underset{V \sim \mc{E}}{\mathbb{E}}
\left(\sum_{\mb{b}}\bra{\mb{b}} V(\rho)\ket{\mb{b}}\bra{\mb{b}}V\mathcal{M}^{-1}(O)V^{\dag} \ket{\mb{b}}^2 \right)
-\left[\tr(O\rho)\right]^2. 
\end{align}
Here, the value of $\mathbb{V}^{\mc{E}}_t(O, \rho)$ is not greater than $\mathbb{V}^{\mc{E}}(O,\rho)$, and $\mathbb{V}^{\mc{E}}(O,\rho)$ is the variance of $\tr(O\hat\rho)$ associated with the original shadow protocol on a single-copy of $\rho$ \cite{huang2020predicting}. 
\end{prop}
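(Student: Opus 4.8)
The plan is to treat the unbiasedness as essentially inherited from \cref{th:sh_est} and to concentrate the effort on the variance bound \cref{eq:VarNewEst}. For unbiasedness, I would observe that $\widehat{\rho^t}_*$ in \cref{eq:NewEst} is exactly the conditional expectation of the single-shot estimator $\widehat{\rho^t}$ of \cref{eq:shadow2} over the random mapping $\mb{b}\sim\mathcal{G}_{\mb{x}\to\mb{b}}$, with $V$ and $\mb{x}$ held fixed. Hence $\mathbb{E}_{\{V,\mb{x}\}}\widehat{\rho^t}_* = \mathbb{E}_{\{V,\mb{x},\mb{b}\}}\widehat{\rho^t} = \rho^t$, reusing the computation in \cref{SM:Th1part1} verbatim. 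The substantive work is the second moment. Writing $g(\mb{b},V):=\bra{\mb{b}}V\mathcal{M}^{-1}(O)V^{\dag}\ket{\mb{b}}$ and using the explicit mapping strategy from \cref{sec:Rmatrix} — namely that $\Pr(\mb{b}|\mb{x})=\bra{\psi_\mb{x}}Q_\mb{b}\ket{\psi_\mb{x}}$ equals the fraction of components of the string $\mb{x}=\{\mb{x}_1,\dots,\mb{x}_t\}$ equal to $\mb{b}$ — one obtains $\widehat{o_t}^*=\mathrm{Re}(f(\mb{x}))\,t^{-1}\sum_{i=1}^t g(\mb{x}_i,V)$. Bounding $[\mathrm{Re}(f(\mb{x}))]^2\le 1$ as in \cref{SM:Th1part2} and expanding the square, $\mathbb{E}[(\widehat{o_t}^*)^2]$ splits into a diagonal part $t^{-2}\sum_i g(\mb{x}_i)^2$ and an off-diagonal part $t^{-2}\sum_{i\ne j}g(\mb{x}_i)g(\mb{x}_j)$, each averaged over $\mb{x}\sim\Pr(\cdot|V)$ and then over $V\sim\mc{E}$.

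The diagonal part reuses the identity already established in \cref{SM:Th1part2}: since $Q_\mb{b}$ is diagonal in $\{\ket{\psi_\mb{x}}\}$ with $t^{-1}\sum_i\mathbb{1}[\mb{x}_i=\mb{b}]=\bra{\psi_\mb{x}}Q_\mb{b}\ket{\psi_\mb{x}}$, I have $\sum_\mb{x}\Pr(\mb{x}|V)\bra{\psi_\mb{x}}Q_\mb{b}\ket{\psi_\mb{x}}=\bra{\mb{b}}V(\rho)\ket{\mb{b}}$, so the diagonal contribution reduces to $t^{-1}\mathbb{E}_V\sum_\mb{b} g(\mb{b},V)^2\bra{\mb{b}}V(\rho)\ket{\mb{b}}=t^{-1}(\mathbb{V}^{\mc{E}}(O,\rho)+[\tr(O\rho)]^2)$. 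For the off-diagonal part I would introduce the symmetric two-point operator $Q^{(2)}_{\mb{b}\mb{b}'}:=[t(t-1)]^{-1}\sum_{i\ne j}\ketbra{\mb{b}}{\mb{b}}_i\otimes\ketbra{\mb{b}'}{\mb{b}'}_j\otimes\id^{\otimes(t-2)}$ and argue it too is diagonal in $\{\ket{\psi_\mb{x}}\}$: its eigenvalue on every $\ket{\tau^r(\mb{z})}$ in a class is the ordered-pair count $n_\mb{b}n_{\mb{b}'}/[t(t-1)]$ (or $n_\mb{b}(n_\mb{b}-1)/[t(t-1)]$ for $\mb{b}=\mb{b}'$), which is invariant under cyclic shift, so the Fourier combinations $\ket{\Psi^{(k)}_{[\mb{z}]}}$ of \cref{lem:matrixR} are eigenstates. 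Then $\sum_\mb{x}\Pr(\mb{x}|V)\bra{\psi_\mb{x}}Q^{(2)}_{\mb{b}\mb{b}'}\ket{\psi_\mb{x}}=\tr[V(\rho)^{\otimes t}Q^{(2)}_{\mb{b}\mb{b}'}]=\bra{\mb{b}}V(\rho)\ket{\mb{b}}\bra{\mb{b}'}V(\rho)\ket{\mb{b}'}$, giving the off-diagonal contribution $\tfrac{t-1}{t}\mathbb{E}_V(\sum_\mb{b} g(\mb{b},V)\bra{\mb{b}}V(\rho)\ket{\mb{b}})^2=\tfrac{t-1}{t}(\mathbb{V}^{\mc{E}}_*(O,\rho)+[\tr(O\rho)]^2)$.

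Summing the two contributions, subtracting $[\tr(O\rho^t)]^2\ge 0$, and recognizing the definition of $\mathbb{V}^{\mc{E}}_t$ in \cref{app:MSvar} then yields \cref{eq:VarNewEst}. The monotonicity claim $\mathbb{V}^{\mc{E}}_t(O,\rho)\le\mathbb{V}^{\mc{E}}(O,\rho)$ follows from $\mathbb{V}^{\mc{E}}_*\le\mathbb{V}^{\mc{E}}$, which I would establish by Jensen's inequality $(\sum_\mb{b} p_\mb{b} g(\mb{b}))^2\le\sum_\mb{b} p_\mb{b} g(\mb{b})^2$ applied pointwise in $V$ to the probability distribution $p_\mb{b}=\bra{\mb{b}}V(\rho)\ket{\mb{b}}$, then averaged over $\mc{E}$; since $\mathbb{V}^{\mc{E}}_t$ is a convex combination of $\mathbb{V}^{\mc{E}}_*$ and $\mathbb{V}^{\mc{E}}$, the bound transfers. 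I expect the off-diagonal step to be the main obstacle: one must verify cleanly that $Q^{(2)}_{\mb{b}\mb{b}'}$ is simultaneously diagonalizable with $S_t$ in $\{\ket{\psi_\mb{x}}\}$ so that the two-component marginal factorizes, since — unlike $Q_\mb{b}$ — this pairwise object was not treated in \cref{lem:matrixR}, and its cyclic-shift invariance within each class must be checked explicitly.
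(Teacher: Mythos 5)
Your proof is correct and follows essentially the same route as the paper: both bound $[\mathrm{Re}(f(\mb{x}))]^2\le 1$, drop $[\tr(O\rho^t)]^2$, and reduce the second moment to one- and two-point marginals of the measurement distribution; your two-point operator $Q^{(2)}_{\mb{b}\mb{b}'}$ is exactly the $i\ne j$ part of the product $Q_{\mb{b}_1}Q_{\mb{b}_2}$, which the paper evaluates directly via $\tr\bigl[V(\rho)^{\otimes t}Q_{\mb{b}_1}Q_{\mb{b}_2}\bigr]=\frac{t-1}{t}\bra{\mb{b}_1}V(\rho)\ket{\mb{b}_1}\bra{\mb{b}_2}V(\rho)\ket{\mb{b}_2}+\frac{1}{t}\delta_{\mb{b}_1,\mb{b}_2}\bra{\mb{b}_1}V(\rho)\ket{\mb{b}_1}$, so your worry about separately diagonalizing $Q^{(2)}$ dissolves once one notes it is a product of operators already diagonal in $\{\ket{\psi_\mb{x}}\}$. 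Your Jensen argument for $\mathbb{V}^{\mc{E}}_*\le\mathbb{V}^{\mc{E}}$ is a small bonus the paper delegates to a citation.
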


We remark that $\mathbb{V}^{\mc{E}}_t(O, \rho)$ is a quantity associated with \emph{multi-shot estimation} 
(see e.g., Lemma 1 in \cite{helsen2023thrifty}), which is a variant of the original shadow protocol \cite{huang2020predicting}. Such multi-shot quantity emerges here in our protocol, as we operate same random unitary on $t$ replicas of $\rho$ in AFRS framework. The value of $\mathbb{V}^{\mc{E}}_t(O, \rho)$ depends on the traceless part $O_0$ of the observable $O$ \cite{zhou2023performance}, and is not greater than $\mathbb{V}^{\mc{E}}(O,\rho)$ \cite{helsen2023thrifty}. In addition, it has been systematically studied for various different $\mc{E}$ \cite{helsen2023thrifty,zhou2023performance}. In particular, when $\mc{E}$ is the local Clifford ensemble $\mc{E}_{\mathrm{LCl}}$ and $O=P$ is a Pauli observable with weight $w$, we have \cite[Theorem 2]{zhou2023performance}
\begin{align}\label{eq:multi_pauli_var}
\mathbb{V}^{\mc{E}}_t(P, \rho)
=\frac{3^w}{t} +\frac{t-1}{t} \, 3^w \left[\tr(P\rho)\right]^2 -\left[\tr(P\rho)\right]^2 ; 
\end{align}
when $\mc{E}$ is a unitary 4-design, we have 
\cite[Theorem 2]{helsen2023thrifty}
\begin{align}
\mathbb{V}^{\mc{E}}_t(O, \rho)
=
\frac{\mathbb{V}^{\mc{E}}(O, \rho)}{t} +\frac{t-1}{t}\, \mathcal{O}\!\left(2^{-n} \Tr(O_0^2)\right), 
\end{align}
where $n$ is the number of qubits; 
when $\mc{E}$ is the ensemble of quantum circuits constructed by interpolating $k$ single-qubit  $T$ gates into $k+1$ random Clifford layers, we have \cite[Theorem 4]{helsen2023thrifty}
\begin{align}
\mathbb{V}^{\mc{E}}_t(O, \rho)
=
\frac{\mathbb{V}^{\mc{E}}(O, \rho)}{t} + \frac{t-1}{t}\, \mathcal{O}\!\left(2^{-n} \Tr(O_0^2)\right) 
+\frac{t-1}{t} \, 30 \Tr(O_0^2)\left(1+\mathcal{O}(2^{-n})\right) \left(\frac{3}{4}+\mathcal{O}\left(2^{-n}\right)\right)^k . 
\end{align}

\begin{proof}[Proof of \cref{th:NewEst}]
The unbiasedness of the estimator $\widehat{\rho^t}_*$ can be shown similarly as that of $\widehat{\rho^t}$ in Supplementary Note~\ref{SM:Th1part1}. The difference is just that the summation on the index $\mb{b}$ already appears in the estimator here.
\begin{align}
\mathbb{E}_{\{V,\mb{x}\}} \widehat{\rho^t}_*
&=\mathbb{E}_{V \sim \mc{E}} \sum_{\mb{x}} \Pr(\mb{x}|V) \left[ \mathrm{Re}(f(\mb{x}))\cdot
\sum_{\mb{b}} \Pr(\mb{b}|\mb{x}) \cdot
\mathcal{M}^{-1}\left( V^{\dag} \ketbra{\mb{b}}{\mb{b}} V \right) \right]
\nonumber\\
&=
\mathbb{E}_{V \sim \mc{E}}  \sum_{\mb{b}}\mathcal{M}^{-1}\left( V^{\dag} \ketbra{\mb{b}}{\mb{b}} V \right) 
{\mathrm{Re}}\left[\sum_{\mb{x}}
\Pr(\mb{x}|V) \Pr(\mb{b}|\mb{x})\,f(\mb{x})
\right]
\nonumber\\
&\stackrel{(a)}{=} 
\mathbb{E}_{V \sim \mc{E}}  \sum_{\mb{b}}
\mathcal{M}^{-1}\left( V^{\dag} \ketbra{\mb{b}}{\mb{b}} V \right) \bra{\bf{b}}V(\rho^t)\ket{\bf{b}}
\nonumber\\
&=
\mathcal{M}^{-1}\left[ \mathbb{E}_{V \sim \mc{E}}  \sum_{\mb{b}}
\bra{\bf{b}}V(\rho^t)\ket{\bf{b}}  V^{\dag} \ketbra{\mb{b}}{\mb{b}} V \right]
\stackrel{(b)}{=} 
\mathcal{M}^{-1}\left[ \mathcal{M} (\rho^t) \right] 
=\rho^t,
\end{align}
where $(a)$ follows from 
Observation 1 in the main text, and $(b)$ follows from  the definition of the channel $\mathcal{M}$. 

In the following, we prove the bound of $\mathrm{Var}\big(\widehat{o_t}^*\big)$ in \cref{eq:VarNewEst}. 
Since the estimator 
\begin{align}
\widehat{o_t}^*=\tr\left(O\widehat{\rho^t}_*\right)
= \mathrm{Re}(f(\mb{x}))
\sum_{\mb{b}} \Pr(\mb{b}|\mb{x}) \bra{\mb{b}} V \mathcal{M}^{-1}(O)  V^{\dag} \ket{\mb{b}} 
\end{align}
is a real number, its variance reads  
\begin{align}\label{eq:VarNew1}
\mathrm{Var}(\widehat{o_t}^*) 
&= \mathbb{E}_{\{V,\mb{x}\}} \left[(\widehat{o_t}^*)^{2} \right] - \left(\mathbb{E}_{\{V,\mb{x}\}} \widehat{o_t}^*\right) ^2 
\nonumber\\[1.0ex]
&=
\mathbb{E}_{V \sim \mc{E}} \sum_{\mb{x}} \Pr(\mb{x}|V) \, (\widehat{o_t}^*)^{2} 
- \left[\tr(O\rho^t)\right]^2
\nonumber\\
&\leq
\mathbb{E}_{V \sim \mc{E}} \sum_{\mb{x}} \Pr(\mb{x}|V) \, \mathrm{Re}(f(\mb{x}))^2 
\left( \sum_{\mb{b}} \Pr(\mb{b}|\mb{x}) \bra{\mb{b}} V \mathcal{M}^{-1}(O)  V^{\dag} \ket{\mb{b}} \right)^2 
\nonumber\\
&\leq 
\mathbb{E}_{V \sim \mc{E}} \sum_{\mb{x}} \bra{\psi_\mb{x}}V(\rho)^{\otimes t}\ket{\psi_\mb{x}}  
\left( \sum_{\mb{b}} \bra{\psi_\mb{x}} Q_\mb{b}\ket{\psi_\mb{x}}  \bra{\mb{b}} V \mathcal{M}^{-1}(O)  V^{\dag} \ket{\mb{b}} \right)^2 
\nonumber\\[1.0ex]
&= 
\mathbb{E}_{V \sim \mc{E}}\sum_{\mb{b}_1,\mb{b}_2}\bra{\mb{b}_1}V\mathcal{M}^{-1}(O)V^{\dag} \ket{\mb{b}_1}\bra{\mb{b}_2} V \mathcal{M}^{-1}(O)V^{\dag}\ket{\mb{b}_2} 
\nonumber\\
&\qquad\qquad\qquad\times 
\underbrace{\left(\sum_{\mb{x}} \bra{\psi_\mb{x}}V(\rho)^{\otimes t}\ket{\psi_\mb{x}} \bra{\psi_\mb{x}} Q_{\mb{b}_1}\ket{\psi_\mb{x}}\bra{\psi_\mb{x}} Q_{\mb{b}_2}\ket{\psi_\mb{x}}\right)}_{(*)}, 
\end{align}
where the first inequality is due to ignoring the second term $\left[\tr(O\rho^t)\right]^2$, and the second inequality holds because 
$\left[\mathrm{Re}(f(\mb{x}))\right]^2=\left[\mathrm{Re}(\bra{\psi_\mb{x}}S_t\ket{\psi_\mb{x}})\right]^2\leq1$.
Since $Q_\mb{b}$ and $\ketbra{\psi_\mb{x}}{\psi_\mb{x}}$ commute, the term $(*)$ can be written as 
\begin{align}\label{eq:VarNew2}
(*)
&=\sum_{\mb{x}} \bra{\psi_\mb{x}}V(\rho)^{\otimes t}Q_{\mb{b}_1} Q_{\mb{b}_2}\ket{\psi_\mb{x}}
\stackrel{(a)}{=} 
\tr\!\left[V(\rho)^{\otimes t}Q_{\mb{b}_1} Q_{\mb{b}_2}\right]
\nonumber\\
&= 
\frac{1}{t^2} \tr\!\left[V(\rho)^{\otimes t} \sum_{i,j=1}^t\left( \ketbra{\mb{b}_1}_i \otimes \id_d^{\otimes (t-1)}\right) \!\left(\ketbra{\mb{b}_2}_j \otimes \id_d^{\otimes (t-1)}\right)\right] 
\nonumber\\
&=  
\frac{1}{t^2} \tr\!\left[V(\rho)^{\otimes t} \left( \sum_{i\ne j} \ketbra{\mb{b}_1}_i\otimes \ketbra{\mb{b}_2}_j\otimes \id_d^{\otimes (t-2)} + \delta_{\mb{b}_1,\mb{b}_2} \sum_{i=1}^t \ketbra{\mb{b}_1}_i\otimes \id_d^{\otimes (t-1)}\right)
\right] 
\nonumber\\[1.0ex]
&= 
\frac{t-1}{t} \bra{\mb{b}_1} V(\rho)\ket{\mb{b}_1} \bra{\mb{b}_2} V(\rho)\ket{\mb{b}_2}
+  \frac{1}{t}\, \delta_{\mb{b}_1,\mb{b}_2} \bra{\mb{b}_1} V(\rho)\ket{\mb{b}_1},
\end{align}
where $(a)$ holds because $\{\ket{\psi_\mb{x}}\}_{\mb{x}}$ forms an orthonormal basis on $\mc{H}_d^{\otimes t}$. 
By inserting \cref{eq:VarNew2} into \cref{eq:VarNew1}, we obtain
\begin{align}\label{eq:VarNew3}
\mathrm{Var}(\widehat{o_t}^*) 
&\leq 
\frac{t-1}{t}\, \mathbb{E}_{V \sim \mc{E}}\sum_{\mb{b}_1,\mb{b}_2} \bra{\mb{b}_1}V\mathcal{M}^{-1}(O)V^{\dag} \ket{\mb{b}_1}\bra{\mb{b}_2} V \mathcal{M}^{-1}(O)V^{\dag}\ket{\mb{b}_2} \bra{\mb{b}_1} V(\rho)\ket{\mb{b}_1} \bra{\mb{b}_2} V(\rho)\ket{\mb{b}_2}
\nonumber\\
&\quad\ + 
\frac{1}{t}\,  
\mathbb{E}_{V \sim \mc{E}}\sum_{\mb{b}_1,\mb{b}_2}\bra{\mb{b}_1}V\mathcal{M}^{-1}(O)V^{\dag} \ket{\mb{b}_1}\bra{\mb{b}_2} V \mathcal{M}^{-1}(O)V^{\dag}\ket{\mb{b}_2} \delta_{\mb{b}_1,\mb{b}_2} \bra{\mb{b}_1} V(\rho)\ket{\mb{b}_1}
\nonumber\\[1.0ex]
&= \frac{t-1}{t} \left[ \mathbb{V}^{\mc{E}}_*(O, \rho) + \tr(O\rho)^2\right]
+\frac{1}{t} \left[\mathbb{V}^{\mc{E}}(O, \rho)+ \tr(O\rho)^2\right]
\nonumber\\[1.0ex]
&= 
\mathbb{V}^{\mc{E}}_t(O, \rho)
+\left[\tr(O\rho)\right]^2,
\end{align}
which confirms \cref{eq:VarNewEst} and completes the proof. 
\end{proof}

We remark such multi-shot-like construction can also be applied to local-AFRS. Based on \cref{th:localAFRS}, one has the new estimator and its variance as follows. 
\begin{prop}\label{th:NewEstLocal}
For a single implementation of the circuit in Fig.~3~(a) of the main text (single-shot, $M=1$), the unbiased estimator of $\rho^t$ reduced on subsystem $A$ shows  
\begin{equation}\label{eq:NewEstlocal}
\begin{aligned}
\widehat{\rho^t_A}_*&=\mathrm{Re}\left(f(\mb{x}^{A},\mb{x}^{\bar{A}}))\right)\sum_{\mb{b}^A} \Pr(\mb{b}^A|\mb{x}^A)\mathcal{M}_A^{-1}\left( 
V_A^{\dag} \ketbra{\mb{b}^A}{\mb{b}^A}V_A\right),
\end{aligned}
\end{equation}
which satisfies $\mathbb{E}_{\{V_A,\mb{x}\}}\widehat{\rho^t_A}_* = \tr_{\bar{A}}(\rho^t)$. 
Here, the inverse channel $\mathcal{M}_A^{-1}$ on $A$ depends on the random unitary ensemble $\mc{E}^{(A)}$; $\mb{x}=\{\mb{x}^{A},\mb{x}^{\bar{A}}\}$ represents the actual measurement outcome from the quantum circuit in 
Fig.~3~(a) of the main text; 
the function $f$ and the mapping probability $\Pr(\mb{b}_A|\mb{x}_A)$ are defined in the main text.

The local observerble $O=O_A\otimes \id_{\bar{A}}$ on $\rho^t$ can be estimated by $\widehat{o_t}^*=\tr\!\left(\widehat{\rho^t_A}_*O_A\right)$,
and its variance shows 
\begin{equation}\label{eq:varmLocalMS}
\begin{aligned}
\mathrm{Var}(\widehat{o_t}^*) \leq \mathbb{V}^{\mc{E}^{(A)}}_t\left(O_A, \tr_{\bar{A}}\rho\right)
+\left[\tr(O_A\rho)\right]^2.
\end{aligned}
\end{equation}
Here, $\mathbb{V}^{\mc{E}^{(A)}}_t\!\left(O_A, \tr_{\bar{A}}\rho\right)$ is defined in a similar way as $\mathbb{V}^{\mc{E}}_t(O, \rho)$ in \cref{app:MSvar}, and is not larger than $\mathbb{V}^{\mc{E}^{(A)}}(O_{A},\tr_{\bar{A}}\rho)$, which denotes the variance of $\tr(O_A \widehat{\rho_{\bar{A}}})$ associated with the original shadow protocol on a single-copy of $\tr_{\bar{A}}\rho$ \cite{huang2020predicting}.

\end{prop}
One can prove \cref{th:NewEstLocal} by following the proof of \cref{th:NewEst} in this section together with the proof of \cref{th:localAFRS} in Supplementary Note~\ref{SM:Th2}, and we do not elaborate on it here.

In \cref{fig:multi}, we numerically simulate the error of estimating $o_2=\tr(X_1\rho^2)$ with the multi-shot-enhanced AFRS and local-AFRS protocols, and compare them with the OS, AFRS and local-AFRS protocols. Here, the processed state $\rho = \ket{\text{GHZ}}\bra{\text{GHZ}}$, and thus we have $\tr(X_1\rho)=0$. The numerical results in \cref{fig:multi} (b) demonstrate that the estimation errors are further reduced by about a factor of $\sqrt{2}$ using new estimators $\widehat{\rho^t}_*$ and $\widehat{\rho^t_A}_*$ defined in \cref{eq:NewEst} and \cref{eq:NewEstlocal}, respectively. 
Such results match well with the theoretical prediction in \cref{eq:multi_pauli_var} when $t=2$.

\begin{figure}[h]
\centering
\includegraphics[width=0.75\linewidth]{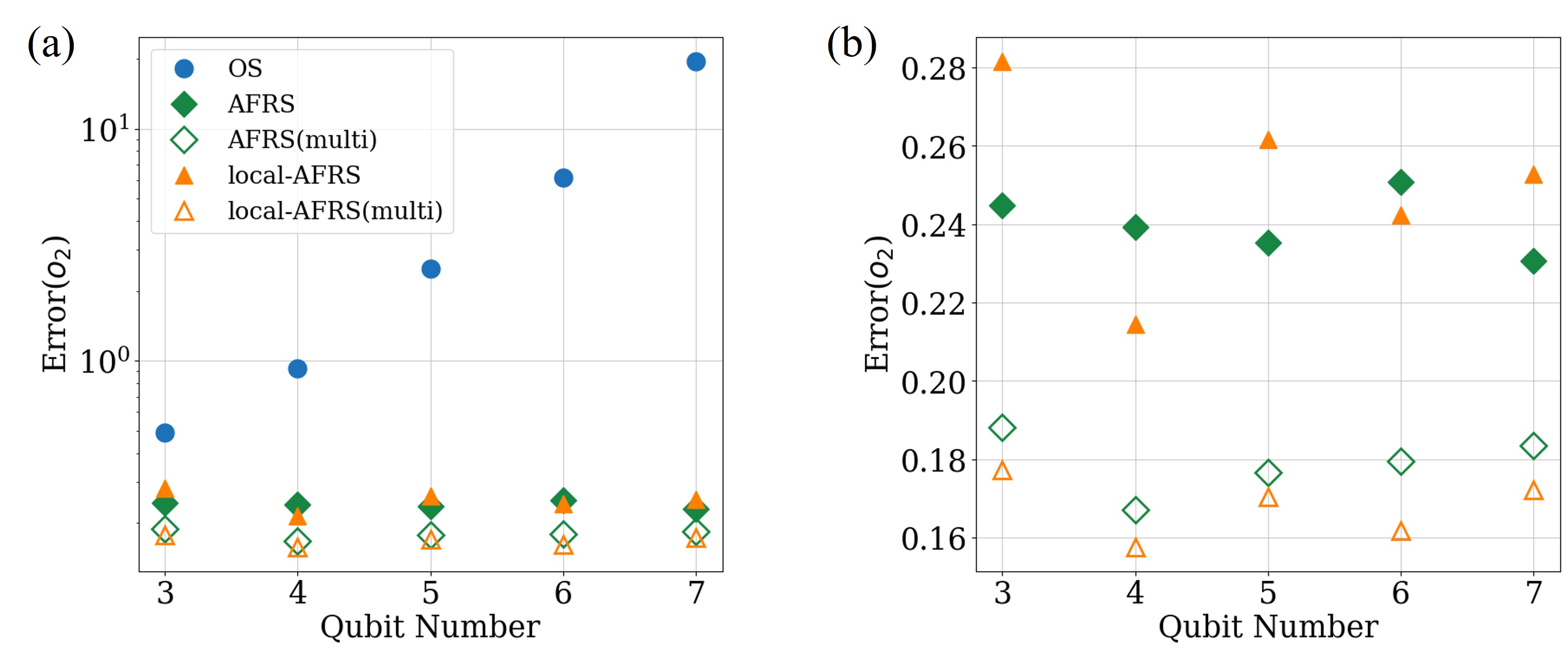}
\caption{\justifying{The estimation error of $o_2 = \tr(O\rho^2)$ with respect to the qubit number for OS, AFRS, local-AFRS and multi-shot-enhanced AFRS and local-AFRS protocols. The quantum state is an $n$-qubit GHZ state, and $V$ is sampled from the local Clifford ensemble $\mc{E}_{\text{LCl}}$ for estimating $\tr(X_1\rho^2)$ with $X_1$ being a local Pauli observable, which is short for $\sigma_{X}^1\otimes \identity_2^{n-1}$. We take the 
same sample number of $\rho$, say $N=100$, for (multi-)AFRS, (multi-)local-AFRS and OS protocols.
In (b), we zoom in the results for the (multi-)AFRS and (multi-)local-AFRS protocols in (a).
}}
\label{fig:multi}
\end{figure}

\section{Proof of 
Corollary 1 in the main text}\label{SM:CorollarylocalAFRS} 

To estimate a set of local observables $\mc{W}=\{O_l\}_{l=1}^L$, we divide the set as $\mc{W}=\cup_{k=1}^K \mc{W}_k$, where each $\mc{W}_k$ satisfies 
\begin{equation}
\begin{aligned}
\supp(O_l) \cap \supp(O_{l'})=\emptyset \quad \text{or} \quad \supp(O_l)\subseteq \supp(O_{l'}) \qquad \forall O_l,O_{l'}\in\mc{W}_k.
\end{aligned}
\end{equation}
For the subset $\mc{W}_k$, our local-AFRS protocol runs as follows. We first apply a random unitary $V\in \mc{E}_{\mathrm{LCl}}$ to each of the $t$ replicas $\rho$, and then use the entangling unitary $U_k=\bigotimes_j \mc{R}_{A_j}$ to rotate the joint state $(V\rho V^{\dag})^{\otimes t}$. Here, $A_j$ are disjoint subsystems that satisfy $\cup_j A_j=[n]$, and any $O_l\in\mc{W}_k$ is supported in one of $A_j$. 
Finally, we measure the joint state on the computational basis, and obtain measurement outcome $\mb{x}=\{\mb{x}^{A_j}\}_j$, where $\mb{x}^{A_j}$ denotes the outcome of the subsystem $A_j$. 

In classical post-processing, the unbiased estimator
of $\rho^t$ reduced on subsystem $A_i$ can be calculated as 
\begin{equation}\label{eq:Wkpost}
\widehat{\rho^t_{\bar{A_i}}}
=\mathrm{Re}\!\left[f\left(\left\{\mb{x}^{A_j}\right\}_j\right)\right]\mathcal{M}_{\mathrm{LCl}}^{-1}\left( 
V_{A_i}^{\dag} \ketbra{\mb{b}^{A_i}}{\mb{b}^{A_i}}V_{A_i}\right),
\qquad
f\left(\left\{\mb{x}^{A_j}\right\}_j\right)= \prod_j \left\<\psi_{\mb{x}^{A_j}}\!\left|S_t^{A_j}\right|\!\psi_{\mb{x}^{A_j}}\right\>,
\end{equation}
where $\mb{b}^{A_i}$ is the mapping result of $\mathcal{G}_{\mb{x}^{A_i}\to \mb{b}^{A_i}}$, and $\ket{\psi_{\mb{x}^{A_j}}}=\mc{R}^{\dag}_{A_j} \ket{\mb{x}^{A_j}}$ is the common eigenstate of $Q^{\mathrm{local}}(\mb{b}^{A_j})$ and $S_t^{A_j}$.  
Note that the form of the estimator here is slightly different from that shown in \cref{th:localAFRS}, since the function $f(\mb{x}^A,\mb{x}^{\bar A})$ is updated to $f\left(\left\{\mb{x}^{A_j}\right\}_j\right)$ considering the
partition of the unitary $U_k=\bigotimes_j \mc{R}_{A_j}$.
If the local observable $O_l\in\mc{W}_k$ is supported in $A_i$, then $\tr(O_l\rho^t)$ can be estimated by $\widehat{o_t^{(l)}}=\tr\!\left[O_l \, \widehat{\rho^t_{\bar{A_i}}}\right]$.
By using a similar argument presented in the proof of \cref{th:localAFRS}, one can show that the variance of this single-shot estimator is bounded as 
\begin{equation}
\begin{aligned}
\mathrm{Var}\left(\widehat{o_t^{(l)}}\right) \leq \left\|(O_l)_{0}\right\|_{\mathrm{sh},\mathcal{E}_\mathrm{LCl}}^2 + \|O_l\|_{\infty} ^2. 
\end{aligned}
\end{equation}

Therefore, one can use $K$ different joint unitaries $U_k$ in the experimental stage to estimate $\tr(O_l\rho^t)$ for all $O_l\in \mc{W}$. 
The performance of this estimation strategy is guaranteed by Corollary 1 in the main text.

\begin{proof}[Proof of Corollary 1]
To achieve the estimation performance stated in 
Corollary 1, our protocol is designed as follows. 
For each observable subset $\mc{W}_k$ with $k\in\{1,2,\dots,K\}$, we apply the unitary $U_k$ to perform 
local-AFRS, and use $M_k=N_k R_k$ repeating measurement time, where 
\begin{align}
N_k = \frac{34}{\epsilon^2}\max_{O_l\in W}\mathrm{Var}\left(\widehat{o^{(l)}_t}\right),\qquad
R_k = 2\log\left(\frac{2\, |\mc{W}_k|}{\delta_k'}\right) ,\qquad
\delta_k':= \frac{\delta |\mc{W}_k|}{L}, 
\end{align}
and $|\mc{W}_k|$ is the size of $\mc{W}_k$.
In post-processing, we use the median
of means technique \cite{huang2020predicting,jerrum1986random} to estimate observables in $\mc{W}_k$. That is, for each $O_l\in \mc{W}_k$, we construct $R_k$ independent sample means of size $N_k$, and then output their median $\hat{o}_l(R_k,N_k)$ as our estimator for $O_l$.  

According to the performance guarantee for the median of means method given in \cref{lem:MofMean} below, the parameters $R_k$ and $N_k$ are chosen such that $\operatorname{Pr}[|\hat{o}_l(R_k,N_k)-\tr(O_l\rho^t)| \geq \epsilon] \leq \delta'_k/|\mc{W}_k|$ for all $l=1,2,\dots,L$.  Consequently, the probability of successfully estimating all observables in $\mc{W}$ within error $\epsilon$ satisfies 
\begin{align}
\Pr({\rm succ})
\geq  
\prod_{k=1}^{K} \left( 1-\frac{\delta'_k}{|\mc{W}_k|} \right)^{|\mc{W}_k|}
\geq
\prod_{k=1}^{K} \left(1-\delta_k'\right) 
\geq 1-\sum_{k=1}^{K}\delta_k'
= 1-\frac{\delta}{L} \sum_{k=1}^{K} |\mc{W}_k|
=1-\delta, 
\end{align}
where the last equality holds because $\sum_{k=1}^{K} |\mc{W}_k|=L$.

Note that the total repeating measurement time used by our protocol reads 
\begin{align}
\sum_{k=1}^K M_k
=
\sum_{k=1}^K
\frac{68}{\epsilon^2}\max_{O_l\in \mc{W}}\mathrm{Var}\left(\widehat{o^{(l)}_t}\right)\log\left(\frac{2\, |\mc{W}_k|}{\delta_k'}\right)
=
\frac{68K}{\epsilon^2}\max_{O_l\in \mc{W}}\mathrm{Var}\left(\widehat{o^{(l)}_t}\right)\log\left(\frac{2L}{\delta}\right), 
\end{align} 
which equals $M$ given in 
Eq.~(7) of the main text. This completes the proof of 
Corollary 1. 
\end{proof}

\begin{lemma}[\cite{huang2020predicting,jerrum1986random}]\label{lem:MofMean}
Suppose $0<\epsilon<1$, $R$ is a positive integer, and $X$ is a random variable with variance $\mathrm{Var}(X)$. Then $R$ independent sample means of size $N=34\mathrm{Var}(X)/\epsilon^2$ is sufficient to construct a median of means estimator $\hat{\mu}(R,N)$ that satisfies $\operatorname{Pr}[|\hat{\mu}(R,N)-\mathbb{E}[X]| \geq \epsilon] \leq 2 \mathrm{e}^{-R/2}$. 
\end{lemma}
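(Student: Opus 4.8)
The plan is to establish this standard median-of-means guarantee by combining a second-moment (Chebyshev) bound on each individual block mean with a binomial-tail (Chernoff/Hoeffding) bound on the number of ``failed'' blocks. First I would fix the construction underlying the statement: partition the $RN$ i.i.d.\ copies of $X$ into $R$ disjoint blocks of size $N$, let $\hat{\mu}_j$ denote the empirical mean of block $j$, and set $\hat{\mu}(R,N):=\mathrm{median}(\hat{\mu}_1,\dots,\hat{\mu}_R)$. Each $\hat{\mu}_j$ is unbiased for $\mathbb{E}[X]$ with variance $\mathrm{Var}(X)/N$.

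The first step controls a single block. By Chebyshev's inequality,
\begin{equation}
\Pr\!\left[\,\bigl|\hat{\mu}_j-\mathbb{E}[X]\bigr|\geq \epsilon\,\right]
\leq \frac{\mathrm{Var}(X)}{N\epsilon^2}.
\end{equation}
Substituting $N=34\,\mathrm{Var}(X)/\epsilon^2$ bounds this per-block failure probability by $p_0:=1/34$, which is comfortably below $1/2$.

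The second step amplifies the confidence through the median. Let $Z_j\in\{0,1\}$ be the indicator that block $j$ fails, i.e.\ $Z_j=1$ iff $|\hat{\mu}_j-\mathbb{E}[X]|\geq\epsilon$; these are i.i.d.\ Bernoulli variables with mean at most $p_0$. The median estimator can deviate from $\mathbb{E}[X]$ by at least $\epsilon$ only if at least half of the blocks fail, so the target event is contained in $\{\sum_{j} Z_j\geq R/2\}$. I would then apply a Chernoff/Hoeffding tail bound to the binomial $\sum_j Z_j$: since the gap $1/2-p_0$ is bounded away from zero, the bound $\exp\!\bigl(-R\,\mathrm{KL}(1/2\,\|\,p_0)\bigr)$ (equivalently the Hoeffding form $\exp(-2R(1/2-p_0)^2)$) decays faster than $e^{-R/2}$, and absorbing the slack gives $\Pr[\,|\hat{\mu}(R,N)-\mathbb{E}[X]|\geq\epsilon\,]\leq 2\mathrm{e}^{-R/2}$.

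The only delicate point is constant-tracking: the value $34$ is chosen precisely so that the Chebyshev failure probability $p_0$ is small enough to force the Chernoff exponent above $1/2$, while the prefactor $2$ absorbs the slack in the binomial-tail estimate. In fact, with $p_0=1/34$ one has $\mathrm{KL}(1/2\,\|\,1/34)\gtrsim 1>1/2$, so the inequality holds with substantial room to spare and no sharp optimization of constants is needed. This mild bookkeeping is the main obstacle; the remaining pieces are the routine second-moment and binomial-tail estimates.
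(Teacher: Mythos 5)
The paper does not prove this lemma itself; it imports it verbatim from the cited references, and your Chebyshev-per-block plus binomial-tail-on-the-median argument is exactly the standard proof used there, with the correct construction and the correct per-block failure probability $p_0=1/34$. One caveat on your constant-tracking: the parenthetical claim that the KL bound is ``equivalently the Hoeffding form'' is not right here, and the Hoeffding form in fact does \emph{not} suffice, since $2(1/2-1/34)^2=2(8/17)^2\approx 0.443<1/2$, so $\exp(-2R(1/2-p_0)^2)$ decays strictly slower than $e^{-R/2}$. You must rely on the sharper Chernoff/KL form, which does work because $\mathrm{KL}(1/2\,\|\,1/34)=\tfrac12\ln(289/33)\approx 1.08>1/2$ (or, even more elementarily, $\Pr[\mathrm{Bin}(R,p_0)\geq R/2]\leq \binom{R}{\lceil R/2\rceil}p_0^{R/2}\leq (4p_0)^{R/2}=(2/17)^{R/2}\leq e^{-R/2}$); with either of these the stated bound follows with room to spare, and the prefactor $2$ is not even needed.
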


\section{Applications}
\red{In this section, we provide practical applications and numerical simulations for estimating nonlinear properties with our AFRS and local-AFRS framework.}

\red{\subsection{Numerical details for typical nonlinear properties $\tr(O\rho^t)$}}
In this part, we clarify the techniques and details of the numerical results given in 
Fig.~2 of the main text. The processed state is a noisy $n$-qubit GHZ state, $\rho = (1-p)\ket{\mathrm{GHZ}}\bra{\mathrm{GHZ}}+p\id_d/d$, with $p=0.3$ for all numerical results.
The estimators of $o_t = \tr(O\rho^t)$ for OS and AFRS protocols are
\begin{align}
    &\widehat{o_t}_\mathrm{OS}=\tr[(O\otimes\id_d^{t-1}) S_t \bigotimes_{i=1}^t \widehat{\rho}_{(i)} ],\label{eq:ot_os}\\
    &\widehat{o_t}_\mathrm{AFRS}=\tr(O\widehat{\rho^t}),\label{eq:ot_af}
\end{align}
by 
Theorem 1 in the main text, respectively. For local observable $O = O_A\otimes \id_{\bar{A}}$, the estimator for local-AFRS is 
\begin{equation}\label{eq:ot_loc_af}
    \widehat{o_t}_\mathrm{local-AFRS} = \tr_{A}\left( O_A \, \widehat{\rho^t_A} \right),
\end{equation}
by \cref{th:localAFRS}.
Note that, in all the comparisons among OS, AFRS and local-AFRS protocols in 
Fig.~2 and 
Fig.~5 for $t=2$, we take the experiment number or the shadow size for OS, say $M_{\text{OS}}=2\times M_{\text{AFRS}}=2\times M_{\text{local-AFRS}}$. In this way, the estimations using different protocols are compared under the same number of the state consumption, i.e. the sampling complexity $N=M_{\text{OS}}$.
Moreover, to obtain the estimation of the statistical error for measuring $o_t$, we repeat the whole procedure for $T=100$ times, and take the estimation error as $\mathrm{Error}(o_t) = \sqrt{\mathrm{Var}(\widehat{o_t})} \sim \left(T^{-1}\sum_{j\in[T]}( \widehat{o_t}_{(j)} - o_t )^2 \right)^{1/2}$, with $\widehat{o_t} = M^{-1}\sum_{i\in [M]} \tr(O\widehat{\rho^t_{(i)}})$.

Figure~2~(a) in the main text shows the error of estimating $\tr(O\rho^2)$, via OS, AFRS and local-AFRS using the estimators in Eqs.~\eqref{eq:ot_os}, \eqref{eq:ot_af} and \eqref{eq:ot_loc_af} respectively with $O =Z_1 Z_2= \sigma_Z^1 \otimes \sigma_Z^2 \otimes \id_2^{\otimes n-2}$ being the local observable on the first two qubits. Specifically, we take $V\in \mc{E}_{\mathrm{LCl}}$ and the sampling number of $\rho$ as $N=100$.
The numerical results show that the estimation error for the OS protocol increases exponentially with the qubit number $n$, that is, $\mathrm{Error}(o_2) \sim \mathcal{O}(d^{1.40 })$. While in AFRS and local-AFRS protocols, the error is independent of the qubit number.
The reason is that for AFRS and local-AFRS protocols, as $O$ and $O_A$ are 2-local Pauli observables, one has 
\begin{equation}
    \|O_{0(A)}\|_{\mathrm{sh},\mathrm{LCl}}^2 \leq 4^{\mathrm{locality}(O_{(A)})} \|O\|_\infty^2,
\end{equation}
which is independent of $n$.
However, for the OS protocol, as $o_t = \tr(O\rho^t) = \tr(O_t \rho^{\otimes t})$ with $O_t := (O\otimes\id_d^{t-1}) S_t$ being a highly global observable (i.e., $\mathrm{locality}(O_t)=\red{nt}$). Therefore, the shadow norm and thus the variance of OS protocol scale exponentially with $n$.

Figure~2~(b) in the main text shows the error of measuring the fidelity $F_2 = \bra{\mathrm{GHZ}}\rho^2\ket{\mathrm{GHZ}}$ with $O = \ket{\mathrm{GHZ}}\bra{\mathrm{GHZ}}$ for $V\in \mathcal{E}_{\mathrm{Cl}}$ and $N=20$ via OS and AFRS protocols. For AFRS with random Clifford measurements, one has 
\begin{equation}
    \|O_0\|_{\mathrm{sh},\mathrm{Cl}}^2  \leq 3\tr(O^2). 
\end{equation}
As for OS, one has
\begin{equation}\label{eq:osvar}
    \|O_t\|^2_{\mathrm{sh, Cl}} \sim d^{t-1}\tr(O^2),
\end{equation}
which implies that the estimation error would scale exponentially with the qubit number for $t=2$. Such exponential scaling advantages for AFRS and local-AFRS with local and global Clifford ensembles are indicated by the numerical results in 
Fig.~2 of the main text.

\red{\subsection{Virtual distillation task}}

\red{
Virtual distillation (VD) aims to purify a mixed state $\rho$ using its $t$-th order `virtual' state $\smash{\rho^{(t)}_\text{VD}=\rho^t/ \tr(\rho^t)}$. In general, the transformation $\rho \rightarrow \rho^{(t)}_\text{VD}$ is not a physical operation; however, one can alternatively construct the expectation value of some observable $O$, say $\langle O \rangle^{(t)}_\text{VD}=\text{tr}(O\rho^{(t)}_\text{VD})=\tr(O\rho^t)/ \tr(\rho^t)$, by quantum measurements and post-processings. VD can be applied to algorithmic cooling of many-body systems \cite{cotler2019cooling}, and to quantum error mitigation protocols by amplifying the dominant state vector \cite{Huggins2021Virtual,Koczor2021Exponential,Brien2022purification}.

We demonstrate the strengths of local-AFRS in such VD task.
Consider the Hamiltonian of the transverse-field Ising model given by $H = -J \sum_j Z_jZ_{j+1}-h \sum_j X_j $. We aim to estimate $\langle O \rangle^{(t)}_\text{VD}$ for all Pauli operators in the summation, that is, $O\in\{Z_jZ_{j+1},X_j\}$. Note that the moment function in the denominator $\tr(\rho^t)$ can be estimated by just taking $O=\id$, and thus the essential part is the numerator $o_t=\tr(O\rho^t)$ for various $O$. By Corollary 1 of the main text, we only need to choose $K=2$ different entangling unitary to accomplish the estimation. That is, 
after the local random Clifford evolution $V=\bigotimes_{i=1}^n V_i$ on each replica, one applies the following two entangling unitaries one at a time (with $n$ being odd for simplicity), $U_{\text{odd}} = \bigotimes_{j=1}^{\lfloor n/2 \rfloor}\mc{R}_{2j-1,2j} \otimes \mc{R}_n$ and $U_{\text{even}} = \mc{R}_1 \otimes\bigotimes_{j=1}^{{\lfloor n/2 \rfloor}}\mc{R}_{2j,2j+1}$,
to measure all $t$ replicas of $\rho$. 
For $\{Z_jZ_{j+1}\}$ with an odd/even $j$, one post-processes the measurement results from  $U_{\text{odd}/\text{even}}$, respectively. For $\{X_i\}$ and $\id$, both measurement patterns could be used, as they cover the supports of the observables. In this way, the sampling time is just $M=\mc{O}(\log(n))$ by Corollary 1. Note that for the case $t=2$, the total circuit depth is at most $3$.}

In 
Fig.~5~(a) of the main text, we manifest the advantages of local-AFRS for estimating various Pauli observables $O=\{Z_j Z_{j+1}, X_j, \identity\}$, compared to the OS protocol. As an advantage, the complexity of the measurement is significantly reduced using local-AFRS protocol with lower circuit depth and the same level of estimation accuracy compared to AFRS. Besides, the estimation errors of both protocols are lower than the one using OS protocol. 
The numerical results show consistency with 
Eqs.~(4) and 
(5) in the main text.

In 
Fig.~5~(b) of the main text, we compare the local-AFRS and OS protocols in the context of VD. Specifically, we use $M^{-1}\sum_{i\in [M]} \tr(O\widehat{\rho^t_{(i)}})$ and $M'^{-1}\sum_{j\in [M']} \tr(\widehat{\rho^t_{(j)}})$ to estimate the numerator and the denominator of $\langle O\rangle_{\mathrm{VD}}^{(t=2)}$, respectively. Note that the estimators of the numerator and the denominator are from independent shadow sets.
The processed state is set as a noisy $5$-qubit GHZ state with $p=0.3$.
One can calculate the theoretical result for $\langle O \rangle_{\mathrm{VD}}^{(t=2)}$ as $\tr(O\rho^2)/\tr(\rho^2)=0.994$.
The estimated values from local-AFRS and OS protocols are compared under the same sample complexity of $\rho$. And the numerical results manifest that the local-AFRS converges faster with the increase of the sampling number of $\rho$, and the result of estimating $\langle O\rangle_{\mathrm{VD}}^{(t=2)}$ is much closer to the theoretical one, compared to OS. The numerical results in 
Fig.~5 of the main text indicates the practicability and superiority of our protocols.


\subsection{\red{The estimator of the moment \texorpdfstring{$P_t$}{Pt} and its variance}}\label{SM:Th2part3}

\red{We now construct the unbiased estimator for the moment $P_t=\tr(\rho^t)$ within the AFRS framework.
In the special case where the subsystem $A$ of the local observable is empty ($A=\emptyset$),
then \cref{eq:unA-Ptr} reduces to} 
\begin{equation}\label{app:PtlocalF}
\begin{aligned}
\widehat{P_t}:=\widehat{\tr(\rho^t)}=\mathrm{Re}(f(\mb{x})).
\end{aligned}
\end{equation}
Note that such an estimator can also be obtained from 
Eq.~(3) of the main text where a global entangling $\mc{R}$ is used. 
However, as shown in the main text, here one can use local circuit $\bigotimes_{i=1}^n\mc{R}_{(i)}$ which entangles the $t$ replicas in a qubit-wise manner. In this way, the eigenbasis of $S_t$ is in the product form $\ket{\phi_{\mb{x}}}=\bigotimes_{i=1}^n\ket{\psi_{\mb{x}^{i}}}$. As a result, the function in \cref{app:PtlocalF} becomes $f(\mb{x})=\prod_{i=1}^n \bra{\psi_{\mb{x}^{i}}}S_t^i\ket{\psi_{\mb{x}^{i}}}=\prod_{i=1}^n f(\mb{x}^{i})$, which proves 
the estimator $\widehat{P_t}$.

For the variance of estimating $P_t$, it follows directly from the general proof of \cref{eq:varmLocal} in Supplementary Note~\ref{SM:Th2part2}. For the case $A=\emptyset$, one has $O=\id$ and $O_{A,0}=0$. Inserting them in the forth line of \cref{app:var:local}, one directly gets $\mathrm{Var}(\widehat{P_t})\leq1-\left[\tr(\rho^t)\right]^2$.

In \cref{fig:purity}, we numerically investigate the error of estimating the purity $P_2=\tr(\rho^2)$ using both the OS and local-AFRS protocols. Our numerical results indicate that the estimation error of the local-AFRS protocol diminishes as the purity of the processed quantum state increases. Furthermore, such error is notably lower compared with that of the OS protocol, particularly in scenarios where the purity is high.
Specifically, when $\rho$ is a pure state, the error of the local-AFRS protocol is zero, which is consistent with our theoretical analysis outlined above. Additionally, as shown in \cref{fig:purity} (b), we find that the estimation error of the local-AFRS protocol is independent of the qubit number $n$, whereas the error of the OS protocol scales exponentially with $n$.

\begin{figure}[htbp]
\centering
\includegraphics[width=0.7\linewidth]{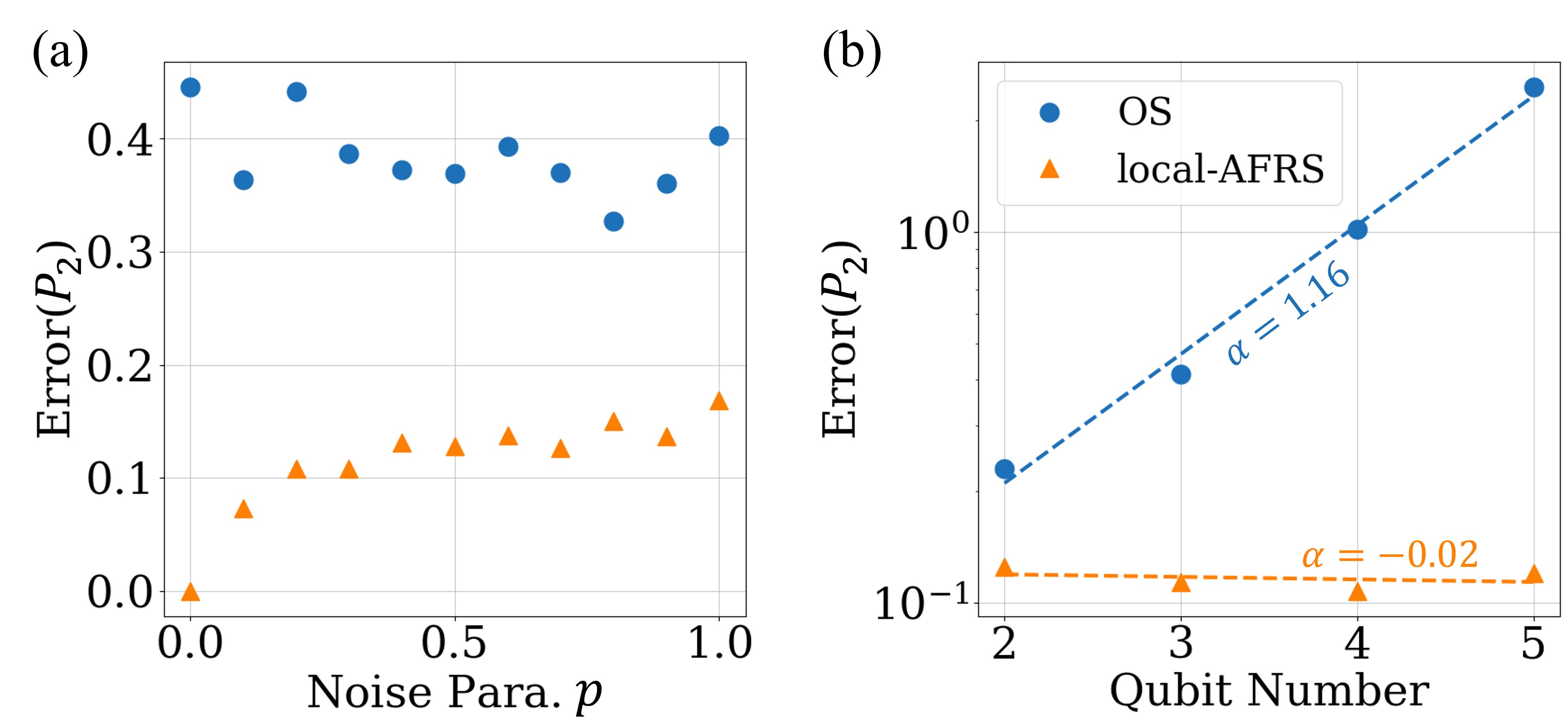}
\caption{\justifying{The estimation error of $P_2 = \tr(\rho^2)$ with respect to the noise parameter $p$ and the qubit number $n$ for OS and local-AFRS protocols. In (a), the quantum state is a noisy $3$-qubit GHZ state $\rho=(1-p)\ket{\mathrm{GHZ}}\bra{\mathrm{GHZ}}+p\identity_d/d$, in which $p$ represents the noise parameter. In (b), the quantum state is a noisy $n$-qubit state with $p=0.3$. The random unitary $V$ is sampled from the local Clifford ensemble $\mc{E}_{\mathrm{LCl}}$. We take the 
sample number $N=100$ for both local-AFRS and OS protocols.
}
}
\label{fig:purity}
\end{figure}


\subsection{\red{Estimation of the quantum Fisher information bound}}
\red{
The \emph{quantum Fisher information} (QFI) is a fundamental quantity that plays a central role in quantum metrology and many other fields, as it provides the ultimate precision limit for estimating an unknown parameter and serves as a witness for multipartite entanglement. The QFI is defined with respect to a given Hermitian operator $O$ and a quantum state $\rho$, and can be expressed as \cite{rath2021Fisher}
\begin{align}
F_Q = 2 \tr\!\left[
\frac{(\rho \otimes \id - \id \otimes \rho)^2}{\rho \otimes \id + \id \otimes \rho}
\mathcal{S}(O \otimes O)
\right]
\end{align}
where $\mathcal{S}$ is the swap operator. 
However, the direct evaluation of $F_Q $ is experimentally challenging. 
To overcome this, recent work \cite{rath2021Fisher} introduced a hierarchy of lower bounds to the QFI using polynomials of the density matrix: 
\begin{align}
F_m = 2 \tr\!\left[\,
\sum_{\ell=0}^m (\rho \otimes \id - \id \otimes \rho)^2 
(\id \otimes \id - \rho \otimes \id - \id \otimes \rho)^\ell 
\mathcal{S}(O \otimes O)
\right], \qquad m=0,1,2,\dots . 
\end{align}
It was proved that $F_m$ converges exponentially with $m$ to $F_Q$, and thus provides an accurate approximation of the QFI \cite{rath2021Fisher}. 
Based on OS, Ref.~\cite{rath2021Fisher} proposed a protocol for estimating $F_m$, but its sample complexity scales exponentially with the qubit number $n$, making the protocol resource-consuming. 
Whether $F_m$ can be estimated with polynomially many samples remains an open problem.

In this subsection, we show that $F_m$ can be estimated efficiently via a variant of our local-AFRS protocol, provided that $m$ is a constant and the observable $O$ admits a decomposition into $\mathcal{O}(\text{poly}(n))$  local observables, i.e., $O = \sum_i O_i$, where each $O_i$ has support on a constant number of qubits. 
Actually, most thermodynamic quantities in quantum many-body physics can be written as observables of this form, such as geometrically-local Hamiltonians. 
In this case, $F_m$ can be expressed as a summation of $\mathcal{O}(\text{poly}(n))$ terms of the form $\tr(O_i\rho^{t_1} O_j\rho^{t_2})$, with $t_1,t_2\geq 0$ being constant integers. 
Therefore, the task of estimating $F_m$ reduces to estimating each such term to sufficient accuracy.  
Denote by $A$ the smallest subsystem on which both $O_i$ and $O_j$ act nontrivially (i.e.,  $A=\text{supp}(O_i) \cap \text{supp}(O_j)$), 
and by $\bar{A}$ its complementary subsystem;  
then we have $O_i=O_{i,A}\otimes \id_{\bar{A}}$ and $O_j=O_{j,A}\otimes \id_{\bar{A}}$. 

\begin{figure}[htbp]
\centering
\includegraphics[width=0.5\linewidth]{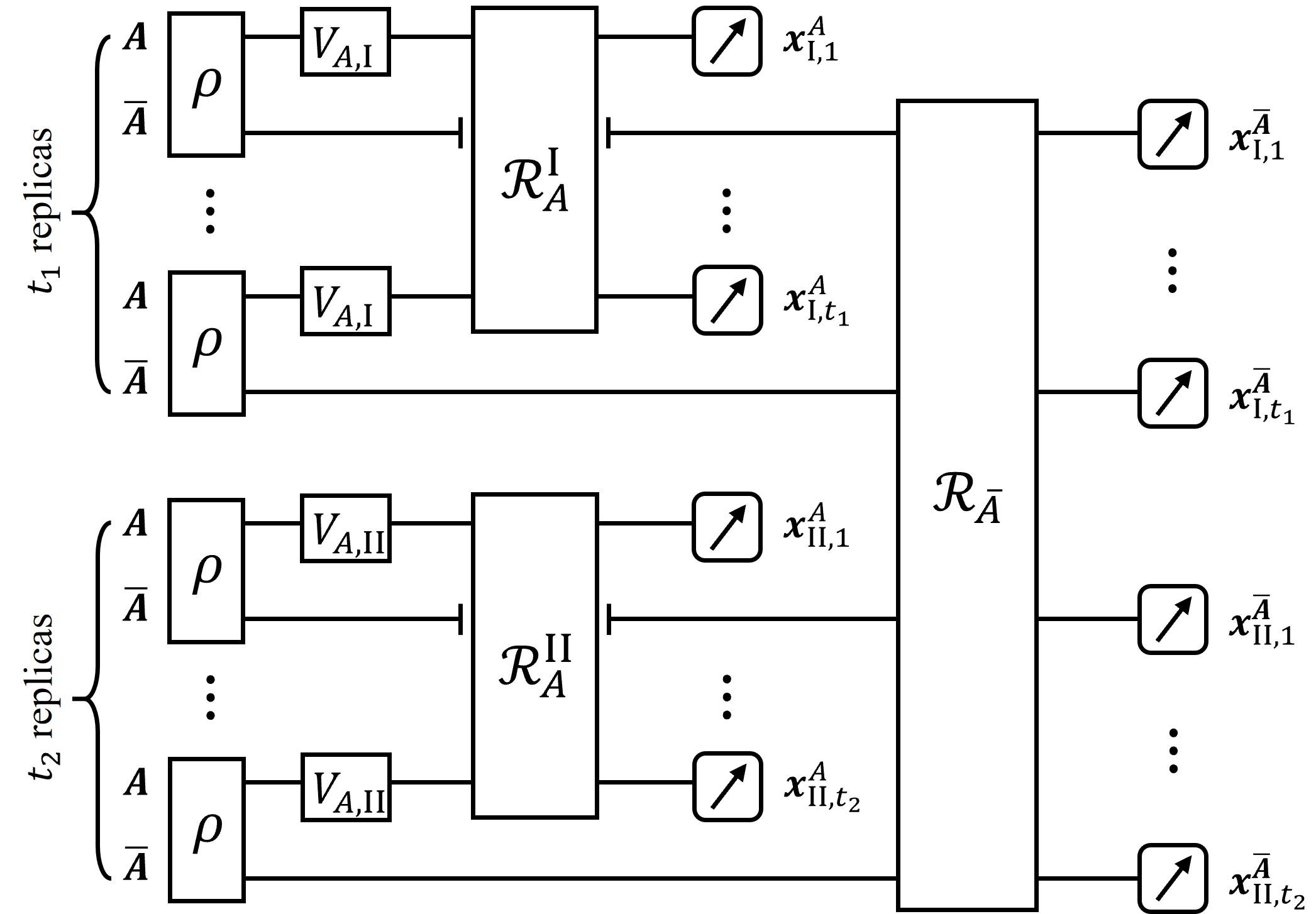}
\caption{\red{Variant local-AFRS circuit for estimating $\tr(O_i\rho^{t_1} O_j\rho^{t_2})$. Here, the joint operation $\mc{R}_{\bar{A}}$ can be replaced by a tensor product of qubit-wise operations $\mc{R}_{(i)}$, i.e., $\bigotimes_{i\in\bar{A}}\mc{R}_{(i)}$.
}}
\label{fig:QFIcircuit1}
\end{figure}

Our quantum circuit for estimating $\tr(O_i\rho^{t_1} O_j\rho^{t_2})$ is illustrated in Fig.~\ref{fig:QFIcircuit1}, which modifies the local-AFRS circuit introduced in the main text.  
First, we prepare $t_1$ identical copies of the quantum state $\rho$ and apply the same random unitary $V_{A,\rm{I}}\in\mathcal{E}^{(A)}$ to the subsystems $A$ of all $t_1$ copies. These $t_1$ subsystems are then entangled through the joint operation $\mathcal{R}_A^{\rm{I}}$, followed by a computational basis measurement yielding the outcome bit-string $\mb{x}^{A}_{\rm{I}}=\{\mb{x}^{A}_{\rm{I},1}\dots\mb{x}^{A}_{{\rm I},t_1}\}$.  
Next, we prepare an additional $t_2$ copies of $\rho$ and repeat the same procedure, now with the random unitary $V_{A,\rm{II}}$, joint operation $\mathcal{R}_A^{\rm{II}}$, and outcome bit-string $\mb{x}^{A}_{\rm{II}}$.  Finally, we apply the joint operation $\mathcal{R}_{\bar{A}}$ to the subsystems $\bar{A}$ of all $t_1+t_2$ copies, followed by a computational basis measurement that produces the bit-string $\mb{x}^{\bar{A}}$.  
Here, the joint operations should satisfy 
\begin{equation}\label{eq:RARAbar}
\mathcal{R}_A^{\rm{I}}\ket{\psi_{\mb{x}^{A}_{\rm{I}}}}=\ket{\mb{x}^{A}_{\rm{I}}},\qquad   \mathcal{R}_A^{\rm{II}}\ket{\psi_{\mb{x}^{A}_{\rm{II}}}}=\ket{\mb{x}^{A}_{\rm{II}}}, 
\qquad \mathcal{R}_{\bar{A}}\ket{\psi_{\mb{x}^{\bar{A}}}}=\ket{\mb{x}^{\bar{A}}},
\end{equation}
where $\{|\psi_{\mb{x}^{A}_{\rm{I}}}\>\}$ \big(respectively, $\{|\psi_{\mb{x}^{A}_{\rm{II}}}\>\}$\big) is a common eigenbasis of 
\begin{equation}
Q_{\mb{b}^{A}}^{\rm{I}}=t_1^{-1}\sum_{i} \ketbra{\mb{b}^{A}}_i\otimes \id^{\otimes (t_1-1) }_{A} 
\quad 
\left(\text{respectively,}\  Q_{\mb{b}^{A}}^{\rm{II}}=t_2^{-1}\sum_{i} \ketbra{\mb{b}^{A}}_i\otimes \id^{\otimes (t_2-1) }_{A} \right)
\end{equation}
and the cyclic permutation operator $S_{t_1}^A$ (respectively, $S_{t_2}^A$) restricted on subsystems $A$; and
$\{|\psi_{\mb{x}^{\bar{A}}}\>\}$ is an eigenbasis of $S_{t_1+t_2}^{\bar{A}}$ restricted on $\bar{A}$. 
Due to this property, analogously to the local-AFRS protocol, $\mathcal{R}_{\bar{A}}$ in Fig.~\ref{fig:QFIcircuit1} can be replaced by a tensor product of qubit-wise operations $\mc{R}_{(i)}$.  
When $|A|$ is a constant, the operations $\mathcal{R}_A^{\rm{I}}$, $\mathcal{R}_A^{\rm{II}}$, and $\mc{R}_{(i)}$ can all be realized with constant-depth quantum circuits. Consequently, \textbf{the entire circuit in Fig.~\ref{fig:QFIcircuit1} has only constant depth}, making it convenient to implement on real-world quantum devices.


In the classical post-processing stage, we map $\mb{x}^{A}_{\rm{I}}$ to a string $\mb{b}^{A}_{\rm{I}}$ with probability $\Pr(\mb{b}^{A}_{\rm{I}}\,\big|\,\mb{x}^{A}_{\rm{I}})= \Big\<\psi_{\mb{x}^{A}_{\rm{I}}}\Big|\,Q_{\mb{b}_{\rm{I}}^{A}}^{\rm{I}}\Big|\psi_{\mb{x}^{A}_{\rm{I}}}\Big\>$, and map $\mb{x}^{A}_{\rm{II}}$ to a string $\mb{b}^{A}_{\rm{II}}$ with probability $\Pr(\mb{b}^{A}_{\rm{II}}\,\big|\,\mb{x}^{A}_{\rm{II}})= \Big\<\psi_{\mb{x}^{A}_{\rm{II}}}\Big|\,Q_{\mb{b}_{\rm{II}}^{A}}^{\rm{II}}\,\Big|\psi_{\mb{x}^{A}_{\rm{II}}}\Big\>$. 
Then we use $\mb{x}^{A}_{\rm{I}}$, $\mb{x}^{A}_{\rm{II}}$, $\mb{x}^{\bar{A}}$, $\mb{b}^{A}_{\rm{I}}$, $\mb{b}^{A}_{\rm{II}}$, $V_{A,\rm{I}}$, and $V_{A,\rm{II}}$ to construct a snapshot as follows, 
\begin{align}
\hat{Y}:= f\!\left(\mb{x}_{\,\rm{I}}^{A}\right)  f\!\left(\mb{x}_{\,\rm{II}}^{A}\right) f\!\left(\mb{x}^{\bar{A}}\right) \cdot \mathcal{M}_A^{-1}\!\left( V_{A,\rm{I}}^{\dag} \ketbra{\mb{b}_{\rm{I}}^A}{\mb{b}_{\rm{I}}^A} V_{A,\rm{I}} \right) \otimes \mathcal{M}_A^{-1}\!\left( V_{A,\rm{II}}^{\dag} \ketbra{\mb{b}_{\rm{II}}^A}{\mb{b}_{\rm{II}}^A} V_{A,\rm{II}} \right) ,
\end{align}
where 
\begin{align}\label{eq:f(xA)f(xAbar)}
f\!\left(\mb{x}_{\,\rm{I}}^{A}\right)=\bra{\psi_{\mb{x}^{A}_{\rm{I}}}} S_{t_1}^A\ket{\psi_{\mb{x}^{A}_{\rm{I}}}},
\qquad 
f\!\left(\mb{x}_{\,\rm{II}}^{A}\right)=\bra{\psi_{\mb{x}^{A}_{\rm{II}}}} S_{t_2}^A\ket{\psi_{\mb{x}^{A}_{\rm{II}}}},
\qquad 
f\!\left(\mb{x}^{\bar{A}}\right)=\left\<\psi_{\mb{x}^{\bar{A}}}\left|S_{t_1+t_2}^{\bar{A}}\right|\psi_{\mb{x}^{\bar{A}}}\right\>. 
\end{align}

\begin{lemma}\label{lem:ExpectY}
The expectation of $\hat{Y}$ (over the choices of random unitaries $V_{A,\rm{I}}$ and $V_{A,\rm{II}}$, measurement outcomes $\mb{x}^{A}$ and $\mb{x}^{\bar{A}}$, and random mapping results $\mb{b}^{A}_{\rm{I}}$ and $\mb{b}^{A}_{\rm{II}}$) is 
\begin{align}\label{eq:defUpsilon}
\mathbb{E}\big[\hat{Y}\big]=\Upsilon
:=\tr_{\bar{A}}\!\left[ \left(\rho^{t_1}\otimes \rho^{t_2}\right)\left(\id_A \otimes \id_A \otimes \mathcal{S}_{\bar{A}}\right)
\right]. 
\end{align}
\end{lemma}

We define the estimator $\widehat{o^{i,j}_{t_1,t_2}}:= \tr[\hat{Y}(O_{j,A}\otimes O_{i,A}) \mathcal{S}_A]$, whose expectation value, by Lemma~\ref{lem:ExpectY}, satisfies 
\begin{align}
\mathbb{E}\!\left[\widehat{o^{i,j}_{t_1,t_2}}\right]
=\tr\!\left\{ \tr_{\bar{A}}\!\left[ \left(\rho^{t_1}\otimes \rho^{t_2}\right)\left(\id_A \otimes \id_A \otimes \mathcal{S}_{\bar{A}}\right)
\right] (O_{j,A}\otimes O_{i,A}) \mathcal{S}_A \right\}
=\tr(O_i\rho^{t_1} O_j\rho^{t_2}),  
\end{align}
confirming it as an unbiased estimator for $\tr(O_i\rho^{t_1} O_j\rho^{t_2})$. The variance of $\widehat{o^{i,j}_{t_1,t_2}}$ remains small because $\hat{Y}$ acts only on a constant number of qubits, enabling our circuit in Fig.~\ref{fig:QFIcircuit1} to estimate $\tr(O_i\rho^{t_1} O_j\rho^{t_2})$ efficiently with sufficient accuracy. According to our previous discussions, these results imply that our variant local-AFRS protocol can estimate the lower bound $F_m$ using 
very few state samples, demonstrating an exponential improvement in sample efficiency over existing approaches.

\begin{proof}[Proof of Lemma~\ref{lem:ExpectY}]
We have 
\begin{align}\label{eq:EhatYproof1}
\mathbb{E}\big[\hat{Y}\big]
&=\underset{V_{A,\rm{I}},V_{A,\rm{II}}}{\mathbb{E}}
\sum_{\mb{b}^{A}_{\rm{I}},\mb{b}^{A}_{\rm{II}}}\sum_{\mb{x}^{A},\mb{x}^{\bar{A}}} 
\Pr(\left.\mb{x}^{A},\mb{x}^{\bar{A}} \,\right|V_{A,\rm{I}},V_{A,\rm{II}})
\Pr(\mb{b}^{A}_{\rm{I}} \,\big|\,\mb{x}^{A}_{\rm{I}})
\Pr(\mb{b}^{A}_{\rm{II}}\,\big|\,\mb{x}^{A}_{\rm{II}})
\cdot \hat{Y}
\nonumber\\
&=\underset{V_{A,\rm{I}},V_{A,\rm{II}}}{\mathbb{E}}
\sum_{\mb{b}^{A}_{\rm{I}},\mb{b}^{A}_{\rm{II}}}
\mathcal{M}_A^{-1}\!\left( V_{A,\rm{I}}^{\dag} \ketbra{\mb{b}_{\rm{I}}^A}{\mb{b}_{\rm{I}}^A} V_{A,\rm{I}} \right) \otimes \mathcal{M}_A^{-1}\!\left( V_{A,\rm{II}}^{\dag} \ketbra{\mb{b}_{\rm{II}}^A}{\mb{b}_{\rm{II}}^A} V_{A,\rm{II}} \right) 
\nonumber\\
&\qquad\qquad\quad 
\times \underbrace{\left[\sum_{\mb{x}^{A},\mb{x}^{\bar{A}}} 
\Pr(\left.\mb{x}^{A},\mb{x}^{\bar{A}} \,\right|V_{A,\rm{I}},V_{A,\rm{II}})
\Pr(\mb{b}^{A}_{\rm{I}} \,\big|\,\mb{x}^{A}_{\rm{I}})
\Pr(\mb{b}^{A}_{\rm{II}}\,\big|\,\mb{x}^{A}_{\rm{II}}) f\!\left(\mb{x}_{\,\rm{I}}^{A}\right)  f\!\left(\mb{x}_{\,\rm{II}}^{A}\right) f\!\left(\mb{x}^{\bar{A}}\right) \right]}_{(*)}. 
\end{align}
Note that 
\begin{align}\label{eq:EhatYproof2}
\Pr(\!\left.\mb{x}^{A},\mb{x}^{\bar{A}} \right|\!V_{A,\rm{I}},V_{A,\rm{II}})
&=
\bra{\mb{x}_{\rm{I}}^{A},\mb{x}_{\rm{II}}^{A},\mb{x}^{\bar{A}}} 
\!\left(\mathcal{R}_A^{\rm{I}}\otimes \mathcal{R}_A^{\rm{II}}\otimes \mathcal{R}_{\bar{A}}\right)\!
\left[ V_{A,\rm{I}}(\rho)^{\otimes t_1} \otimes V_{A,\rm{II}}(\rho)^{\otimes t_2}\right]\!
\left(\mathcal{R}_A^{\rm{I}}\otimes \mathcal{R}_A^{\rm{II}}\otimes \mathcal{R}_{\bar{A}}\right)^{\dag}\!
\ket{\mb{x}_{\rm{I}}^{A},\mb{x}_{\rm{II}}^{A},\mb{x}^{\bar{A}}} 
\nonumber\\
&=
\bra{\psi_{\mb{x}^{A}_{\rm{I}}},\psi_{\mb{x}^{A}_{\rm{II}}},\psi_{\mb{x}^{\bar{A}}}} 
V_{A,\rm{I}}(\rho)^{\otimes t_1} \otimes V_{A,\rm{II}}(\rho)^{\otimes t_2}
\ket{\psi_{\mb{x}^{A}_{\rm{I}}},\psi_{\mb{x}^{A}_{\rm{II}}},\psi_{\mb{x}^{\bar{A}}}}, 
\end{align}
where the second equality follows from \cref{eq:RARAbar}, and we write $|\psi_{\mb{x}^{A}_{\rm{I}}},\psi_{\mb{x}^{A}_{\rm{II}}},\psi_{\mb{x}^{\bar{A}}}\>=|\psi_{\mb{x}^{A}_{\rm{I}}}\>|\psi_{\mb{x}^{A}_{\rm{II}}}\>|\psi_{\mb{x}^{\bar{A}}}\>$ for simplicity. 
Thus, the term $(\cdot)$ in \cref{eq:EhatYproof1} can be expressed as 
\begin{align}\label{eq:EhatYproof3}
(*)&\stackrel{(a)}{=} \sum_{\mb{x}^{A}_{\rm{I}},\mb{x}^{A}_{\rm{II}},\mb{x}^{\bar{A}}} 
\bra{\psi_{\mb{x}^{A}_{\rm{I}}},\psi_{\mb{x}^{A}_{\rm{II}}},\psi_{\mb{x}^{\bar{A}}}} 
V_{A,\rm{I}}(\rho)^{\otimes t_1} \otimes V_{A,\rm{II}}(\rho)^{\otimes t_2}
\ket{\psi_{\mb{x}^{A}_{\rm{I}}},\psi_{\mb{x}^{A}_{\rm{II}}},\psi_{\mb{x}^{\bar{A}}}} 
\nonumber\\
&\qquad\ \   \times \bra{\psi_{\mb{x}^{A}_{\rm{I}}},\psi_{\mb{x}^{A}_{\rm{II}}},\psi_{\mb{x}^{\bar{A}}}} 
S_{t_1}^A\otimes S_{t_2}^A\otimes S_{t_1+t_2}^{\bar{A}}
\ket{\psi_{\mb{x}^{A}_{\rm{I}}},\psi_{\mb{x}^{A}_{\rm{II}}},\psi_{\mb{x}^{\bar{A}}}}
\bra{\psi_{\mb{x}^{A}_{\rm{I}}},\psi_{\mb{x}^{A}_{\rm{II}}},\psi_{\mb{x}^{\bar{A}}}} 
Q_{\mb{b}_{\rm{I}}^{A}}^{\rm{I}} \otimes Q_{\mb{b}_{\rm{II}}^{A}}^{\rm{II}} \otimes \id_{\bar{A}}
\ket{\psi_{\mb{x}^{A}_{\rm{I}}},\psi_{\mb{x}^{A}_{\rm{II}}},\psi_{\mb{x}^{\bar{A}}}}
\nonumber\\[1.2ex]
&\stackrel{(b)}{=}\tr\!\left\{ \left[V_{A,\rm{I}}(\rho)^{\otimes t_1} \otimes V_{A,\rm{II}}(\rho)^{\otimes t_2}
\right] \left( S_{t_1}^A Q_{\mb{b}_{\rm{I}}^{A}}^{\rm{I}} \otimes S_{t_2}^A Q_{\mb{b}_{\rm{II}}^{A}}^{\rm{II}} \otimes S_{t_1+t_2}^{\bar{A}} \right)\right\}
\nonumber\\[0.8ex]
&\stackrel{(c)}{=} \tr\!\left\{ \left(\rho^{t_1}\otimes \rho^{t_2}\right)\left[\left( V_{A,\rm{I}}^{\dag} \ketbra{\mb{b}_{\rm{I}}^A}{\mb{b}_{\rm{I}}^A} V_{A,\rm{I}} \right)\otimes \left( V_{A,\rm{II}}^{\dag} \ketbra{\mb{b}_{\rm{II}}^A}{\mb{b}_{\rm{II}}^A} V_{A,\rm{II}} \right)\otimes \mathcal{S}_{\bar{A}}\right]
\right\}
\nonumber\\[0.8ex]
&\stackrel{(d)}{=} \tr\!\left[\Upsilon \left( V_{A,\rm{I}}^{\dag} \ketbra{\mb{b}_{\rm{I}}^A}{\mb{b}_{\rm{I}}^A} V_{A,\rm{I}} \otimes  V_{A,\rm{II}}^{\dag} \ketbra{\mb{b}_{\rm{II}}^A}{\mb{b}_{\rm{II}}^A} V_{A,\rm{II}} \right)
\right]. 
\end{align}
Here, equality $(a)$ follows from Eqs.~\eqref{eq:f(xA)f(xAbar)} and \eqref{eq:EhatYproof2}, along with the relations $\Pr(\mb{b}^{A}_{\rm{I}}\,\big|\,\mb{x}^{A}_{\rm{I}})= \big\<\psi_{\mb{x}^{A}_{\rm{I}}}\big|\,Q_{\mb{b}_{\rm{I}}^{A}}^{\rm{I}}\big|\psi_{\mb{x}^{A}_{\rm{I}}}\big\>$ and   $\Pr(\mb{b}^{A}_{\rm{II}}\,\big|\,\mb{x}^{A}_{\rm{II}})= \big\<\psi_{\mb{x}^{A}_{\rm{II}}}\big|\,Q_{\mb{b}_{\rm{II}}^{A}}^{\rm{II}}\,\big|\psi_{\mb{x}^{A}_{\rm{II}}}\big\>$; 
equality $(b)$ holds because $\{|\psi_{\mb{x}^{\bar{A}}}\>\}$ is an eigenbasis of $S_{t_1+t_2}^{\bar{A}}$, while $\{|\psi_{\mb{x}^{A}_{\rm{I}}}\>\}$ \big($\{|\psi_{\mb{x}^{A}_{\rm{II}}}\>\}$\big) is a common eigenbasis of 
$Q_{\mb{b}^{A}}^{\rm{I}}$ ($Q_{\mb{b}^{A}}^{\rm{II}}$) and 
$S_{t_1}^A$ ($S_{t_2}^A$); 
equality $(c)$ can be derived analogously to \cref{eq:obs2_proof} using tensor network diagrams; and
equality $(d)$ follows from the definition of $\Upsilon$ in \cref{eq:defUpsilon}.

By combining Eqs.~\eqref{eq:EhatYproof1} and \eqref{eq:EhatYproof3}, we have 
\begin{align}
\mathbb{E}\big[\hat{Y}\big]
&=\underset{V_{A,\rm{I}},V_{A,\rm{II}}}{\mathbb{E}}
\sum_{\mb{b}^{A}_{\rm{I}},\mb{b}^{A}_{\rm{II}}}
\mathcal{M}_A^{-1}\!\left( V_{A,\rm{I}}^{\dag} \ketbra{\mb{b}_{\rm{I}}^A}{\mb{b}_{\rm{I}}^A} V_{A,\rm{I}} \right) \otimes \mathcal{M}_A^{-1}\!\left( V_{A,\rm{II}}^{\dag} \ketbra{\mb{b}_{\rm{II}}^A}{\mb{b}_{\rm{II}}^A} V_{A,\rm{II}} \right) 
\nonumber\\[-1ex]
&\qquad\qquad\qquad\qquad 
\times \tr\!\left[\Upsilon \left( V_{A,\rm{I}}^{\dag} \ketbra{\mb{b}_{\rm{I}}^A}{\mb{b}_{\rm{I}}^A} V_{A,\rm{I}} \otimes  V_{A,\rm{II}}^{\dag} \ketbra{\mb{b}_{\rm{II}}^A}{\mb{b}_{\rm{II}}^A} V_{A,\rm{II}} \right)
\right]
\nonumber\\[1ex]
&= \mathcal{M}_A^{-1}\otimes \mathcal{M}_A^{-1} \Bigg\{
\underset{V_{A,\rm{I}},V_{A,\rm{II}}}{\mathbb{E}}
\sum_{\mb{b}^{A}_{\rm{I}},\mb{b}^{A}_{\rm{II}}}
\tr\!\left[\Upsilon \left( V_{A,\rm{I}}^{\dag} \ketbra{\mb{b}_{\rm{I}}^A}{\mb{b}_{\rm{I}}^A} V_{A,\rm{I}} \otimes  V_{A,\rm{II}}^{\dag} \ketbra{\mb{b}_{\rm{II}}^A}{\mb{b}_{\rm{II}}^A} V_{A,\rm{II}} \right)
\right]
\nonumber\\[-1ex]
&\qquad\qquad\qquad\qquad\qquad\qquad\qquad
\times\left( V_{A,\rm{I}}^{\dag} \ketbra{\mb{b}_{\rm{I}}^A}{\mb{b}_{\rm{I}}^A} V_{A,\rm{I}} \otimes V_{A,\rm{II}}^{\dag} \ketbra{\mb{b}_{\rm{II}}^A}{\mb{b}_{\rm{II}}^A} V_{A,\rm{II}} \right) 
\Bigg\}
\nonumber\\
&= \left(\mathcal{M}_A^{-1}\otimes \mathcal{M}_A^{-1} \right) \Big[\!\left(\mathcal{M}_A \otimes \mathcal{M}_A\right) (\Upsilon) \Big]
=\Upsilon,
\end{align}
where we have used the definition of the measurement channel [see also \cref{eq:channel_def}]: 
\begin{align}
\mathcal{M}_A(\cdot):=
\underset{V_A}{\mathbb{E}}  \sum_{\mb{b}^A}
\tr[(\cdot)\left( V_A^{\dag} \ketbra{\mb{b}^A}{\mb{b}^A} V_A\right)]\left( V_A^{\dag} \ketbra{\mb{b}^A}{\mb{b}^A} V_A\right). 
\end{align}
This completes the proof of Lemma~\ref{lem:ExpectY}. 
\end{proof}

To further demonstrate the power of our variant local-AFRS protocol for QFI, we performed numerical simulations to estimate $\mathcal{F}_0$ for depolarized GHZ states $\rho = (1-p)\ket{\mathrm{GHZ}_n}\bra{\mathrm{GHZ}_n} + p\,\mathbb{I}_d/d$, with depolarization parameter $p=0.3$. The observable $O$ is set to be the Pauli-$Z$ operator acting on the first qubit, i.e., $O=Z_1$. Although these choices are simple, it suffices to reveal the core scaling advantage of the local-AFRS protocol over the OS protocol.  
Fig.~\ref{fig:fisher_renyi}~(a) shows the empirical estimation error of $\mathcal{F}_0$ versus qubit number $n$. For identical sampling budgets, our (variant) local-AFRS achieves a constant estimation error with respect to the qubit number, 
demonstrating an exponential scaling advantage over OS as $n$ increases.
This observation agrees with our variance analysis: while the OS protocol's variance scales exponentially with $n$, local-AFRS's variance depends only on the locality of the local observables of interest.  
These results demonstrate that (variant) local-AFRS offers a practical and scalable approach to estimating metrological quantities.
}

\begin{figure}[htbp]
\centering
\includegraphics[width=0.75\linewidth]{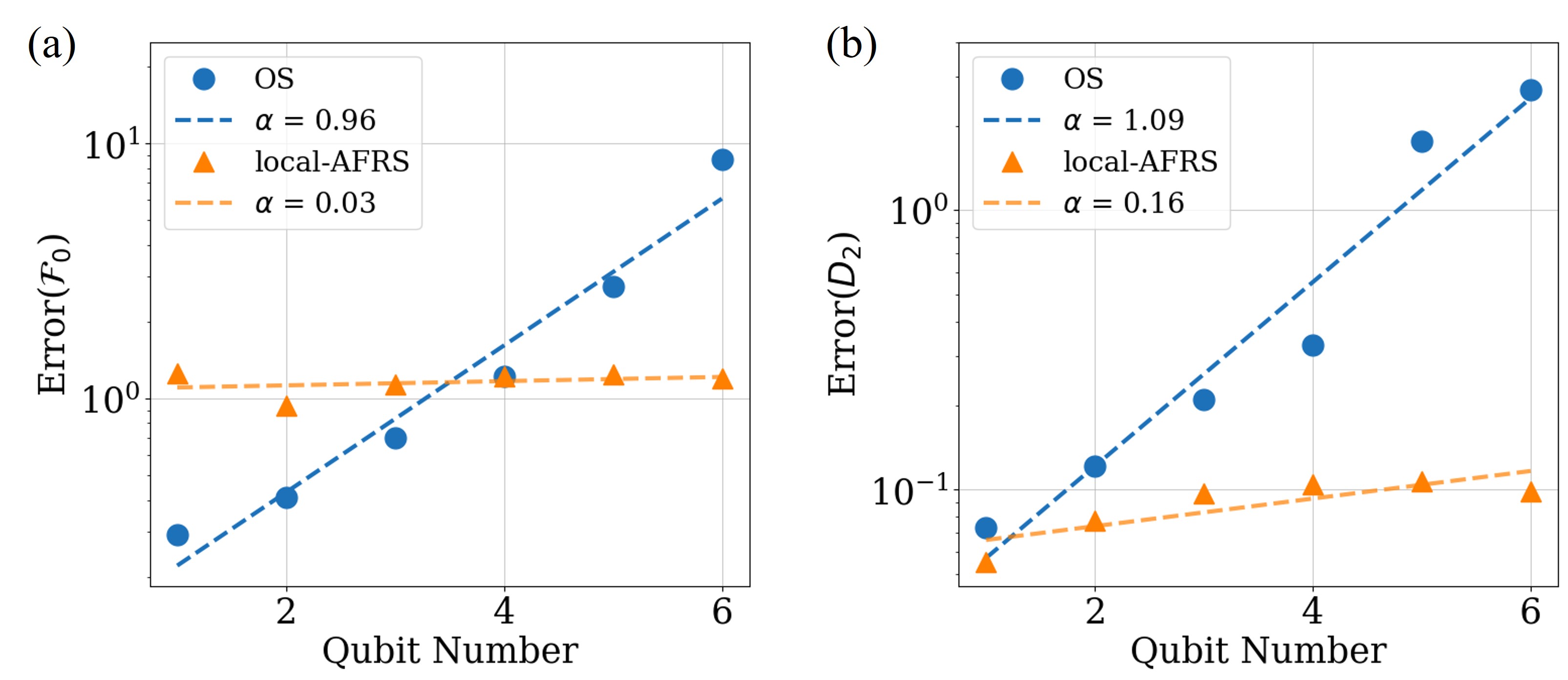}
\caption{\justifying{\red{The estimation error with respect to the system size $n$ for (a) the Fisher information bound $\mathcal{F}_0$ and (b) the R\'{e}nyi relative entropy $D_2(\rho\|\identity_d/d)$. In both cases, the local-AFRS protocol achieves an exponential advantage over OS.}
}}
\label{fig:fisher_renyi}

\end{figure}

\subsection{\red{Estimation of entropic quantities}}
\red{In this subsection, we illustrate a concrete, practically relevant application of AFRS by estimating the Petz-R\'{e}nyi relative entropy, $D_\alpha(\rho\|\sigma) := \frac{1}{\alpha-1}\log_2\tr(\rho^\alpha\sigma^{1-\alpha})$, at order $\alpha=2$ we have 
\begin{equation}
    D_2(\rho\|\sigma)=\log_2\tr(\rho^2\sigma^{-1}).
\end{equation}
We treat the important special case $\sigma=\identity_d/d$ (the maximally mixed state), for which the relative entropy reduces to a simple function of the purity estimation and is frequently used for quantum mixedness testing \cite{chen2022tight} and quantum error correction \cite{cree2022approximate}. In this case,
\begin{equation}
    D_2\left(\rho\left\| \frac{\identity_d}{d}\right.\right)=\log_2\left(d\tr(\rho^2)\right)=n+\log_2P_2,\  P_2:=\tr(\rho^2),
\end{equation}
with $d=2^n$. Hence estimating $D_2(\rho\|\identity_d/d)$ is equivalent to estimating the purity $P_2$, which is directly accessible with the $t=2$ AFRS estimator introduced in the main text.
Below, we describe the estimator, error propagation, 
and the numerical simulation details.

Let $\widehat{P_2}$ denote the unbiased AFRS estimator of $P_2$, which is constructed as in \cref{app:PtlocalF}. 
We estimate the elative entropy by $\widehat{D_2}=n+\log_2\widehat{P_2}$. For small additive estimation error $\delta P_2=\widehat{P_2}-P_2$, a first-order expansion gives
\begin{equation}
    \widehat{D_2}-D_2 \approx \frac{1}{\ln 2}\frac{\delta P_2}{P_2}. 
\end{equation}
Thus, to estimate $D_2$ within error $\epsilon$, it suffices to estimate $P_2$ within error $\epsilon'\approx (\ln 2)\epsilon P_2$. According to our previous discussions, this accuracy  can be efficiently achieved provided that $P_2$ is not too small. 


In \cref{fig:fisher_renyi}~(b), we numerically simulate the error of estimating $D_2$ using the OS and local-AFRS protocol respectively. 
We instantiate the above discussion on the depolarized $n$-qubit GHZ states, $\rho=(1-p)\ket{\mathrm{GHZ}_n}\bra{\mathrm{GHZ}_n}+p\identity_d/d$, with $p=0.3$ for the numerical simulation demonstrated here.
The analytical purity of $\rho$ together with the R\'{e}nyi entropy $D_2(\rho\|\identity_d/d)$ can be therefore calculated and used as ground truth to compute the estimation error. 
For a fair resource comparison, both methods are given the same sampling number of the quantum state $N=1000$. The numerical data confirm the theoretical expectations. The empirical mean-squared error of estimating $D_2$ using OS grows rapidly with the qubit number $n$, which is consistent with \cref{eq:osvar}. In contrast, the estimation error of AFRS remains nearly flat (or grows only mildly) in the tested range; consequently, AFRS attains a given precision with far fewer measurements than OS, and the relative advantage widens rapidly with $n$. Note that increasing the shot budget reduces the error for both methods, with AFRS reaching target errors using substantially fewer shots.
}

\red{Beyond these illustrative example, we note that efficient estimation of nonlinear quantities is also crucial in a wide range of tasks, including protocols for quantum virtual cooling and the detection of mixed-state quantum phases \cite{du2025optimalrandomizedmeasurementsfamily}. Taken together, the analyses and simulations presented here demonstrate that (local-)AFRS provides a powerful and versatile framework for estimation nonlinear properties for quantum states, with broad potential impact for both near-term experiments and future applications in quantum information science.}

\end{appendix}
\end{document}